\documentclass[11pt]{article}

\date{\vspace{-7ex}}

\usepackage[a4paper,top=2.5cm,bottom=2.5cm,left=3cm,right=3cm,marginparwidth=1.75cm]{geometry}
\usepackage{setspace}
\usepackage[dvipsnames]{xcolor} 
\usepackage{amsthm}
\newtheorem{theorem}{Theorem}

\usepackage{graphicx}
\usepackage{dcolumn}
\usepackage{bm}
\usepackage{braket}
\usepackage{amsmath, amssymb}
\usepackage{textcomp}
\usepackage{textgreek}
\usepackage{tikz}
\usepackage{authblk} 

\usepackage{physics} 

\newcommand{\ex}{\mathop{\mathbb{E}}}
\renewcommand{\var}{\operatorname{Var}}

\usepackage{hyperref}
\hypersetup{colorlinks=true, citecolor=blue, urlcolor=blue, linkcolor=blue}

\usepackage{amsthm}
\newtheorem{proposition}{Proposition}

\usepackage{mathtools}
\usepackage{dsfont}
\usepackage[backend=biber, style=nature, sorting=none, hyperref=true, url=false, doi=false, eprint=false, date=year]{biblatex}
\title{The trainability of photonic quantum circuits}

\author[1]{Alexander Makarovskiy\textsuperscript{*\,$\dagger$}}
\author[2]{Adam Taylor\textsuperscript{*\,$\ddagger$}}
\author[2]{Zhenghao Li}
\author[2]{Michael Hanks}
\author[2]{Aubrey Clark}
\author[2]{M.~S.~Kim}
\author[3]{Ian Walmsley}
\author[1]{William Clements\textsuperscript{\S}}

\affil[1]{{\small ORCA Computing, London W2 6LA, United Kingdom}}
\affil[2]{{\small Blackett Laboratory, Imperial College London, London SW7 2AZ, United Kingdom}}
\affil[3]{{\small Oxford Quantum Institute, Clarendon Laboratory, University of Oxford, Oxford OX3 1PU, United Kingdom}}

\begin{document}

\maketitle

\begin{center}
\footnotesize
\textsuperscript{*}These authors contributed equally.\\
\textsuperscript{$\dagger$}\href{mailto:alexander.makarovskiey@gmail.com}{alexander.makarovskiy@gmail.com},
\textsuperscript{$\ddagger$}\href{mailto:adam.taylor18@imperial.ac.uk}{adam.taylor18@imperial.ac.uk}, 
\textsuperscript{\S}\href{mailto:wclements@orcacomputing.com}{wclements@orcacomputing.com.}
\end{center}

\begin{abstract}
Variational quantum algorithms are a leading approach to near-term quantum computing, but their scalability can be limited by barren plateaus and the sampling cost of resolving small changes in the loss landscape. Here, we study the trainability of passive linear-optical quantum circuits and introduce a framework based on the ratio of sample variance to circuit variance. This ratio determines the number of circuit samples required to resolve local loss differences and gradients to proportional accuracy. We apply this framework to photon-number observables and identify both trainable and non-trainable regimes. Supported by analytic results and a numerically observed polynomial decay of the circuit variance, we find that fixed-order photon-number polynomials require only polynomially many samples as the system size grows, whereas high-order polynomials and observables based on output probabilities generally require exponentially many samples. Within the trainable regime, we further identify classes of observables in which quantum estimation achieves a polynomial speed-up over multiple classical methods. Within this family, neural network observables provide one practical construction that allow measurement outcomes to be efficiently processed into the desired polynomial. These results establish photonic variational quantum computing as a promising platform for near-term applications.
\end{abstract}

\section{Introduction}

    Current quantum hardware remains in the Noisy Intermediate-Scale Quantum (NISQ) regime \cite{preskill2018quantum}, with limited qubit counts, connectivity, and noise levels that place celebrated quantum algorithms such as Shor's algorithm for factoring large numbers beyond reach \cite{shor1999polynomial, gidney2021factor}. Variational quantum computing is a leading approach for utilising such devices towards applications, encompassing variational quantum algorithms (VQAs) and quantum machine learning (QML) protocols \cite{cerezo2021variational, mcclean2016theory}. Variational circuits have found applications in optimisation \cite{farhi2014quantum, zhou2020quantum}, quantum chemistry \cite{mcardle2020quantum, peruzzo2014variational}, machine learning \cite{biamonte2017quantum, abbas2021power}, linear algebra \cite{xu2021variational}, state preparation \cite{consiglio2024variational,castro2024variational} and quantum simulation \cite{jones2019variational,endo2020variational}.

    The optimism around VQAs and QML has been tempered by a series of no-go results establishing fundamental obstacles to training parameterised quantum circuits. Foremost among these is the barren plateau (BP) phenomenon: a guarantee that, for a sufficiently expressive and randomly initialised circuit, the optimisation landscape is almost entirely flat away \cite{mcclean2018barren}. There can still exist a small number of exponentially concentrated minima, referred to as narrow gorges \cite{mcclean2018barren, arrasmith2022equivalence}, but the probability of finding one is exponentially small. Training over such a landscape requires exponential precision to resolve gradients, followed by a difficult search for the rare regions where pronounced minima exist \cite{arrasmith2022equivalence,larocca2025barren}. These problems can be made worse by the choice of observable \cite{cerezo2021cost}, excess entanglement \cite{marrero2021entanglement,patti2021entanglement} or noise in real quantum hardware \cite{wang2021noise}. A substantial body of work has gone into diagnosing barren plateaus in gate-based regimes \cite{ragone2024lie,larocca2022diagnosing}, and considerable effort has since gone into identifying circuit families and methods that avoid them \cite{grant2019initialization, pesah2021absence, skolik2021layerwise, patti2021entanglement, sack2022avoiding}. However, many of the architectures shown to be barren-plateau-free have also turned out to be efficiently classically simulable \cite{cerezo2025provable_absence,goh2025lie,bermejo2026quantum}. These results have made it challenging to scale experimental demonstrations of variational algorithms past a few tens of qubits.

    In contrast, recent experiments with photonic quantum circuits have successfully demonstrated quantum machine learning tasks at a scale significantly beyond the limitations implied by the theory around barren plateaus. In \cite{gong2025enhanced}, the authors successfully encode pixel data from the MNIST handwritten-digit dataset into the Hilbert space spanned by a quantum photonic system with 8176 modes and over 2000 photons and perform image classification. In \cite{cimini2026large}, the authors perform quantum reservoir computing with a 400-mode photonic circuit and several hundreds of photons. These results warrant further investigation into how photonic circuits differ in trainability. 
    
    However, the previous literature on investigating quantum circuit trainability is not directly applicable to photonics. Near-term photonic architectures often rely on parametrisable linear-optical circuits with quantum state inputs. While not universal for quantum computation, sampling from such a circuit is classically hard despite its relative experimental simplicity~\cite{aaronson2011computational, deshpande2022quantum, hangleiter2023computational, li2025complexity}. This architecture thus provides a promising platform for demonstrating quantum advantage. The natural configurability, absence of decoherence, and high operation speeds of photonic circuits also make them naturally suited for implementing variational protocols. However, the highly constrained gate set, the fact that photonic circuits process particles occupying modes rather than a register of qubit subsystems, and observables which do not naturally map onto qubit-based equivalents mean that practical analyses of the trainability of photonic VQAs require explicit consideration. 

    Here, we shed light on the trainability of photonic quantum circuits. We define the notions of circuit and sample variance which govern the magnitude of circuit gradients, and sample complexity of estimating them. While natural bounds on observables in gate-based regimes make the identification of circuit variance through barren plateaus often sufficient for diagnosing trainability, we will show that many photonic observables require careful consideration of both variances.  We motivate the ratio between them as a metric for trainability, and use this to identify examples of natural photonic observables that are untrainable. We then focus on a different class of observables consisting of linear sums of products of photon numbers. We show that this class of observables can exhibit polynomially scaling training complexity with increasing system size, but can be efficiently simulated classically. Nonetheless, within this trainable regime we find families of observables for which we can obtain a polynomial speed-up over a different classical methods. We also introduce a neural network observable as an efficient way of implementing a these photon-number polynomials (PNPs). We illustrate these results in both simulation on relevant regimes and on real hardware. 
    
    Our results establish that photonic quantum circuits with certain classes of observables can scale to large sizes while maintaining trainability, and also polynomial speed-up over a set of current classical simulation methods. This is promising for finding practical quantum advantage with recently developed applications, ranging from machine learning~\cite{gong2025enhanced, li2026machine, hoch2025quantum}, reservoir computing~\cite{cimini2026large}, to generative modelling~\cite{kolarovszki2026generative, bacarreza2025quantum}.

\section{Background}
\subsection{Variational photonic circuits}

We consider the general variational quantum computing setup of an initial quantum state $\rho$, a parametrised quantum circuit $\hat{U}(\bm{\theta})$ with parameters $\bm{\theta}$, and measurement of an observable $\hat{O}$. The objective of VQAs or QML models is to find parameters $\bm{\theta}$ minimising the value of the associated loss function, defined as the expectation value
\begin{align}
\begin{split}
\label{eq:loss_function_definition}
    \ell(\bm{\theta}) = \tr[\hat{O} \, \hat{U}(\bm{\theta}) \rho \hat{U}^\dag(\bm{\theta})].
\end{split}
\end{align}
More general loss functions can be constructed by considering multiple input states or observables. We will focus on the simple case of Eq.~\eqref{eq:loss_function_definition}, as results in this regime can be extrapolated to the more general cases. Additionally, while parametrised photonic circuits may incorporate mid-circuit measurements and feedforward operations~\cite{Chabaud2021quantummachine, hoch2025quantum, monbroussou2025towards, monbroussou2025photonic}, we focus exclusively on the dynamics of passive linear optics.

Parameterised quantum circuits are generally trained through a hybrid quantum--classical optimisation loop: a quantum device is used to estimate the loss function at various parameter settings, often to build gradient estimates. These estimates are then supplied to a classical optimiser which updates the parameters towards reducing the loss function. Repeating this procedure iteratively trains the parameters of the circuit towards a loss-minimising regime. Although this framework is largely independent of the underlying hardware platform, the majority of existing work has focussed on qubit-based architectures. In Fig.~\ref{fig:summary}, we illustrate a photonic quantum circuit as part of a variational training loop.

\begin{figure}[ht]
\centering
    \includegraphics[width=1\textwidth]{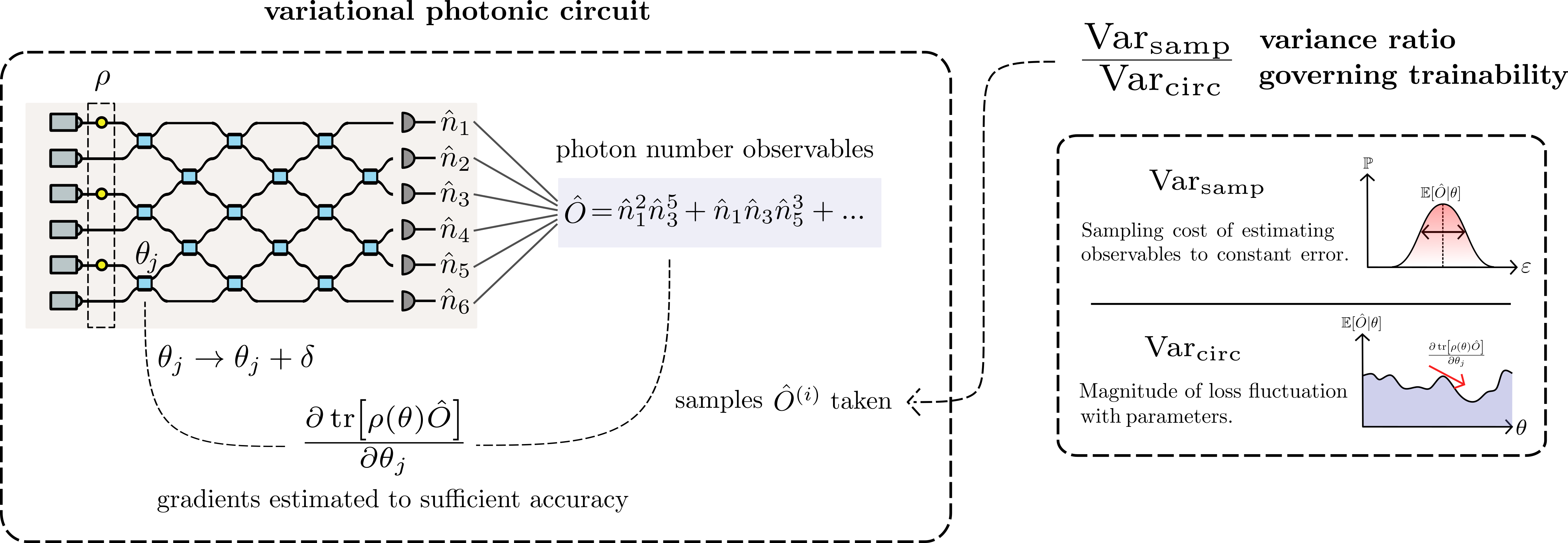}
    \caption{We illustrate the class of VQAs considered in this paper. The ratio between two notions of variance, sample and circuit variance, give a measure of the number of samples needed to estimate accurate gradients to train a photonic circuit with respect to a specific observable. Circuit variance governs the fluctuations of the expected value of an observable with respect to circuit parameters, and exponentially small circuit variance is known as a \textit{barren plateau}. Sample variance determines the number of samples needed to estimate the value of an observable within a given error. We study the trainability of photon-number observables, which exhibit a polynomially scaling variance ratio, making them a suitable for implementing efficiently trainable variational quantum circuits}
    \label{fig:summary}
\end{figure}

We now turn to the photonic setting that is the focus of this work. We study variational quantum protocols implemented on photonic platforms, with parametrised circuits given by passive linear optical networks (PLONs). We restrict our attention to input states with a fixed total photon number.

An $M$-mode photonic system is described by the creation and annihilation operators $\hat{a}^\dag_i$ and $\hat{a}_i$ respectively for $i\in\{1,\dots,M\}$, satisfying the canonical commutation relations
\begin{align}
\begin{split}
    [\hat{a}_i,\hat{a}_j] = [\hat{a}_i^\dag, \hat{a}_j^\dag] = 0, \quad \quad [\hat{a}_i, \hat{a}_j^\dag] = \delta_{ij}.
\end{split}
\end{align}
A Fock state has a definite photon number in each mode and is defined by
\begin{align}
\begin{split}
    \ket{\bm{n}} = \ket{n_1, \dots, n_M} 
    = \prod_{i=1}^M \frac{(\hat{a}_i^\dag)^{n_i}}{\sqrt{n_i!}} \ket{0}
\end{split}
\end{align}
These are eigenstates of the photon-number operator, $\hat{n}_i = \hat{a}_i^\dag \hat{a}_i$, satisfying $\hat{n}_i \ket{\bm{n}} = n_i \ket{\bm{n}}$. The $n$-photon subspace, $\mathcal{H}_{n,M}$ is spanned by all Fock states with total photon number $n$,
\begin{align}
\begin{split}
    \mathcal{H}_{n,M} = \textrm{span}\big\{\ket{\bm n} \, | \, \bm{n} \in \Phi_M^n \big\}, \quad \Phi_M^n = \{\bm{m} \in \mathbb{N}_0^M \, | \, \sum_{i=1}^M m_i = n\},
\end{split}
\end{align}
and has dimension $|\Phi_M^n| = {n+M-1\choose n}$.

Passive linear optical networks implement transformations that are linear in the creation/annihilation operators and preserve total photon number. Each such transformation is described by a unitary matrix acting linearly on the $M$ modes, i.e. an element of $\mathrm{U}(M)$. For a particular $U\in\mathrm{U}(M)$, there is an associated optical unitary operator $\hat{U}$ over the entire Hilbert space whose action is
\begin{align}
\begin{split}
    \hat{U} \hat{a}_i^\dag \hat{U}^\dag = \sum_{j=1}^M U_{ij}\hat{a}_j^\dag .
\end{split}
\end{align}
Since these transformations preserve total photon number, $\hat{U}$ maps each fixed photon number subspace to itself.

Experimentally, PLONs are constructed from beam splitters and phase shifters. A beam splitter provides a programmable coupling between two modes, controlling how much amplitude is exchanged between them. A phase shifter implements a per-mode tunable phase. Arbitrary linear unitary transformations over $M$ modes can be implemented with a circuit of $M(M-1)/2$ beam splitters with a final layer of phase shifters \cite{reck1994experimental, clements2016optimal, bell2021further}.

However, universality over $\mathrm{U}(M)$ only provides a universal $n$-photon \textit{linear-optical} unitary group, which is a proper subgroup of the full unitary group on the $n$-photon Hilbert space $\mathcal{H}_{n,M}$~\cite{bouland2014generation}. The induced dynamics are restricted to only a subspace of $\mathcal{H}_{n,M}$ decided by the initial state. As such, parametrised PLONs with Haar-random unitary circuit drawn from $\textrm{U}(M)$ are less expressive than circuit-based model of quantum computing, where a random unitary can instead be drawn from the Haar measure over the whole Hilbert space. Despite these restricted dynamics, sampling from the output distribution of a linear-optical circuit, a task known as boson sampling, remains hard to simulate classically \cite{aaronson2011computational}.

A natural observable to consider for photonic circuits is a product of photon numbers across modes \cite{cardin2024photon}, i.e. observables of the form:
\begin{equation}
\hat{M}_{\bm{\alpha}} = \bigotimes_{i = 1}^K \hat{n}_i^{\alpha_i},
\end{equation}
We call the subset of $K\le M$ modes the observable acts on its \textit{support}, and $|\bm\alpha| = \sum_i\alpha_i$ it's \textit{order}.
For some state $\rho$, the expected value of the observable, $\tr[\hat{M}_{\bm{\alpha}} \rho]$, is a moment of the photon-number distribution across the modes. We will refer to observables of the form $\hat{M}_{\bm{\alpha}}$ as photon-number monomials (PNMs).

Any observable diagonal in the photon number basis can be written as a linear sum of PNMs. As such, a study of the trainability of PNM observables generally tells us about the trainability of photonic circuits with respect to arbitrary objective functions.

\subsection{Variances as a measure of trainability}
\label{subsec:barren_plateaus_and_trainability}

\begin{figure}[ht]
    \hspace{2.5cm}
    \includegraphics[width=0.8\textwidth]{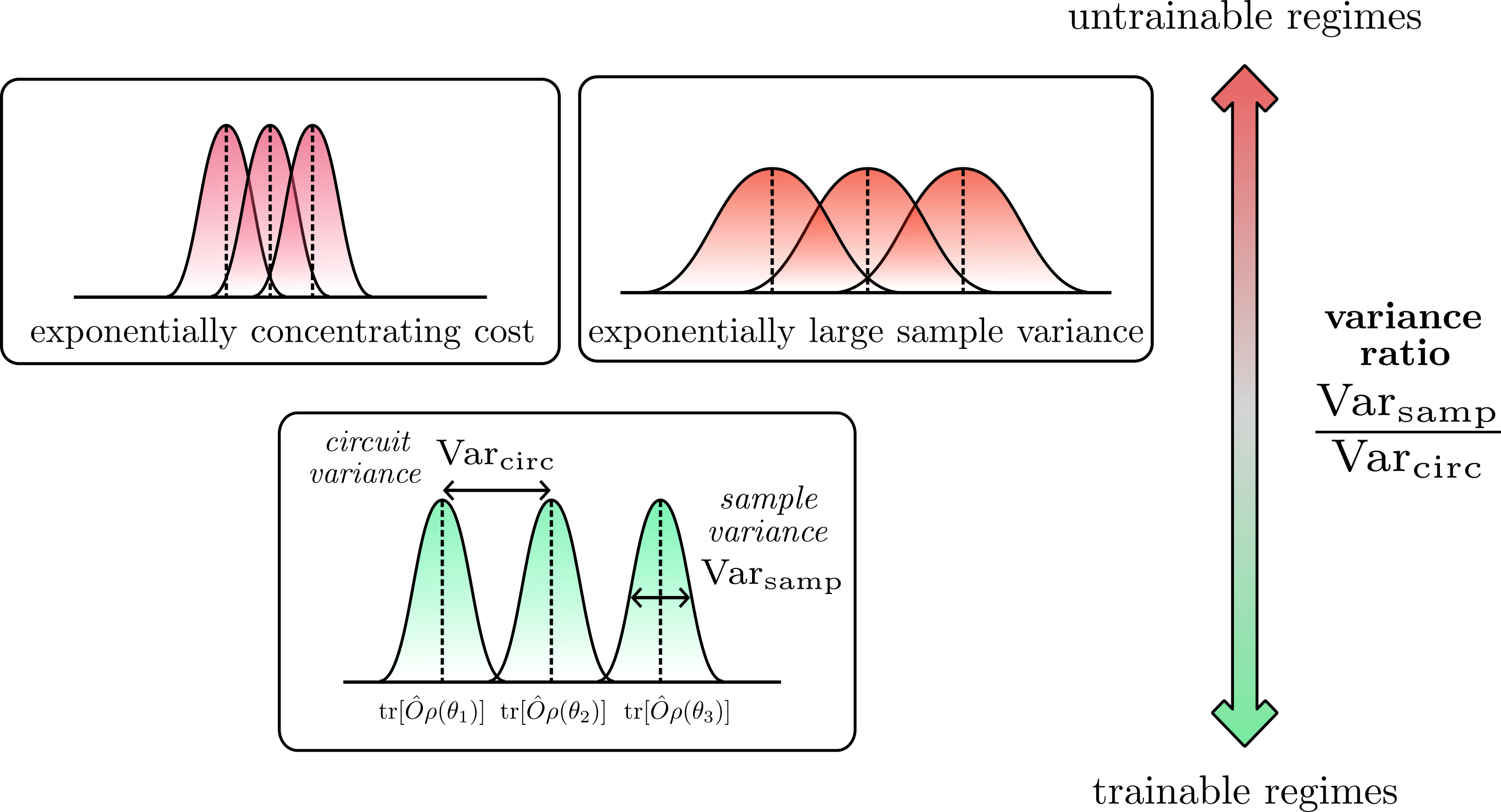}
    \caption{We visualise the distribution of a finite-sample estimate of a fixed observable at different parameter settings. For regimes with a high variance ratio, two observable estimates from different parameter settings will likely not be sufficiently distinct to resolve for which parameter set the ground-truth loss was smaller, due to the overlap in the distributions of the estimates. This can arise either through the means of the observable at different parameter settings being exponentially concentrated (top left), or through the variance of sampling the observable being exponentially large (top right).}
    \label{fig:variance_ratio_regimes}
\end{figure}

    Training the parameters of a quantum circuit requires estimating the gradients of the loss function with respect to a given circuit parameter. Generally, the calculation of these gradients analytically is intractable. Practically, we need to sample the quantum circuit at shifted parameter settings, and use estimates of the loss to construct an estimate of the gradient. Throughout this work, we say a circuit--observable pair is \textit{trainable} if its gradients can be resolved to within proportional error with the number of samples required to estimate them growing at most polynomially with the size of the circuit. Conversely, if the number of samples needed to resolve gradients grows exponentially in system size, we say that the choice of circuit is not trainable.  In what follows we phrase the trainability requirements in terms of resolving the loss itself; in Appendix~\ref{appendix:bp_equivalence} we show that, under mild assumptions, this is sufficient to draw the same conclusions for gradients \cite{arrasmith2022equivalence}.

    This notion of trainability is local: given a parameter setting $\bm\theta$, it asks whether we can efficiently identify an update $
    \bm{\delta\theta}$ such that, with high probability, the new parameters satisfy $\ell(\bm{\theta} + \bm e_j \delta\theta_j) \le \ell(\bm\theta)$ for all $j$. In other words, we ask only whether the estimated loss function and gradients contain enough information to make a loss-decreasing parameter update along each parameter direction. We do not address the global optimisation landscape, including the presence of poor local minima, the number of optimisation steps required for convergence, or the training rate needed to avoid overshooting minima.

    The first quantity determining the sampling cost is the variance of the measurement outcomes themselves. For a given parameter setting $\bm{\theta}$, a single run of the circuit yields a measurement outcome of the observable $\hat{O}$ applied to the output state $\rho(\bm{\theta}) = \hat{U}(\bm\theta) \rho \hat{U}^\dag(\bm\theta)$. This distribution has variance $\var[\hat{O} | \bm\theta] = \tr[\hat{O}^2 \rho(\bm{\theta})] - \tr[\hat{O} \rho(\bm{\theta})]^2$. 
    The expectation value is estimated by averaging the outcomes of repeated measurements, with the estimation error decreasing as the number of samples increases. The rate of this decrease is governed by the measurement variance $\var[\hat{O}|\bm\theta]$ such that a broader distribution of outcomes requires more samples to achieve a given precision.
    Since we are interested in the behaviour of circuits under random initialisation, we average over the distribution of parameters, and study the \textit{sample variance} $\text{Var}_{\textrm{samp}}(\hat{O}) := \ex_{\bm{\theta}}\left[\var[\hat{O}|\bm\theta]\right]$. This is the expected variance of measurement outcomes when estimating the loss at a typical parameter setting.

    The sample variance alone does not tell us how many samples are enough for resolving loss accurately. We need to know the error to which it is sufficient to estimate. The \textit{circuit variance} $\text{Var}_{\text{circ}}(\hat{O}) := \var_{\bm{\theta}}\left[\ell(\bm{\theta}; \hat{O}, \rho)\right]$ measures how much the ground-truth loss fluctuates as the circuit parameters are varied, and hence the typical magnitude of the loss differences we must distinguish. In order to resolve a loss-minimising direction, the error of our estimates must be proportional to the magnitude of these differences: the fluctuations of the error must be smaller than the signal of the loss we wish to estimate, otherwise the estimates carry no information about which parameter settings decrease the loss. The scaling of the circuit variance therefore sets the precision to which the loss must be estimated.

    Together, the sample variance sets the cost of achieving a given estimation error, and the circuit variance sets the error we need to achieve. The number of samples needed to estimate observables to resolve a loss minimising direction is then governed by the ratio of these two quantities, $\text{Var}_{\text{samp}} / \text{Var}_{\text{circ}}$. In Appendix \ref{app:variance_ratio_sufficiency} we make this precise, showing that a number of samples scaling with this ratio is sufficient to resolve loss differences at different parameter settings, and hence gradients, to within proportional error. We therefore take this \textit{variance ratio} as our measure of trainability. Circuit--observable pairs whose variance ratio grows at most polynomially with system size are efficiently trainable in the sense that we can efficiently update the parameter in the correct direction. In contrast, an exponentially growing ratio demands exponentially many samples. Fig.~\ref{fig:variance_ratio_regimes} illustrates the two ways in which the ratio can become large, namely via the loss values at different parameter settings concentrating exponentially around their mean, or the spread of the estimates around each loss value growing exponentially. In either case, estimates of the loss at different parameter settings overlap too strongly to resolve which is smaller.

    For gate-based circuits, natural bounds on the observable magnitude, such as a unit eigenspectrum, bound the sample variance to scale at most polynomially \cite{cerezo2021cost}. In this setting the numerator of the variance ratio is benign, and it suffices to study the circuit variance alone. An exponentially decaying circuit variance, known as a \textit{barren plateau} \cite{mcclean2018barren, arrasmith2022equivalence}, then implies that exponentially many samples are needed to resolve exponentially small gradients, and its absence is sufficient to expect efficient training. These assumptions do not generally hold for photonic circuits. Natural photonic observables, such as products of multiple photon numbers, are unbounded and can take exponentially large values in the number of photons, so the sample variance itself can grow exponentially even in the absence of a barren plateau. In Sec.~\ref{subsec:trainable_regimes} we make this concrete, studying photonic observables that fail to be trainable through either an exponentially large sample variance, or exponentially concentrating circuit variance. Both variances must therefore be considered to conclude trainability in photonic circuits.

\section{Trainability of photonic observables}
\label{sec:trainability}
Motivated by the trainability conditions established in Sec.~\ref{subsec:barren_plateaus_and_trainability}, we derive expressions for the circuit variance and sample variance of photon-number observables in the Haar-average regime. To do so, we compute the Haar-averaged first and second moments of photon-number observables, from which both variances follow directly. In Sec.~\ref{subsec:variance_calculations}, we present the analytic results for the first moment and second moment. In Sec.~\ref{subsec:trainable_regimes}, we apply these equations to calculate the trainability ratio of different photonic observables. We start with photon-number monomials and polynomials, showing that the complexity of training scales with the order of the polynomial, and identifying fixed-order polynomials as trainable. We then discuss neural network observables as a natural implementation of photon-number polynomials. In Sec.~\ref{subsec:trainable_regimes} we present negative results about the non-trainability of probability output estimating observables. We present the tools developed to produce these results and derivations in the Appendix.

\subsection{Variance calculations for photon-number observables}
\label{subsec:variance_calculations}
In this section, we present the analytic expressions for the first and second moments used in the variance ratio, for both PNMs and PNPs.
Writing $\mu_k(\hat{O}) = \ex_U\big[\tr[\hat{O} \, U \rho U^\dag]^k\big]$ for the $k^{\mathrm{th}}$ Haar-average moment of the loss, the circuit and sample variances can be written as differences of its first and second moments,
\begin{align}
    \var_{\text{samp}}(\hat{O}) &= \mu_1(\hat{O}^2) - \mu_2(\hat{O}), \\
    \var_{\text{circ}}(\hat{O}) &= \mu_2(\hat{O}) - \mu_1(\hat{O})^2. 
\end{align}
The trainability of an observable--initial state pair therefore reduces to computing the first two moments over the Haar measure.

We consider $\mu_1$ first. The Haar-averaged output state of a linear optical network acting on an $n$-photon input state is the maximally mixed state over the $n$-photon subspace -- see Appendix \ref{appendix:average_linear_optical_network} for details. As a result, the Haar-averaged first moment of an observable reduces to its value over an equally weighted sum over all possible output patterns.
In Appendix \ref{appendix:first_moment}, we prove the following about photon number monomials;
\begin{theorem}[First moment of photon-number monomials]
\label{thm:first_moment_main}
    Let $\hat{M}_{\bm{\alpha}}$ be a PNM with $\bm{\alpha} = (\alpha_1, \dots, \alpha_K)$, and let $\rho$ be an input state with total photon number $n$. The expectation value of the observable in the Haar-average regime is given by,
    \begin{align}
    \label{eq:pnm_first_moment_main}
        \mu_1(\bm{\alpha}; n) = \sum_{P=K}^{|\bm\alpha|} \frac{{n+M-1\choose n-P}}{{n+M-1\choose n}}  \sum_{\bm{p}\in \Phi_K^P} \left(\prod_{i=1}^K p_i! \, S(\alpha_i, p_i)\right),
    \end{align}
    where $S(\alpha_i, p_i)$ are Stirling numbers of the second kind. Furthermore, if the photon density is constant with $n=\nu M$, then in the large-$M$ limit this approaches a product of ordered Bell polynomials, $F_a(x) = \sum_{p=0}^a p! S(a, p) x^p$, up to $\mathrm{O}(1/M)$ corrections,
    \begin{align}
    \begin{split}
    \label{eq:first_moment_asymptotic_main}
        \mu_1(\bm\alpha; n=\nu M) \,\xrightarrow[M \to \infty]{} \, \left(\prod_{j=1}^K F_{\alpha_j}(\nu) \right) + \textrm{O}(1/M).
    \end{split}
    \end{align}
    Crucially, if $K$ and $|\bm\alpha|$ do not scale with $M$, then the first moment asymptotically approaches a constant value.
\end{theorem}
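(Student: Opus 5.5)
We start from the fact, recalled just before the statement and proved in Appendix~\ref{appendix:average_linear_optical_network}, that the Haar-averaged output of a linear optical network on an $n$-photon input is the maximally mixed state on $\mathcal{H}_{n,M}$. Hence
\begin{align*}
\mu_1(\bm\alpha;n) = \frac{1}{{n+M-1\choose n}} \sum_{\bm m\in\Phi_M^n} \prod_{i=1}^K m_i^{\alpha_i}.
\end{align*}
This reduces the problem to a purely combinatorial sum over occupation patterns. We then expand each power using Stirling numbers of the second kind,
\begin{align*}
m^{\alpha} = \sum_{p=0}^{\alpha} S(\alpha,p)\, m^{\underline{p}},
\end{align*}
where $m^{\underline{p}} = m(m-1)\cdots(m-p+1)$ is the falling factorial. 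The product then becomes a sum over $\bm p=(p_1,\dots,p_K)$ with $0\le p_i\le\alpha_i$, weighted by $\prod_i S(\alpha_i,p_i)$. Since $S(\alpha,0)=0$ for $\alpha\ge1$, only $p_i\ge1$ contribute whenever all $\alpha_i\ge1$, which forces $P=|\bm p|\ge K$. The range $|\bm p|\le|\bm\alpha|$ follows from $p_i\le\alpha_i$.

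The key step is the identity
\begin{align*}
\sum_{\bm m\in\Phi_M^n}\prod_{i=1}^K m_i^{\underline{p_i}} = \Big(\prod_i p_i!\Big){n+M-1\choose n-P}.
\end{align*}
We prove it by writing $m^{\underline p}/p! = {m\choose p}$ and counting: $\sum_{\bm m}\prod_i {m_i\choose p_i}$ counts ways to place $n$ indistinguishable photons in $M$ modes and then mark $p_i$ of them in mode $i$. Equivalently, we choose the marked configuration and distribute the remaining photons. Via the Vandermonde-type convolution, or generating functions $\sum_m {m\choose p}x^m = x^p/(1-x)^{p+1}$, the product over the $K$ marked modes times $(1-x)^{-(M-K)}$ for the others gives $x^P(1-x)^{-(M+P)}$. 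Its $x^n$ coefficient is ${n-P+M+P-1\choose n-P}={n+M-1\choose n-P}$. Dividing by ${n+M-1\choose n}$ and noting that the inner sum only depends on $\bm p\in\Phi_K^P$ grouped by $P$ yields Eq.~\eqref{eq:pnm_first_moment_main}. I expect this generating-function identity to be the main, though routine, obstacle. We also need care with modes where $\alpha_i=0$, which are simply excluded from the support.

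For the asymptotics, we compute the ratio
\begin{align*}
\frac{{n+M-1\choose n-P}}{{n+M-1\choose n}} = \frac{n^{\underline P}}{(M+P-1)^{\underline P}} .
\end{align*}
With $n=\nu M$ and $P$ fixed, this equals $\nu^P(1+\mathrm{O}(1/M))$. Since the number of terms is finite when $K$ and $|\bm\alpha|$ are fixed, the errors sum to $\mathrm{O}(1/M)$. The leading term
\begin{align*}
\sum_{\bm p}\prod_i p_i!\,S(\alpha_i,p_i)\,\nu^{p_i}
\end{align*}
factorizes over $i$ into $\prod_i F_{\alpha_i}(\nu)$, giving Eq.~\eqref{eq:first_moment_asymptotic_main}. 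Its independence of $M$ gives the final claim.
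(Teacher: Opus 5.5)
Your proposal is correct and follows the paper's proof essentially step for step. Both use the reduction to the maximally mixed state on $\mathcal{H}_{n,M}$, the Stirling expansion $m^{\alpha}=\sum_p p!\binom{m}{p}S(\alpha,p)$, and the generating function $\sum_m\binom{m}{p}x^m=x^p/(1-x)^{p+1}$ giving $\binom{n+M-1}{n-P}$. The large-$M$ step is also the same: the binomial ratio becomes $\nu^P(1+\mathrm{O}(1/M))$, and the sum then factorises into ordered Bell polynomials. Writing the ratio exactly as $n^{\underline{P}}/(M+P-1)^{\underline{P}}$ is a slightly cleaner way to see the $\mathrm{O}(1/M)$ correction than the paper's separate expansions of the two factorial ratios, but the argument is the same.
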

Any PNP first moment can be written as a sum over PNM first moments via $\mu_1\big(\sum_{\bm\alpha} c_{\bm\alpha} \hat{M}_{\bm\alpha}\big) = \sum_{\bm\alpha} c_{\bm\alpha} \mu_1(\bm\alpha)$ and hence Theorem~\ref{thm:first_moment_main} is applicable to polynomials too.
We can now analyse the scaling of the first moment in terms of the order of the monomial and the circuit size. For a constant-density regime $n = \nu M$, where the first moment is given by a product of ordered Bell polynomials in Eq.~\eqref{eq:first_moment_asymptotic_main}. 

In Appendix \ref{appendix:first_moment_asymptotics}, we show that for large $|\bm\alpha|$ the first moment asymptotically scales as
\begin{align}
\label{eq:first_moment_order_scaling_main}
    \mu_1(\bm{\alpha}; n = \nu M) \to \prod_{i=1}^K F_{\alpha_i}(\nu) \sim \prod_{i=1}^K \frac{\alpha_i!}{(1 + \nu) \ln(1 + 1/\nu)^{\alpha_i + 1}},
\end{align}
This grows factorially in the order and exponentially in the number of modes the PNM is supported on. In Fig.~\ref{fig:pnm_scaling_with_weight}, we plot $\mu_1(\bm\alpha)$ against $|\bm\alpha|$ for fixed $K$, and find that it does indeed grow superexponentially.\\
\\
Calculation of the second moment requires the second-order Haar moments of the output state. We apply the techniques developed by Mhiri \textit{et al} \cite{mhiri2026bosonsamplingdiluteregime}, which decompose the second moment over the irreducible representations of $\mathrm{U}(M)$ within the $n$-photon subspace to give \cite{mhiri2026bosonsamplingdiluteregime}
\begin{align}
\begin{split}
    \mu_2(\hat{O};\rho) = \sum_{k=0}^n \frac{1}{d_k^{(n)}} ||\rho_k^{(n)}||_2^2 \, ||\hat{O}_k^{(n)}||_2    
\end{split}
\end{align}
Here, $\hat{X}_k^{(n)}$ is a projector onto the $k^{\textrm{th}}$ irreducible representation of $\mathrm{U}(M)$ and $||\bullet||_2^2 = \tr[\bullet^\dag \bullet]$ is the Hilbert-Schmidt norm. In Appendix \ref{appendix:second_moment}, we present or derive the projected norms for Fock states, entangled states, PNMs and PNPs. These values can be efficiently computed using discrete convolutions, allowing us to numerically study the scaling of the second moments.

\begin{figure}[ht]
    \centering
    \includegraphics[width=0.9\linewidth]{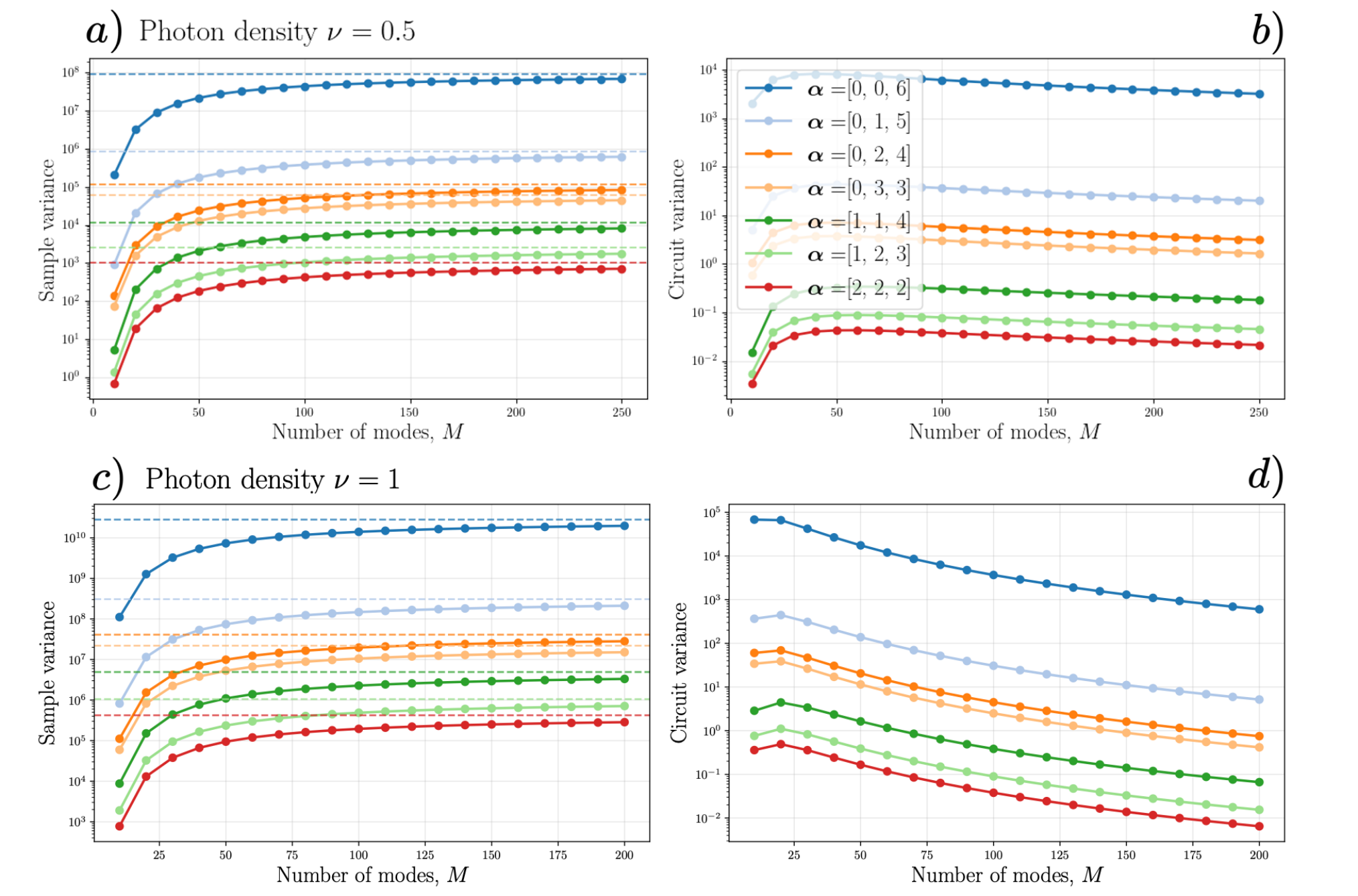}
    \caption{Sample and circuit variances as functions of the number of modes $M$ for PNMs satisfying $|\bm{\alpha}|=6$ and $K\leq 3$. Panels (a) and (c) show the sample variance for photon densities $\nu=0.5$ and $\nu=1$ respectively. The dotted lines denote the asymptotic limits $\left.\mu_1(2\bm{\alpha})\right|_{M\rightarrow\infty}$, towards which the sample variances converge. Panels (b) and (d) show the corresponding circuit variances. In both cases, the circuit variance decays only very slowly with $M$. Its visibly non-linear behaviour on the logarithmic vertical scale suggests that the decay is not exponential. We therefore conjecture that the circuit variance instead decays polynomially with $M$, as stated in Eq.~\eqref{eq:circuit_variance_pnm_decay_conjecture}.}
    \label{fig:fixed_pnm_scaling_analysis}
\end{figure}  
Unlike the first moment results, analytic asymptotic formulae for the second moments are less clear. Through the law of total expectation and Jensen's inequality, we can guarantee that $\mu_1(\bm\alpha)^2 \le \mu_2(\bm\alpha) \le \mu_1(2\bm\alpha)$; by Eq.~\eqref{eq:first_moment_asymptotic_main} we know that for fixed $n=\nu M$ and $K, |\bm\alpha|\ll M$ then the second moment is necessarily between two system-size independent constants, and thus itself approaches a constant. We numerically observe that $\mu_2(\bm\alpha)$ approaches the lower end, with $\mu_2(\bm\alpha) - \mu_1(\bm\alpha)^2$ appearing to decay slowly with system size. We conjecture that this decay is polynomial with respect to $M$. In Fig.~\ref{fig:n_alpha_plot_fixed_pnm_decay}, we plot $\var_{\textrm{circ}}(\hat{n}^\alpha)$ against $M$ for different choices of $\alpha$ with initial state $\ket{\bm{1},\bm{0}}$ and constant density $\nu$. We find all PNM first moments peak at some $M_{\textrm{max}}$ before slowly decaying, and the value of $M_{\textrm{max}}$ grows with $\alpha$. From the numerics alone we cannot perfectly distinguish the exact nature of the slow decay component, but we conjecture it is of the form
\begin{align}
\begin{split}
\label{eq:circuit_variance_pnm_decay_conjecture}
    \var_{\textrm{circ}}(\bm\alpha) \sim \frac{\mu_1(\bm\alpha)^2}{\textrm{poly}(M)}.
\end{split}
\end{align}
This is well justified by Fig.~\ref{fig:fixed_pnm_scaling_analysis} where we see non-linear decay in a log-plot of the circuit variance across a series of low order PNMs. In Fig.~\ref{fig:n_alpha_plot_fixed_pnm_decay}, we again see a very slow decay of the circuit variance across a variety of photon-number monomials $\hat{n}^\alpha$.
The analytic formulae above, together with these numerical observations, characterise the quantities of sample and circuit variance governing trainability for photon number monomials and give insight into more general photon number polynomials.

\subsection{Trainable and non-trainable regimes}
\label{subsec:trainable_regimes}
We apply the results calculated above to analyse the trainability of different photon number polynomial regimes. We also analyse the trainability ratio of output probability estimation observables that have previously been used in the literature. Based on previous tradeoffs that have been observed between trainability and simulability, we comment on the classical simulability of these observables.
 
\subsubsection{Photon-number monomials}
\begin{figure}[ht]
    \centering
    \includegraphics[width=0.95\linewidth]{fixed_pnm_circuit_variance.pdf}
    \caption{Circuit variance as a function of the number of modes $M$ for observables $\hat{n}^{\alpha}$ at several values of $\alpha$. The input state is $\ket{\bm{1},\bm{0}}$ with photon density $\nu=0.5$. For each $\alpha$, the circuit variance initially increases before entering a regime of very slow decay, with the maximum shifting to larger $M$ as $\alpha$ increases. Although deriving the asymptotic decay analytically is challenging, the numerical results suggest a polynomial dependence on $M$, consistent with the scaling analysis in Fig.~\ref{fig:fixed_pnm_scaling_analysis}.}
    \label{fig:n_alpha_plot_fixed_pnm_decay}
\end{figure}
\label{subsubsec:pnps}
The scaling results of Sec.~\ref{subsec:variance_calculations} allow us to directly calculate the variance ratio of photon-number monomials. We will again consider the regime where $n = \nu M$. Using the observations that the circuit variance decays polynomially, $\var_{\text{circ}}(\bm\alpha) \approx \mu_1(\bm{\alpha})^2 / \mathrm{poly}(M)$, and that the sample variance is dominated by the highest moment, $\var_{\text{samp}}(\bm\alpha) \approx \mu_1(2\bm{\alpha})$, the order scaling of Eq.~\eqref{eq:first_moment_order_scaling_main} gives
\begin{align}
\begin{split}
\label{eq:pnm_variance_ratio_main}
    \frac{\var_{\text{samp}}(\bm\alpha)}{\var_{\text{circ}}(\bm\alpha)}&
    \approx \mathrm{poly}(M) \cdot \frac{\mu_1(2\bm{\alpha})}{\mu_1(\bm{\alpha})^2}
    \\
    &\xrightarrow[M\to\infty]{}\; \mathrm{poly}(M) \cdot \left(\prod_{i=1}^K \binom{2\alpha_i}{\alpha_i} (1+\nu) \ln(1 + 1/\nu)\right).
\end{split}
\end{align}
The variance ratio scales polynomially in system size, with a factor that grows exponentially with the total order and support of the photon number monomial observable.

For a fixed-order monomial at constant photon density, the exponential factor in Eq.~\eqref{eq:pnm_variance_ratio_main} is a system-size-independent constant, and the variance ratio is strictly polynomial in $M$. This means fixed-order PNMs exhibit no barren plateaus as the system size scales, and their losses can be resolved to the scale of their circuit fluctuations with polynomially many shots. Therefore, we conclude that variational photonic circuits can be efficiently trained on fixed photon-number monomials.
 
Monomials whose total order $|\bm{\alpha}|$ grows linearly with the number of modes $M$ behave differently. The exponential factor in Eq.~\eqref{eq:pnm_variance_ratio_main} then grows exponentially with \emph{system size}, so we do not expect such observables to be trainable. 
For example, for the monomial $\bm\alpha = (1, \dots, 1)$, which takes a product of photon numbers over $K\le M$, consider how the variances scale. Just looking at the circuit variance, we find a scaling
\begin{align}
\begin{split}
    \var_{\textrm{circ}}(\bm\alpha) \approx \nu^{2K} / \textrm{poly}(M).
\end{split}
\end{align}
One might think this observable exhibits the opposite of a barren plateau -- in the oversaturated regime with $\nu > 1$, the circuit variance grows exponentially fast with $M$ when $K \propto M$. However, the sample variance grows exponentially faster
\begin{align}
\begin{split}
    \var_{\textrm{samp}}(\bm\alpha) \ge (2^K - 1)\nu^{2K}
\end{split}
\end{align}
and hence the variance ratio scales as $\Omega(2^K)$, which is also exponential in $M$ when $K \propto M$.
Resolving accurate loss function gradients governed by such observables therefore requires exponentially many samples, and it is not sufficient to only study the circuit variance to identify this. Intuitively, the observable exhibits large fluctuations with parameters, but also has rare large magnitude events contributing to the majority of this, meaning estimating these values through sampling is exponentially hard.

\begin{figure}[ht]
    \centering
    \includegraphics[width=0.8\textwidth]{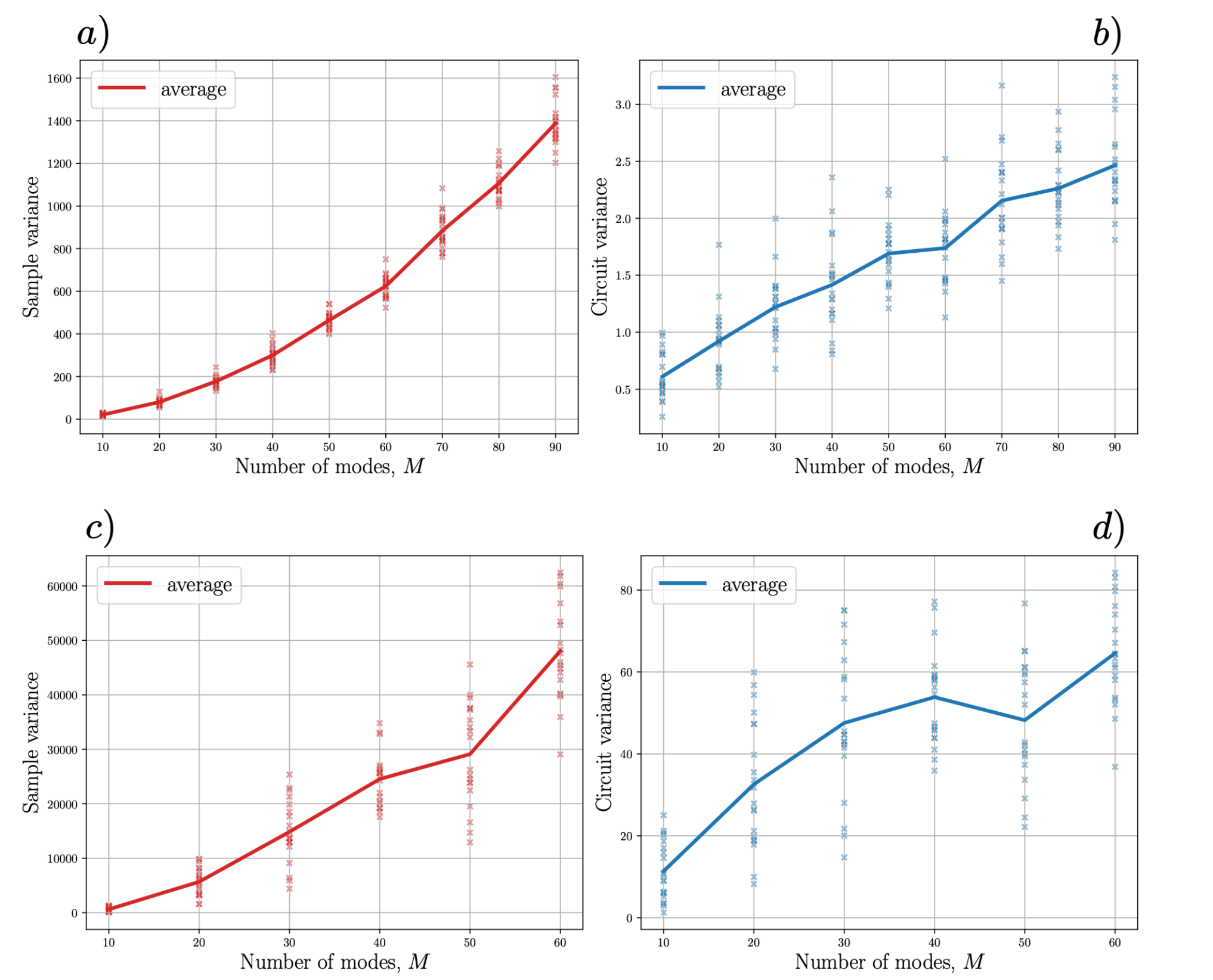}
    \caption{Numerical scaling of the circuit and sample variances for two classes of photon-number polynomials (PNPs). For each value of $M$, we generate 20 PNPs with independently sampled random coefficients. Individual instances are shown as crosses, while solid lines indicate the corresponding averages. In panels (a) and (b), the PNPs contain all photon-number monomials (PNMs) satisfying $|\bm{\alpha}|\leq 2$, with coefficients sampled uniformly from $[-1,1]$. Since the number of terms scales as $|A|\sim \mathrm{O}(M^2)$, the average sample variance in (a) exhibits approximately quadratic growth. The circuit variance in (b) grows approximately linearly, indicating that the increasing number of terms compensates for the decay of the circuit variance of the constituent PNMs. In panels (c) and (d), we consider PNPs of order $|\bm{\alpha}|\leq 4$, each containing $|A|=5M$ randomly selected PNMs with coefficients sampled uniformly from $[-1,1]$. The average sample variance grows approximately linearly, consistent with the predicted scaling $\mathrm{O}(|A|)$. The circuit variance also increases with $M$, although it shows a weak tendency towards a plateau at larger system sizes. In both cases, the absence of a decaying circuit variance and polynomially scaling sample variance indicates that the corresponding loss functions remain trainable over the system sizes considered.}
    \label{fig:pnp_variance_numerics}
\end{figure}
\subsubsection{Photon-number polynomials}
A single photon-number monomial is a restricted class of observable, and their product form makes them susceptible to efficient classical simulation methods \cite{limClassicalAlgorithmsEstimating2025}. We therefore focus on photon number polynomials, expressed as $\hat{C} = \sum_{\bm{\alpha} \in A} c_{\bm{\alpha}} \hat{M}_{\bm{\alpha}}$. Indeed, any diagonal observable in the photon number basis can be written as a finite PNP.

The PNP first moment $\mu_1(\hat{C})$ is just given by the sum over PNM first moments weighted by $c_{\bm\alpha}$. The first moment of the squared PNP is given by $\sum_{\bm\alpha, \bm\beta \in A} c_{\bm\alpha} c_{\bm\beta} \, \mu_1(\bm\alpha + \bm\beta)$ with each term again given by Theorem~\ref{thm:first_moment_main}.
Meanwhile, to compute the second moment requires computing $||\hat{C}_k^{(n)}||_2^2$, which we carry out in Appendix \ref{appendix:pnp_second_moment}. We will first discuss the sample variance.

Without making any assumptions on the coefficients, the sample variance is bounded by the constituent monomials using the Cauchy-Schwarz inequality, $\mu_1(\bm\alpha + \bm\beta) \le \sqrt{\mu_1(2\bm\alpha) \, \mu_1(2\bm\beta)}$. This gives the bound
\begin{align}
\label{eq:pnp_sample_variance_bound_main}
    \var_{\text{samp}}\big[\hat{C}\big] \le \mu_1(\hat{C}^2) \le \bigg( \sum_{\bm{\alpha} \in A} |c_{\bm{\alpha}}| \sqrt{\mu_1(2\bm{\alpha})} \bigg)^{\!2} &\le |A|^2 \, \max_{\bm{\alpha} \in A} c_{\bm{\alpha}}^2 \, \mu_1(2\bm{\alpha})
\end{align}
where $|A|$ is the cardinality of the set $A$, equal to the number of PNM terms present in the PNP.
If the polynomial has a maximum constant order, and the coefficients are bounded, each $\mu_1(2\bm{\alpha})$ is a system-size independent constant, so a PNP with polynomially many terms can be estimated to fixed additive precision $\varepsilon_{\textrm{additive}}$ with a number of shots scaling as $\mathrm{O}(|A|^2)$ -- more detail is given in Appendix \ref{appendix:pnp_estimation}.
 
For typical coefficients we can strengthen this bound. Consider a set of random coefficients drawn from a distribution satisfying $\ex[c_{\bm\alpha}] = 0$ and $|c_{\bm\alpha}|$ being independent of $M$ for all $\bm\alpha$. In Appendix \ref{appendix:pnp_estimation}, we show that for such coefficients,
\begin{align}
\label{eq:pnp_random_sign_main}
    \ex_{\{c_{\bm\alpha}\}}\Big[\var_{\text{samp}}(\hat{C})\Big] &\le \sum_{\bm{\alpha} \in A} \,\ex_{\{c_{\bm\alpha}\}}[c_{\bm{\alpha}}^2]\, \mu_1(\bm\alpha) \sim \mathrm{O}(|A|)
\end{align}
In this case, the expected sample variance improves from $\mathrm{O}(|A|^2)$ to $\mathrm{O}(|A|)$. Similarly, the same off-diagonal cancellation takes place at the level of the circuit variance, leading to
\begin{align}
\begin{split}
\label{eq:pnp_random_sign_circuit_variance_main}
    \ex_{\{c_{\bm\alpha}\}}\Big[ \var_{\text{circ}}(\hat{C})\Big] = \sum_{\bm\alpha \in A} \ex_{\{c_{\bm\alpha}\}}[c_{\bm\alpha}^2] \, \var_{\text{circ}}(\bm\alpha).
\end{split}
\end{align}
As such, the ratio of the expected variances, $\ex_{\{c_{\bm\alpha}\}} [\var_{\text{samp}}(\hat{C})] / \ex_{\{c_{\bm\alpha}\}}[\var_{\text{circ}}(\hat{C})]$, becomes a weighted average of the constituent monomial averages. A typical fixed-order PNP therefore inherits the $\mathrm{poly}(M)$ variance ratio of Eq.~\eqref{eq:pnm_variance_ratio_main} and continues to be trainable. In contrast, PNPs with a non-negligible set of monomials that scale with system size inherit their exponentially scaling variance ratio and are not trainable. Beyond these stochastically chosen PNPs, we note that highly structured PNPs can have coefficients that lead to cancellations that suppress either variance.

In Fig.~\ref{fig:pnp_variance_numerics}, we plot the sample variance and circuit variance for classes of low-weight PNPs; in (a) and (b), we consider all summing over all PNMs with weight $|\bm\alpha|\le 2$ and coefficient $c_{\bm\alpha} \sim \textrm{Uniform}[-1,1]$, while in (c) and (d), we take a set of $|A|=5M$ PNMs with weight $\le 4$. We see the sample variance for both PNMs tracks has the predicted scaling of $\mathrm{O}(|A|)$, going as $|A|\sim\mathrm{O}(M^2)$ in (a) and $|A|\sim\mathrm{O}(M)$ in (c). 

The circuit variances also increase over the range of system sizes considered, despite the circuit variance of each constituent PNM decreasing with $M$, as shown in Figs.~\ref{fig:fixed_pnm_scaling_analysis} and \ref{fig:n_alpha_plot_fixed_pnm_decay}. This behaviour can be understood as the increasing number of terms compensating for the polynomial decay of the individual PNM circuit variances. For example, in panel (b), each constituent PNM has a circuit variance that decreases with $M$, with similar behaviour to the $\alpha=2$ curve in Fig.~\ref{fig:n_alpha_plot_fixed_pnm_decay}. However, Eq.~\eqref{eq:pnp_random_sign_circuit_variance_main} shows that the coefficient-averaged circuit variance is obtained by summing the contributions from all $|A|$ constituent PNMs. Over the plotted range, the growth in $|A|$ therefore outweighs the decay of the individual contributions, resulting in an overall increase in the circuit variance.
\\
\\
\textit{Comparison to classical estimation.}
Previous literature has strongly suggested that efficiently trainable variational quantum circuits must admit efficient classical simulation, trading quantum advantage for trainability \cite{cerezo2025provable_absence}. We therefore study the complexity of classically spoofing the training of the trainable PNPs above, using one of the best-known general-purpose classical algorithms for observable estimation in linear optics, the Lim--Oh algorithm \cite{limClassicalAlgorithmsEstimating2025}. The algorithm efficiently estimates the expectation values of observables in product form over product input states, requiring a $\mathrm{O}\big(\|\hat{O}\|_2^2/ \varepsilon_{\textrm{additive}}^2\big)$ time-complexity to produce an estimate to additive error $\varepsilon_{\textrm{additive}}$. This is the classical analogue to the sample complexity of a quantum circuit. A generic PNP admits no such product form, so the algorithm must instead be applied term by term, estimating each constituent monomial and summing them, or via Monte-Carlo importance sampling over the most heavily weighted terms. In Appendix \ref{appendix:lim_oh}, we show that both strategies carry a complexity of $\mathrm{O}\big(|A|^2 /\varepsilon_{\textrm{additive}}^2\big)$. The Lim-Oh simulation method cannot exploit the cancellation of coefficients in the sample variance ratio as it must be applied to the constituent product observables individually. A quantum device sampling the polynomial observable directly does: for balanced coefficients the sample variance scales with $|A|$ by equation  \eqref{eq:pnp_random_sign_main}. This suggests a quadratic speed-up using quantum sampling versus Lim-Oh simulation methods for the task of estimating the loss function.

Since $|A|$ can grow rapidly with the maximum order of the polynomial, this separation can be relatively large even at modest system size. Moreover, in Sec.~\ref{subsec:nn_observables} we introduce neural network observables as an efficient means of implementing PNPs with exponentially many terms without ever expanding their constituent coefficients, a regime in which the term-wise classical simulation methods can be disadvantaged further. 

Other classical simulation methods generate samples that reproduce the statistics of an ideal boson-sampling distribution up to the $a^{
\text{th}}$ order correlators~\cite{villalonga2021efficient, dodd2025fast}. Therefore, for PNP observables satisfying $|\bm\alpha|\le a$, these samples reproduce the same expectation values relevant to $\hat{C} = \sum_{\bm\alpha\in A} c_{\bm\alpha} \hat{M}_{\bm\alpha}$. The number of generated samples required to estimate this expectation value therefore has the same scaling as the quantum sample complexity in Eq.
\eqref{eq:pnp_random_sign_main} or Eq.~\eqref{eq:pnp_random_sign_circuit_variance_main}, so there is no quantum sampling advantage with this approach. However, generating each classical sample has a classical cost $\mathrm{O}(M^a)$ for $a=\max_{\bm\alpha}\{|\bm\alpha|\}$~\cite{dodd2025fast}, leaving space for a potential polynomial quantum speed-up for sufficiently chosen $a$. Moreover, in their current form these methods require modification to accommodate photon-number-resolving observables and are formulated for Gaussian input states. Adapting them to the photonic VQA setting considered here will therefore be important for determining whether genuine polynomial quantum runtime advantage can persist within the trainable regime.

We therefore expect photonic variational circuits to exhibit a polynomial speed-up in the runtime of variational protocols on a large class of PNP observables when compared with the currently available classical algorithms. Conversely, the fact that the classical estimate remains polynomial-time motivates a train-on-classical, deploy-on-quantum regime~\cite{recio2025train, kolarovszki2026generative}. Here, the pre-training of a circuit can be performed classically with some overhead (either $|A|$ with the Lim-Oh approach~\cite{limClassicalAlgorithmsEstimating2025} or $M^{|\bm\alpha|}$ with the boson sampling emulator approach~\cite{villalonga2021efficient, dodd2025fast}).

\subsubsection{Neural network observables}
\label{subsec:nn_observables}
A natural way of implementing photon-number polynomials is to consider a neural network as an observable of a variational quantum circuit. Each photon number sample is passed in as the input to a multi-layered neural network. We take the activation functions in the neural network to be quadratic. This way, we can simply write out the action of the neural network as a multivariate polynomial, allowing us to apply previous results.

\begin{figure}[ht]
    \centering
    \hspace{0.8cm}
    \includegraphics[width=0.8\textwidth]{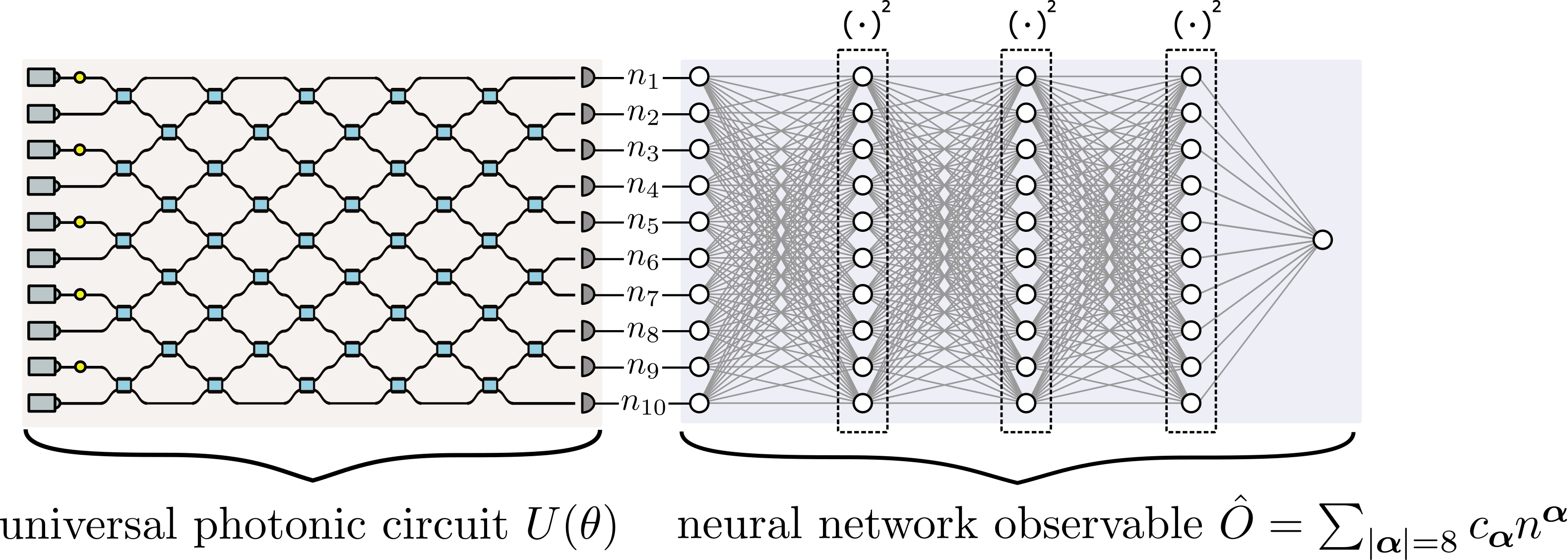}
    \caption{We show a 3-layer neural network observable implementing a photon-number polynomial operation. In red, we highlight a path that adds weight to one of the monomial terms.}
    \label{fig:nn_observable}
\end{figure}

Consider a network with L linear layers, each interleaved with a quadratic activation function, whose final layer has a single output neuron. Its action can be written as
\begin{equation}
    \mathrm{NN}(\bm{n}) = \sum_{|\bm{\alpha}| \le 2^L}  c_{\bm{\alpha}} \, \bm{n}^{\bm{\alpha}}.
\end{equation}
In words, the NN observable is equal to a PNP including all possible PNM terms up to order $2^{L}$. There are $\binom{2^L + M}{M}$ terms in this polynomial. For $2^L \ll M$, this scales as $\approx M^{2^L}/(2^L)!$. The coefficients $c_{\bm{\alpha}}$ are determined by the weights and biases of the linear layers.

While the polynomial representation grows exponentially in size, applying the forwards pass of a neural network to a sample has complexity $O(LM^2)$: we need to multiply the sample by $L$ size $M \times M$ matrices, and apply quadratic activation functions to the intermediary terms. This is a key advantage of the NN representation: it sidesteps the need to enumerate the full polynomial explicitly. By contrast, classical observable-estimation methods that require the observable to be expressed as a sum of product observables must first expand the network into this polynomial form, and then either estimate each term individually or use a weighted estimation scheme. In Appendix \ref{appendix:lim_oh}, we discuss various classical strategies for estimating polynomial observables.

\subsubsection{Output probability estimation}
Several proposals for boson sampling applications rely on estimating either a single output probability or a scaled aggregate of output probabilities. Although such observables are of theoretical interest, we show that they cannot be estimated to proportional error by a physical boson sampler with polynomial resources. This rules out their use in variational protocols, and further suggests that previous proposals relying on these observable estimates are unscalable without careful consideration of the magnitudes of probabilities being estimated\\
\\
\textit{Single-probability estimation}
A natural loss function to consider is one governed by a projection onto a state $\ketbra{\psi}{\psi}$. Indeed, estimating the expectation value $\tr\left(\rho(\theta) \ketbra{\psi}{\psi}\right)$ has been proposed as an application of boson sampling towards problems in graph theory \cite{bradler2018graph, mezher2023solving}. We will show that the trainability ratio of projector observables is exponentially large, illustrating their non-suitability for variational protocols.

As before, we will consider an input state with fixed photon number, and will set $\ket{\psi} = \ket{\mathbf{s}} = \ket{s_1, ..., s_M}$. In Appendix \ref{appendix:single_outcome_variance}, we compute the Haar-average moments of the projector observable, from which the variances follow,
\begin{align}
    \text{Var}_{\text{circ}}(\ketbra{\mathbf{s}}{\mathbf{s}}) &= (P_2(M,n) - 1)|\mathcal{H}|^{-2}, \\
    \text{Var}_{\text{samp}}(\ketbra{\mathbf{s}}{\mathbf{s}}) &= |\mathcal{H}|^{-1} \left(1 - P_2(M,n)|\mathcal{H}|^{-1}\right),
\end{align}
where $P_2(M,n)$ is the normalised average collision probability, shown to scale as $\Theta(n + n/M + 1)$ \cite{mhiri2026bosonsamplingdiluteregime}. We get a variance ratio:

\begin{align}
    \frac{\text{Var}_{\text{samp}}(\ketbra{\mathbf{s}}{\mathbf{s}})}{\text{Var}_{\text{circ}}(\ketbra{\mathbf{s}}{\mathbf{s}})} = \frac{|\mathcal{H}| - P_2(M,n)}{P_2(M,n) - 1} = \Theta\left(\frac{|\mathcal{H}|}{n + n/M + 1}\right).
\end{align}
With $|\mathcal{H}| = \binom{n+M-1}{n}$ growing exponentially in $n, M$, the difficulty of resolving the difference between two loss functions governed by the projector grows exponentially with problem size, ruling out its use in trainable variational protocols at scale. It may be natural to consider scaling the projector by a constant $c$ to normalise the circuit variance to be constant (indeed, this is done to relate estimated probabilities to permanents \cite{mezher2023solving}). Howeever, the variance ratio suggests that estimating such quantities to proportional precision remains exponentially hard -- constant factors that increase the circuit variance also increase the sample variance by the same factor, resulting in no change in the variance ratio. We expect this makes the scaling of applications relying on resolving output probabilities challenging without careful consideration of the circuit and projector. 

We note that this is the photonic analogue of a known gate-based circuit result: loss functions measuring fidelity to a target state produce barren plateaus even in shallow circuits \cite{cerezo2021cost}.\\
\\
\textit{Coarse-grained sampling}
To avoid the exponentially hard task of resolving a single output probability of a quantum circuit, it has been proposed to estimate the probability of a collection of output measurements, a regime known as `coarse-grained sampling' \cite{nikolopoulos2016decision, wang2016certification}. We show that, while the sample variance of such observables is indeed constant, the circuit variance is exponentially small. This limits the scalability of such regimes for variational protocols.

Furthermore, proposals for the use of coarse-grained boson sampling for cryptographic applications and proof-of-work schemes relying on resolving which set of output probabilities is larger are likely unscalable without careful circuit and observable selection.

A coarse-grained boson sampling observable is defined as a uniformly weighted sum of the single-outcome projectors as considered above, chosen from a set $S \subseteq \mathcal{H}_{n,M}$,
\begin{equation}
    \hat{O}_S = \sum_{\mathbf{s} \in S} \ketbra{\mathbf{s}}{\mathbf{s}}, \qquad S \subseteq \mathcal{H}_{n,M},
\end{equation}
The expectation value of this observable is the total probability mass assigned to the bin $S$. There have been several proposals which consider estimating such expectation values to decide which of several bins is most likely as a cryptographic one-way function that is hard to invert, but computable with a boson sampler \cite{singh2025proof,nikolopoulos2019cryptographic}. Following these proposals, we will consider bins comprising a constant fraction of outcomes, $|S| = |\mathcal{H}|/K$ for constant $K$.

Sampling the observable $\hat{O}_S$ with a Haar-randomly generated circuit is a Bernoulli variable with success probability $\mathbb{E}_\mu\big[\mathbb{P}(S|\theta)\big] = |S|/|\mathcal{H}| = 1/K$. As such, the total Haar-average variance is $\frac{1}{K}(1-\frac{1}{K})$, and the law of total variance gives the identity,
\begin{equation}
    \frac{1}{K}\left(1 - \frac{1}{K}\right) = \text{Var}_{\text{samp}}(\hat{O}_S) + \text{Var}_{\text{circ}}(\hat{O}_S).
\end{equation}
Sample and circuit variance of a constant-fraction coarse-grained output observable sum to a constant. We show below that the circuit variance is typically exponentially small, so we can conclude the sample variance is $\Theta(1)$.

Unlike for single outcome probabilities, the circuit variance of the coarse-grained observable depends on the choice of bin $S$, so we compute its average over sets $S$ drawn uniformly at random among the constant-fraction subsets of $\mathcal{H}$. In Appendix \ref{appendix:binning}, we derive the average circuit variance:
\begin{align}
\begin{split}
\label{eq:course_grained_results_main}
\mathbb{E}_{S}\left[\text{Var}_{\text{circ}}(\hat{O}_S)\right] \approx \frac{1}{K}\left(1 - \frac{1}{K}\right)\frac{P_2(M,n) - 1}{|\mathcal{H}|}.
\end{split}
\end{align}
As we mention previously, $P_2(M,n) = \Theta(n + n/M + 1)$ \cite{mhiri2026bosonsamplingdiluteregime}. As $|\mathcal{H}|$ is exponentially growing in $n,M$, the above result means that only an exponentially small fraction of choices of sets $S$ can have non-exponentially small circuit variance. As such, almost all choices of output aggregation concentrate around the mean and exhibit a barren plateau.

While such coarse-grained observables have a constant sample variance unlike single output probability estimation, the exponentially small circuit variance means their trainability ratio remains exponentially large: an exponentially large number of samples is needed to resolve the value of such observables with sufficient precision. This does not preclude carefully structured output binning from admitting polynomial variance ratios, but our results show that such binnings constitute an exponentially small fraction of all choices, and identifying them is a nontrivial task that proposals relying on coarse-grained estimates must address.

\section{Numerical simulations and photonic experiments} 
\label{sec:numerical_simulations_and_photonic_experiments}
\subsection{Single-photon input states}

We consider the experimentally relevant regime of training circuits with an input state of $n = M/2$ photons being propagated over $M$ modes. As discussed in Sec.~\ref{subsubsec:pnps}, for such a regime, the number of samples needed to resolve gradients for each parameter scales polynomially in system size for a fixed-order photon-number polynomial.

\begin{figure}[ht]
    \centering
    \includegraphics[width=0.9\textwidth]{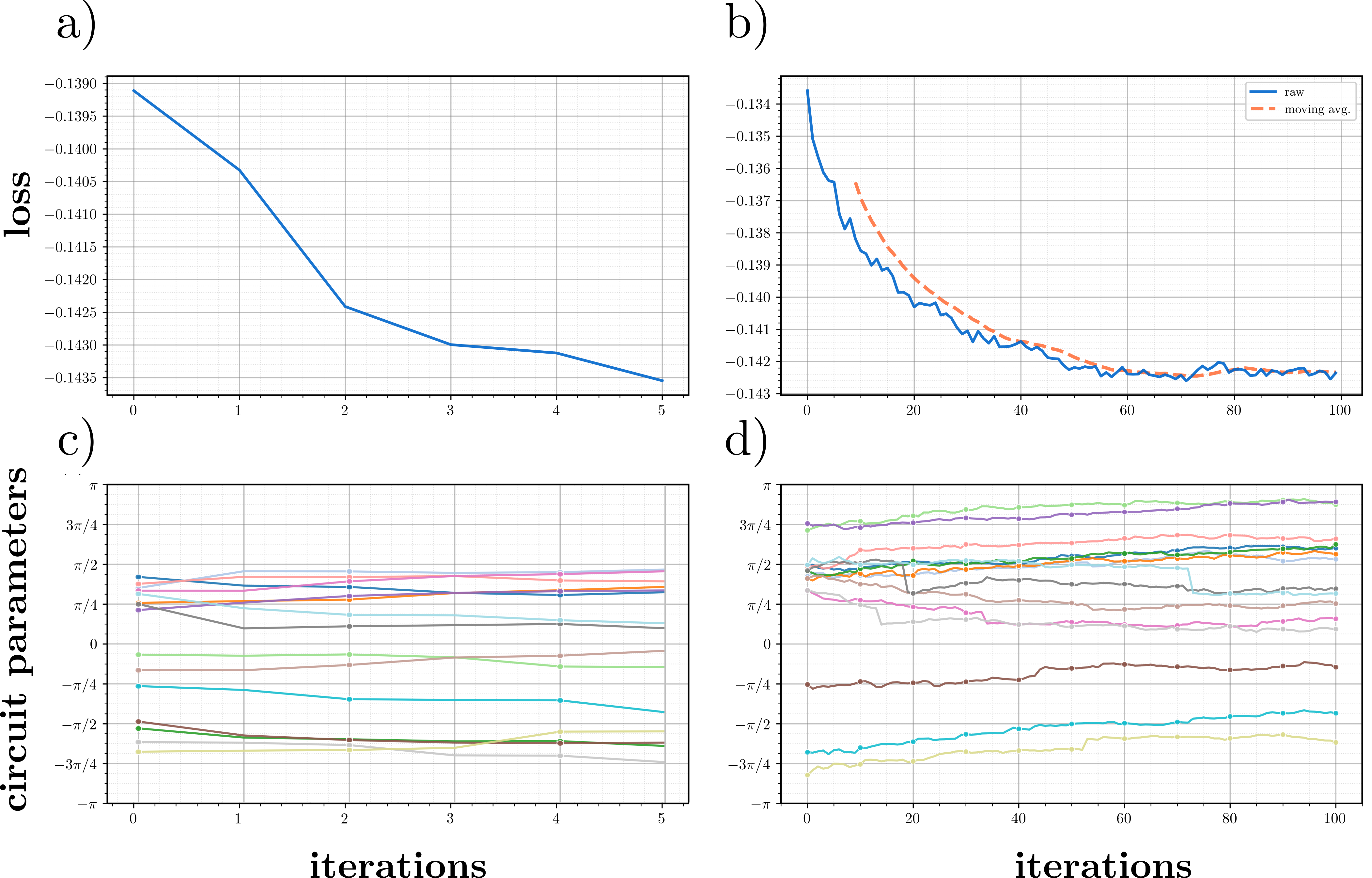}
    \caption{Training of simulated photonic circuits on a 3-layer neural network observable. (a,c) a 13-photon, 26-mode configuration trained for 6 iterations, and (b,d) a 10-photon, 20-mode configuration trained for 100 iterations. 10,000 samples are taken for each gradient estimate using the generalized parameter shift rule (GPSR) taken up to the 8th frequency. (c) and (d) illustrate the evolution of 15 randomly chosen circuit parameters, showing smooth trajectories.}
    \label{fig:simulation_training}
\end{figure}

We train simulated photonic circuits arranged in a Clements scheme beam splitter configuration, initialised at a Haar-random unitary. We take the observable to be a randomly initialised 3-layer neural network, giving a total observable that can be written as a photon-number polynomial with all terms up to order 8. Our results show that a modest sample budget of 10,000 samples per gradient update remains sufficient for training intermediate scale regimes. Due to the prohibitive simulation complexity of photonic circuits, we present training runs for only 20 and 26 modes.

\subsection{Experiments on photonic hardware}
We demonstrate training photonic hardware with the ORCA PT-2. We illustrate that near-term photonic hardware successfully trains with high accuracy on loss functions given by low-order photon-number polynomials. The ORCA PT-2 is a time-bin encoded photonic quantum processor. It samples from circuits with two cascading layers of beam splitters, corresponding to two optical delay lines. While such circuits are not universal and cannot realise arbitrary Haar-random unitaries, sampling from their output distribution remains classically hard \cite{deshpande2022quantum, novak2025boundaries}. Moreover, as they are significantly shallower than universal linear-optical circuits, they are generally expected to be easier to train, with training complexity no greater than that of the deeper architectures considered in our analysis. We illustrate a schematic of the hardware in Fig.~\ref{fig:PT2_system}.

We train the PT-2 on a 100-term photon-number polynomial up to order 2, with randomly chosen weights between $$[-5,5]$$. We post-select samples to have a fixed number of photons. Under assumptions of uniform loss, even with Gaussian input states, such a sampling regime can be equivalently described through an input superposition state of fixed photon number. The theory developed in Sec.~\ref{sec:trainability} applies directly to such a regime. In Fig.~\ref{fig:PT2_training}, we see that with a modest budget of 1000 samples per gradient estimate, the PT-2 can efficiently train on a large low-order photon-number polynomial.

\begin{figure}[t]
    \centering
    \hspace{0.9cm}
    \includegraphics[width=0.9\textwidth]{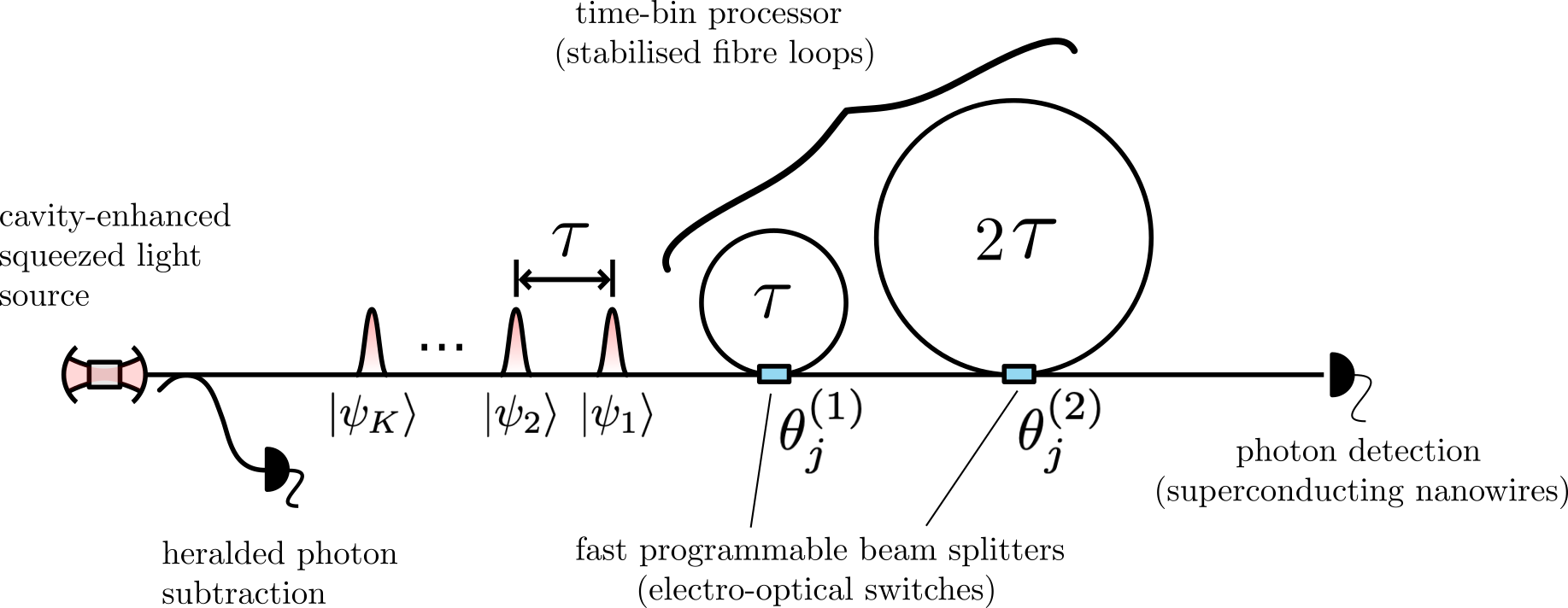}
    \caption{Illustration of the PT-2 photonic system. Squeezed states of light are generated at time intervals separated by delay $\tau$. Samples are accepted when a photon subtraction has been heralded on at least one input squeezed state, resulting in a non-Gaussian input. The input state is interfered between modes through two sequential delay loops. An array of nanowire detectors read out pseudo-photon number resolving measurements.}
    \label{fig:PT2_system}
\end{figure}

\begin{figure}[t]
    \centering
    \includegraphics[width=0.9\textwidth]{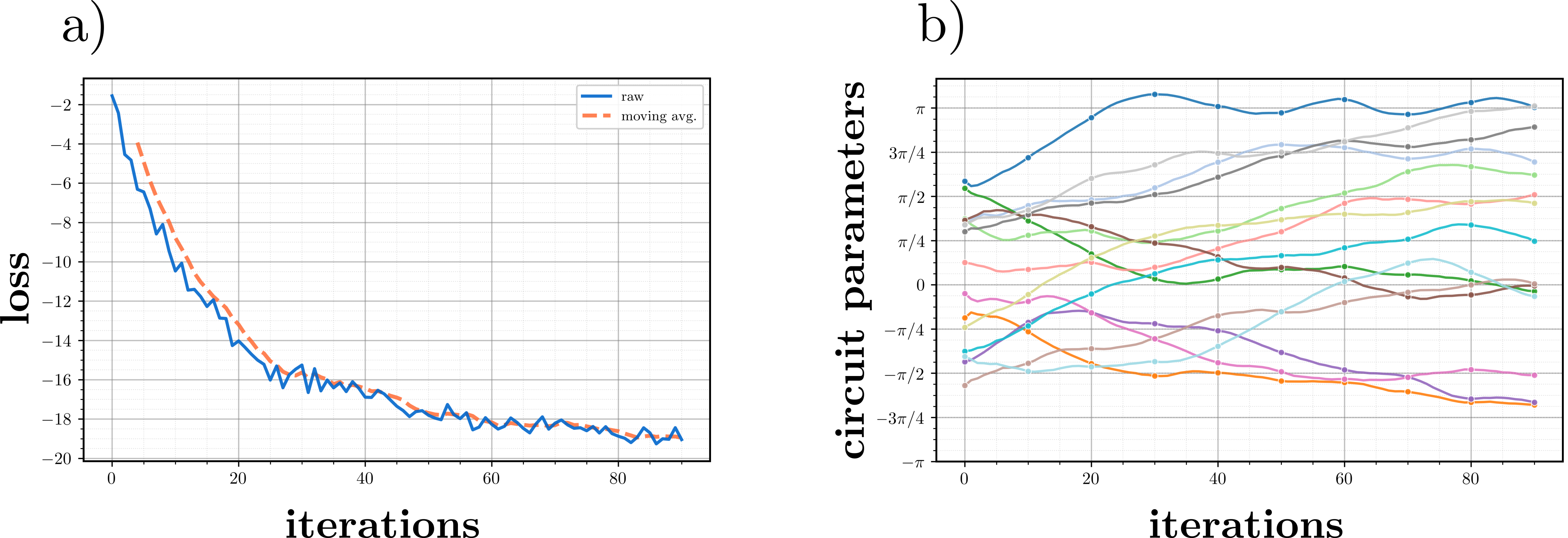}
    \caption{Training of a PT-2 system to minimise loss. The PT-2 is run for 48 modes per sample, with an alternating 1,0 input state of 24 squeezed states generated. Samples are post-selected on having exactly 6 photons present. The loss function is set with respect to minimise a 100-term photon-number polynomial up to order 2, with random weights chosen between [-5,5] for each term. 1000 samples are taken for each gradient estimate, which is done through the parameter shift rule. Figure a) shows loss convergence smoothly towards a local minimum. Figure b) illustrates the smooth training of a randomly chosen set of 15 circuit parameters.}
    \label{fig:PT2_training}
\end{figure}

This result motivates training near-term photonic system on loss functions governed by the minimisation of a low-order photon-number polynomial observable. Indeed, many relevant classes of observables, such as those arising from quadratic unconstrained binary optimisation (QUBO) problems, can be described as order two photon-number polynomials, and previous proposals consider training photonic circuits on such objectives \cite{makarovskiy2025binary, cazalis2023gaussian}. It remains an open question to quantify the effect of experimental imperfections such as errors and parameter jitter on the trainability of photonic circuits; photon losses are expected to be problematic as systems increase in size, so understanding how post-selecting on error-free outcomes or implementing photonic error mitigation~\cite{lee2022error,taylor2024quantum} impact total training time will be important.
Nonetheless, these preliminary experimental results indicate that NISQ photonic systems can be trained despite imperfections.

\section{Discussion and conclusion}
In this work, we introduced the ratio between sample variance and circuit variance as a natural measure of the complexity of training variational quantum circuits. We showed that careful consideration of both variances is necessary in order to determine trainability. Using this framework, we investigated the trainability of a broad class of photonic observables, identifying fixed-order photon-number polynomials as efficiently trainable: a number of circuit samples growing polynomially with system size is sufficient to train photonic circuits to minimise the expected value of such observables. We also showed that observables based on output probability estimation and high-order photon-number polynomials are not suitable for scalable variational protocols, as they have exponentially growing sample demands for loss resolution.

We further demonstrated that fixed-order photon-number polynomials admit efficient implementations through neural network observables, providing a practical and expressive class of loss functions for variational photonic protocols. We showed that these observables appear to only require only polynomially scaling sampling resources for training and analysed the complexity of their classical simulation, identifying regimes in which quantum devices offer a polynomial advantage over existing classical methods. In Sec.~\ref{sec:numerical_simulations_and_photonic_experiments}, we validated these predictions through successful variational circuit training in both simulation and on near-term photonic hardware. Together, these results provide formal support for the empirical scalability of recent photonic quantum machine learning experiments and identify a broad family of observables for which variational photonic quantum circuits remain trainable at large system sizes.

Several theoretical questions remain open. Our analysis relies on a conjectured scaling of the gap between the first and second Haar-average moments, supported by numerical evidence but not yet established analytically. We have also restricted our treatment to fixed-photon-number input states. Extending these results to superposition states or states with indefinite photon number would provide stronger theoretical foundations for experimentally relevant regimes such as Gaussian input states. Similarly, photon losses are the dominant error channel for linear optical networks and so understanding how these errors can change the trainability landscape is important in situations where one cannot post-select on the noise-free outputs.
Finally, while our variance analysis establishes when gradients can be estimated efficiently, it does not address other optimisation challenges that arise during training, such as avoiding local minima or understanding the influence of hardware imperfections.

Our results establish a theoretical foundation for understanding trainability in variational photonic quantum computing and provide practical guidance for the design of scalable photonic quantum machine learning protocols. By identifying observable classes that simultaneously avoid barren plateaus, remain sample-efficient, and admit favourable complexity compared with current classical estimation methods, this work helps bridge the gap between recent large-scale photonic demonstrations and the theoretical understanding of variational photonic quantum algorithms. We expect these results to provide a promising direction for the development of photonic quantum machine learning algorithms capable of delivering practical quantum advantage.

\section*{Acknowledgements}
We thank Alex Jones, Josh Nunn and the rest of the team at ORCA Computing for insightful discussions and valuable comments on the manuscript. MSK, IAW and ZL acknowledge funding from the UK EPSRC through EP/T001062/1, EP/Z53318X/1, EP/W032643/1, and EP/T517811/1. MSK thanks the KIST through Open Innovation fund and the National Research Foundation of Korea grant funded by the Korean government (MSIT) (No. RS-2024-00413957) for their financial support.  IAW is supported by the UK Research \& Innovation (project number: MR/W011794/1). ZL acknowledges partial funding support from ORCA Computing.

\paragraph{Note added.}
During the preparation of this manuscript, we became aware of independent and related work by Monbroussou et al. that studies the trainability of linear optical circuits \cite{monbroussou2026classical}.

\printbibliography

@inproceedings{aaronson2011computational,
    author = {Aaronson, Scott and Arkhipov, Alex},
    title = {The computational complexity of linear optics},
    year = {2011},
    isbn = {9781450306911},
    publisher = {Association for Computing Machinery},
    address = {New York, NY, USA},
    url = {https://doi.org/10.1145/1993636.1993682},
    doi = {10.1145/1993636.1993682},
    booktitle = {Proceedings of the Forty-Third Annual ACM Symposium on Theory of Computing},
    pages = {333–342},
    numpages = {10},
    location = {San Jose, California, USA},
    series = {STOC '11}
}

@article{larocca2025barren,
  title={Barren plateaus in variational quantum computing},
  author={Larocca, Martin and Thanasilp, Supanut and Wang, Samson and Sharma, Kunal and Biamonte, Jacob and Coles, Patrick J and Cincio, Lukasz and McClean, Jarrod R and Holmes, Zo{\"e} and Cerezo, Marco},
  journal={Nature Reviews Physics},
  pages={1--16},
  year={2025},
  publisher={Nature Publishing Group UK London}
}

@article{facelli2024exactgradients,
  title={Exact gradients for linear optics with single photons},
  author={Facelli, Giorgio and Roberts, David D and Wallner, Hugo and Makarovskiy, Alexander and Holmes, Zo{\"e} and Clements, William R},
  journal={arXiv preprint arXiv:2409.16369},
  year={2024}
}

@article{clements2016optimal,
  title={Optimal design for universal multiport interferometers},
  author={Clements, William R and Humphreys, Peter C and Metcalf, Benjamin J and Kolthammer, W Steven and Walmsley, Ian A},
  journal={Optica},
  volume={3},
  number={12},
  pages={1460--1465},
  year={2016},
  publisher={Optical Society of America}
}

@article{collins2006integration,
  title={Integration with respect to the Haar measure on unitary, orthogonal and symplectic group},
  author={Collins, Beno{\^\i}t and {\'S}niady, Piotr},
  journal={Communications in Mathematical Physics},
  volume={264},
  number={3},
  pages={773--795},
  year={2006},
  publisher={Springer}
}

@article{reck1994experimental,
  title = {Experimental realization of any discrete unitary operator},
  author = {Reck, Michael and Zeilinger, Anton and Bernstein, Herbert J. and Bertani, Philip},
  journal = {Phys. Rev. Lett.},
  volume = {73},
  issue = {1},
  pages = {58--61},
  numpages = {0},
  year = {1994},
  publisher = {American Physical Society},
  doi = {10.1103/PhysRevLett.73.58},
  url = {https://link.aps.org/doi/10.1103/PhysRevLett.73.58}
}

@article{arrasmith2022equivalence,
  title={Equivalence of quantum barren plateaus to cost concentration and narrow gorges},
  author={Arrasmith, Andrew and Holmes, Zoe and Cerezo, M., and Coles, Patrick J.},
  journal={Quantum Sci. Technol.},
  volume={7},
  number={045015},
  year={2022}}

@article{cerezo2025provable_absence,
  title={Does provable absence of barren plateaus imply classical simulability?},
  author={Cerezo, Marco and Larocca, Martin and Garc{\'\i}a-Mart{\'\i}n, Diego and Diaz, Nelson L and Braccia, Paolo and Fontana, Enrico and Rudolph, Manuel S and Bermejo, Pablo and Ijaz, Aroosa and Thanasilp, Supanut and others},
  journal={Nature Communications},
  volume={16},
  number={1},
  pages={7907},
  year={2025},
  publisher={Nature Publishing Group UK London}
}

@misc{limClassicalAlgorithmsEstimating2025,
  title = {Classical Algorithms for Estimating Expectation Values in Linear-Optical Circuits},
  author = {Lim, Youngrong and Oh, Changhun},
  year = 2025,
  number = {arXiv:2502.12882},
  eprint = {2502.12882},
  primaryclass = {quant-ph},
  publisher = {arXiv},
  doi = {10.48550/arXiv.2502.12882},
  urldate = {2025-12-03},
  archiveprefix = {arXiv},
}

@article{dodd2025fast,
  title={A fast and frugal Gaussian Boson Sampling emulator},
  author={Dodd, Tom and Mart{\'\i}nez-Cifuentes, Javier and Brown, Oliver Thomson and Quesada, Nicol{\'a}s and Garc{\'\i}a-Patr{\'o}n, Ra{\'u}l},
  journal={arXiv preprint arXiv:2511.14923},
  year={2025}
}

@article{villalonga2021efficient,
  title={Efficient approximation of experimental Gaussian boson sampling},
  author={Villalonga, Benjamin and Niu, Murphy Yuezhen and Li, Li and Neven, Hartmut and Platt, John C and Smelyanskiy, Vadim N and Boixo, Sergio},
  journal={arXiv preprint arXiv:2109.11525},
  year={2021}
}

@article{lee2022error,
  title={Error-mitigated photonic variational quantum eigensolver using a single-photon ququart},
  author={Lee, Donghwa and Lee, Jinil and Hong, Seongjin and Lim, Hyang-Tag and Cho, Young-Wook and Han, Sang-Wook and Shin, Hyundong and ur Rehman, Junaid and Kim, Yong-Su},
  journal={Optica},
  volume={9},
  number={1},
  pages={88--95},
  year={2022},
  publisher={Optical Society of America}
}

@article{taylor2024quantum,
  title = {Quantum error cancellation in photonic systems: Undoing photon losses},
  author = {Taylor, Adam and Bressanini, Gabriele and Kwon, Hyukjoon and Kim, M. S.},
  journal = {Phys. Rev. A},
  volume = {110},
  issue = {2},
  pages = {022622},
  numpages = {19},
  year = {2024},
  publisher = {American Physical Society},
  doi = {10.1103/PhysRevA.110.022622},
  url = {https://link.aps.org/doi/10.1103/PhysRevA.110.022622}
}

@article{collins2014integration,
  title={Integration of invariant matrices and moments of inverses of Ginibre and Wishart matrices},
  author={Collins, Benoit and Matsumoto, Sho and Saad, Nadia},
  journal={Journal of Multivariate Analysis},
  volume={126},
  pages={1-13},
  year={2014}
}

@misc{mhiri2026bosonsamplingdiluteregime,
      title={Boson sampling beyond the dilute regime: second moments and anti-concentration}, 
      author={Hela Mhiri and Hugo Thomas and Léo Monbroussou and Ulysse Chabaud and Zoë Holmes and Elham Kashefi},
      year={2026},
      eprint={2604.14323},
      archivePrefix={arXiv},
      primaryClass={quant-ph},
      url={https://arxiv.org/abs/2604.14323}, 
}

@misc{taylor2025optimal,
      title={Optimal Quantum Information Transmission Under a Continuous-Variable Erasure Channel}, 
      author={Taylor, Adam and Hanks, Michael and Kwon, Hyukjoon and Kim, M. S.},
      year={2025},
      eprint={2510.01424},
      archivePrefix={arXiv},
      primaryClass={quant-ph},
      url={https://arxiv.org/abs/2510.01424}, 
}

@article{belovas2023asymptotic,
url = {https://doi.org/10.1515/ms-2023-0026},
title = {The Asymptotics of the Geometric Polynomials},
title = {},
author = {Igoris Belovas},
pages = {335--342},
volume = {73},
number = {2},
journal = {Mathematica Slovaca},
doi = {doi:10.1515/ms-2023-0026},
year = {2023},
lastchecked = {2026-07-19}
}

@article{cerezo2021variational,
  title={Variational quantum algorithms},
  author={Cerezo, Marco and Arrasmith, Andrew and Babbush, Ryan and Benjamin, Simon C and Endo, Suguru and Fujii, Keisuke and McClean, Jarrod R and Mitarai, Kosuke and Yuan, Xiao and Cincio, Lukasz and others},
  journal={Nature Reviews Physics},
  volume={3},
  number={9},
  pages={625--644},
  year={2021},
  publisher={Nature Publishing Group UK London}
}

@article{gong2025enhanced,
  title={Enhanced Image Recognition Using Gaussian Boson Sampling},
  author={Gong, Si-Qiu and Chen, Ming-Cheng and Liu, Hua-Liang and Su, Hao and Gu, Yi-Chao and Tang, Hao-Yang and Jia, Meng-Hao and Deng, Yu-Hao and Wei, Qian and Wang, Hui and others},
  journal={arXiv preprint arXiv:2506.19707},
  year={2025}
}

@article{cimini2026large,
  title={Large-scale quantum reservoir computing using a Gaussian Boson Sampler},
  author={Cimini, Valeria and Sohoni, Mandar M and Presutti, Federico and Malia, Benjamin K and Ma, Shi-Yuan and Yanagimoto, Ryotatsu and Wang, Tianyu and Onodera, Tatsuhiro and Wright, Logan G and McMahon, Peter L},
  journal={npj Quantum Information},
  year={2026},
  publisher={Nature Publishing Group UK London}
}

@article{ragone2024lie,
  title={A lie algebraic theory of barren plateaus for deep parameterized quantum circuits},
  author={Ragone, Michael and Bakalov, Bojko N and Sauvage, Fr{\'e}d{\'e}ric and Kemper, Alexander F and Ortiz Marrero, Carlos and Larocca, Mart{\'\i}n and Cerezo, Marco},
  journal={Nature Communications},
  volume={15},
  number={1},
  pages={7172},
  year={2024},
  publisher={Nature Publishing Group UK London}
}

@article{wang2021noise,
  title={Noise-induced barren plateaus in variational quantum algorithms},
  author={Wang, Samson and Fontana, Enrico and Cerezo, Marco and Sharma, Kunal and Sone, Akira and Cincio, Lukasz and Coles, Patrick J},
  journal={Nature communications},
  volume={12},
  number={1},
  pages={6961},
  year={2021},
  publisher={Nature Publishing Group UK London}
}

@article{mcclean2018barren,
  title={Barren plateaus in quantum neural network training landscapes},
  author={McClean, Jarrod R and Boixo, Sergio and Smelyanskiy, Vadim N and Babbush, Ryan and Neven, Hartmut},
  journal={Nature communications},
  volume={9},
  number={1},
  pages={4812},
  year={2018},
  publisher={Nature Publishing Group UK London}
}

@article{jones2019variational,
  title={Variational quantum algorithms for discovering Hamiltonian spectra},
  author={Jones, Tyson and Endo, Suguru and McArdle, Sam and Yuan, Xiao and Benjamin, Simon C},
  journal={Physical Review A},
  volume={99},
  number={6},
  pages={062304},
  year={2019},
  publisher={APS}
}

@article{mcardle2020quantum,
  title={Quantum computational chemistry},
  author={McArdle, Sam and Endo, Suguru and Aspuru-Guzik, Al{\'a}n and Benjamin, Simon C and Yuan, Xiao},
  journal={Reviews of Modern Physics},
  volume={92},
  number={1},
  pages={015003},
  year={2020},
  publisher={APS}
}

@article{abbas2021power,
  title={The power of quantum neural networks},
  author={Abbas, Amira and Sutter, David and Zoufal, Christa and Lucchi, Aur{\'e}lien and Figalli, Alessio and Woerner, Stefan},
  journal={Nature computational science},
  volume={1},
  number={6},
  pages={403--409},
  year={2021},
  publisher={Nature Publishing Group US New York}
}

@article{farhi2014quantum,
  title={A quantum approximate optimization algorithm},
  author={Farhi, Edward and Goldstone, Jeffrey and Gutmann, Sam},
  journal={arXiv preprint arXiv:1411.4028},
  year={2014}
}

@article{shor1999polynomial,
  title={Polynomial-time algorithms for prime factorization and discrete logarithms on a quantum computer},
  author={Shor, Peter W},
  journal={SIAM review},
  volume={41},
  number={2},
  pages={303--332},
  year={1999},
  publisher={SIAM}
}

@article{bradler2018graph,
  title={Graph isomorphism and Gaussian boson sampling},
  author={Br{\'a}dler, Kamil and Friedland, Shmuel and Izaac, Josh and Killoran, Nathan and Su, Daiqin},
  journal={arXiv preprint arXiv:1810.10644},
  year={2018}
}

@article{mezher2023solving,
  title={Solving graph problems with single photons and linear optics},
  author={Mezher, Rawad and Carvalho, Ana Filipa and Mansfield, Shane},
  journal={Physical Review A},
  volume={108},
  number={3},
  pages={032405},
  year={2023},
  publisher={APS}
}

@article{monbroussou2025towards,
  title = {Toward quantum advantage with photonic state injection},
  author = {Monbroussou, L\'eo and Mamon, Eliott Z. and Thomas, Hugo and Yacoub, Verena and Chabaud, Ulysse and Kashefi, Elham},
  journal = {Phys. Rev. Res.},
  volume = {7},
  issue = {3},
  pages = {033051},
  numpages = {14},
  year = {2025},
  publisher = {American Physical Society},
  doi = {10.1103/PhysRevResearch.7.033051},
  url = {https://link.aps.org/doi/10.1103/PhysRevResearch.7.033051}
}

@article{trudeau2026parameter,
  title={Parameter-Shift Rules for Gradients in Boson Sampling Experiments},
  author={Trudeau, Marius and Emeriau, Pierre-Emmanuel and Quesada, Nicol{\'a}s},
  journal={arXiv preprint arXiv:2607.15160},
  year={2026}
}

@inproceedings{cazalis2023gaussian,
  title={Gaussian Boson Sampling for binary optimization},
  author={Cazalis, Jean and Chai, Yahui and Jansen, Karl and K{\"u}hn, Stefan and Shah, Tirth},
  booktitle={2023 IEEE International Conference on Quantum Computing and Engineering (QCE)},
  volume={2},
  pages={332--333},
  year={2023},
  organization={IEEE}
}

@article{efrom1981jackknife,
author = {B. Efron and C. Stein},
title = {{The Jackknife Estimate of Variance}},
volume = {9},
journal = {The Annals of Statistics},
number = {3},
publisher = {Institute of Mathematical Statistics},
pages = {586 -- 596},
year = {1981},
doi = {10.1214/aos/1176345462},
URL = {https://doi.org/10.1214/aos/1176345462}
}

@article{russell2017direct,
  title={Direct dialling of Haar random unitary matrices},
  author={Russell, Nicholas J and Chakhmakhchyan, Levon and O’Brien, Jeremy L and Laing, Anthony},
  journal={New journal of physics},
  volume={19},
  number={3},
  pages={033007},
  year={2017},
  publisher={IOP Publishing}
}

@article{bobkov1999isoperimetric,
  title={Isoperimetric and analytic inequalities for log-concave probability measures},
  journal={The Annals of Probability},
  volume={27},
  number={4},
  pages={1903--1921},
  year={1999},
  publisher={Institute of Mathematical Statistics}
}

@article{novak2025boundaries,
  title={Boundaries for quantum advantage with single photons and loop-based time-bin interferometers},
  author={Nov{\'a}k, Samo and Roberts, David D and Makarovskiy, Alexander and Garc{\'\i}a-Patr{\'o}n, Ra{\'u}l and Clements, William R},
  journal={Quantum},
  volume={9},
  pages={1915},
  year={2025},
  publisher={Verein zur F{\"o}rderung des Open Access Publizierens in den Quantenwissenschaften}
}

@article{makarovskiy2025binary,
  title={A Binary Optimisation Algorithm for Near-Term Photonic Quantum Processors},
  author={Makarovskiy, Alexander and Slysz, Mateusz and Siera, Dawid and Farnsworth, Thorin and Clements, William R and Rydlichowski, Piotr and Kurowski, Krzysztof and others},
  journal={arXiv preprint arXiv:2510.08274},
  year={2025}
}

@article{cardin2024photon,
  title={Photon-number moments and cumulants of Gaussian states},
  author={Cardin, Yanic and Quesada, Nicol{\'a}s},
  journal={Quantum},
  volume={8},
  pages={1521},
  year={2024},
  publisher={Verein zur F{\"o}rderung des Open Access Publizierens in den Quantenwissenschaften}
}

@article{hoch2025variational,
  title={Variational approach to photonic quantum circuits via the parameter shift rule},
  author={Hoch, Francesco and Rodari, Giovanni and Giordani, Taira and Perret, Paul and Spagnolo, Nicol{\`o} and Carvacho, Gonzalo and Pentangelo, Ciro and Piacentini, Simone and Crespi, Andrea and Ceccarelli, Francesco and others},
  journal={Physical Review Research},
  volume={7},
  number={2},
  pages={023227},
  year={2025},
  publisher={APS}
}

@article{pappalardo2025photonic,
  title={Photonic parameter-shift rule: Enabling gradient computation for photonic quantum computers},
  author={Pappalardo, Axel and Emeriau, Pierre-Emmanuel and de Felice, Giovanni and Ventura, Brian and Jaunin, Hugo and Yeung, Richie and Coecke, Bob and Mansfield, Shane},
  journal={Physical Review A},
  volume={111},
  number={3},
  pages={032429},
  year={2025},
  publisher={APS}
}

@article{preskill2018quantum,
  title={Quantum computing in the NISQ era and beyond},
  author={Preskill, John},
  journal={Quantum},
  volume={2},
  pages={79},
  year={2018},
  publisher={Verein zur F{\"o}rderung des Open Access Publizierens in den Quantenwissenschaften}
}

@article{zhou2020quantum,
  title={Quantum approximate optimization algorithm: Performance, mechanism, and implementation on near-term devices},
  author={Zhou, Leo and Wang, Sheng-Tao and Choi, Soonwon and Pichler, Hannes and Lukin, Mikhail D},
  journal={Physical Review X},
  volume={10},
  number={2},
  pages={021067},
  year={2020},
  publisher={APS}
}

@article{biamonte2017quantum,
  title={Quantum machine learning},
  author={Biamonte, Jacob and Wittek, Peter and Pancotti, Nicola and Rebentrost, Patrick and Wiebe, Nathan and Lloyd, Seth},
  journal={Nature},
  volume={549},
  number={7671},
  pages={195--202},
  year={2017},
  publisher={Nature Publishing Group UK London}
}

@article{bacarreza2025quantum,
  title={Quantum latent distributions in deep generative models},
  author={Bacarreza, Omar and Farnsworth, Thorin and Makarovskiy, Alexander and Wallner, Hugo and Hicks, Tessa and Sempere-Llagostera, Santiago and Price, John and Francis-Jones, Robert JA and Clements, William R},
  journal={arXiv preprint arXiv:2508.19857},
  year={2025}
}

@article{mcclean2016theory,
  title={The theory of variational hybrid quantum-classical algorithms},
  author={McClean, Jarrod R and Romero, Jonathan and Babbush, Ryan and Aspuru-Guzik, Al{\'a}n},
  journal={New Journal of Physics},
  volume={18},
  number={2},
  pages={023023},
  year={2016},
  publisher={IOP Publishing}
}

@article{peruzzo2014variational,
  title={A variational eigenvalue solver on a photonic quantum processor},
  author={Peruzzo, Alberto and McClean, Jarrod and Shadbolt, Peter and Yung, Man-Hong and Zhou, Xiao-Qi and Love, Peter J and Aspuru-Guzik, Al{\'a}n and O’brien, Jeremy L},
  journal={Nature communications},
  volume={5},
  number={1},
  pages={4213},
  year={2014},
  publisher={Nature Publishing Group UK London}
}

@article{gidney2021factor,
  title={How to factor 2048 bit RSA integers in 8 hours using 20 million noisy qubits},
  author={Gidney, Craig and Eker{\aa}, Martin},
  journal={Quantum},
  volume={5},
  pages={433},
  year={2021},
  publisher={Verein zur F{\"o}rderung des Open Access Publizierens in den Quantenwissenschaften}
}

@article{wierichs2022general,
  title={General parameter-shift rules for quantum gradients},
  author={Wierichs, David and Izaac, Josh and Wang, Cody and Lin, Cedric Yen-Yu},
  journal={Quantum},
  volume={6},
  pages={677},
  year={2022},
  publisher={Verein zur F{\"o}rderung des Open Access Publizierens in den Quantenwissenschaften}
}

@article{monbroussou2025photonic,
  title={Photonic quantum convolutional neural networks with adaptive state injection},
  author={Monbroussou, L{\'e}o and Polacchi, Beatrice and Yacoub, Verena and Caruccio, Eugenio and Rodari, Giovanni and Hoch, Francesco and Carvacho, Gonzalo and Spagnolo, Nicol{\`o} and Giordani, Taira and Bossi, Mattia and others},
  journal={Advanced Photonics},
  volume={7},
  number={6},
  pages={066012--066012},
  year={2025},
  publisher={Society of Photo-Optical Instrumentation Engineers}
}

@article{wang2016certification,
  title={Certification of boson sampling devices with coarse-grained measurements},
  author={Wang, Sheng-Tao and Duan, Lu-Ming},
  journal={arXiv preprint arXiv:1601.02627},
  year={2016}
}

@article{nikolopoulos2019cryptographic,
  title={Cryptographic one-way function based on boson sampling: GM Nikolopoulos},
  author={Nikolopoulos, Georgios M},
  journal={Quantum Information Processing},
  volume={18},
  number={8},
  pages={259},
  year={2019},
  publisher={Springer}
}

@article{nikolopoulos2016decision,
  title={Decision and function problems based on boson sampling},
  author={Nikolopoulos, Georgios M and Brougham, Thomas},
  journal={Physical Review A},
  volume={94},
  number={1},
  pages={012315},
  year={2016},
  publisher={APS}
}

@article{singh2025proof,
  title={Proof-of-work consensus by quantum sampling},
  author={Singh, Deepesh and Muraleedharan, Gopikrishnan and Fu, Boxiang and Cheng, Chen-Mou and Roussy Newton, Nicolas and Rohde, Peter P and Brennen, Gavin K},
  journal={Quantum Science and Technology},
  volume={10},
  number={2},
  pages={025020},
  year={2025},
  publisher={IOP Publishing}
}

@article{Chabaud2021quantummachine,
  doi = {10.22331/q-2021-07-05-496},
  url = {https://doi.org/10.22331/q-2021-07-05-496},
  title = {Quantum machine learning with adaptive linear optics},
  author = {Chabaud, Ulysse and Markham, Damian and Sohbi, Adel},
  journal = {{Quantum}},
  issn = {2521-327X},
  publisher = {{Verein zur F{\"{o}}rderung des Open Access Publizierens in den Quantenwissenschaften}},
  volume = {5},
  pages = {496},
  year = {2021}
}

@article{hoch2025quantum,
  title={Quantum machine learning with adaptive boson sampling via post-selection},
  author={Hoch, Francesco and Caruccio, Eugenio and Rodari, Giovanni and Francalanci, Tommaso and Suprano, Alessia and Giordani, Taira and Carvacho, Gonzalo and Spagnolo, Nicol{\`o} and Koudia, Seid and Proietti, Massimiliano and others},
  journal={Nature Communications},
  volume={16},
  number={1},
  pages={902},
  year={2025},
  publisher={Nature Publishing Group UK London}
}

@article{bell2021further,
  title={Further compactifying linear optical unitaries},
  author={Bell, Bryn A and Walmsley, Ian A},
  journal={Apl Photonics},
  volume={6},
  number={7},
  year={2021},
  publisher={AIP Publishing}
}

@article{bouland2014generation,
  title={Generation of universal linear optics by any beam splitter},
  author={Bouland, Adam and Aaronson, Scott},
  journal={Physical Review A},
  volume={89},
  number={6},
  pages={062316},
  year={2014},
  publisher={APS}
}

@article{hangleiter2023computational,
  title={Computational advantage of quantum random sampling},
  author={Hangleiter, Dominik and Eisert, Jens},
  journal={Reviews of Modern Physics},
  volume={95},
  number={3},
  pages={035001},
  year={2023},
  publisher={APS}
}

@article{deshpande2022quantum,
  title={Quantum computational advantage via high-dimensional Gaussian boson sampling},
  author={Deshpande, Abhinav and Mehta, Arthur and Vincent, Trevor and Quesada, Nicol{\'a}s and Hinsche, Marcel and Ioannou, Marios and Madsen, Lars and Lavoie, Jonathan and Qi, Haoyu and Eisert, Jens and others},
  journal={Science advances},
  volume={8},
  number={1},
  pages={eabi7894},
  year={2022},
  publisher={American Association for the Advancement of Science}
}

@article{li2025complexity,
  title={A complexity transition in displaced Gaussian Boson sampling},
  author={Li, Zhenghao and Solomons, Naomi R and Bulmer, Jacob FF and Patel, Raj B and Walmsley, Ian A},
  journal={npj Quantum Information},
  volume={11},
  number={1},
  pages={119},
  year={2025},
  publisher={Nature Publishing Group UK London}
}

@article{xu2021variational,
  title={Variational algorithms for linear algebra},
  author={Xu, Xiaosi and Sun, Jinzhao and Endo, Suguru and Li, Ying and Benjamin, Simon C and Yuan, Xiao},
  journal={Science Bulletin},
  volume={66},
  number={21},
  pages={2181--2188},
  year={2021},
  publisher={Elsevier}
}

@article{endo2020variational,
  title = {Variational Quantum Simulation of General Processes},
  author = {Endo, Suguru and Sun, Jinzhao and Li, Ying and Benjamin, Simon C. and Yuan, Xiao},
  journal = {Phys. Rev. Lett.},
  volume = {125},
  issue = {1},
  pages = {010501},
  numpages = {6},
  year = {2020},
  publisher = {American Physical Society},
  doi = {10.1103/PhysRevLett.125.010501},
  url = {https://link.aps.org/doi/10.1103/PhysRevLett.125.010501}
}

@article{consiglio2024variational,
  title = {Variational Gibbs state preparation on noisy intermediate-scale quantum devices},
  author = {Consiglio, Mirko and Settino, Jacopo and Giordano, Andrea and Mastroianni, Carlo and Plastina, Francesco and Lorenzo, Salvatore and Maniscalco, Sabrina and Goold, John and Apollaro, Tony J. G.},
  journal = {Phys. Rev. A},
  volume = {110},
  issue = {1},
  pages = {012445},
  numpages = {14},
  year = {2024},
  publisher = {American Physical Society},
  doi = {10.1103/PhysRevA.110.012445},
  url = {https://link.aps.org/doi/10.1103/PhysRevA.110.012445}
}

@article{castro2024variational,
  title = {Variational quantum state preparation for quantum-enhanced metrology in noisy systems},
  author = {Zu\~niga Castro, Juan C. and Larson, Jeffrey and Narayanan, Sri Hari Krishna and Colussi, Victor E. and Perlin, Michael A. and Lewis-Swan, Robert J.},
  journal = {Phys. Rev. A},
  volume = {110},
  issue = {5},
  pages = {052615},
  numpages = {14},
  year = {2024},
  publisher = {American Physical Society},
  doi = {10.1103/PhysRevA.110.052615},
  url = {https://link.aps.org/doi/10.1103/PhysRevA.110.052615}
}

@article{cerezo2021cost,
  title={Cost function dependent barren plateaus in shallow parametrized quantum circuits},
  author={Cerezo, Marco and Sone, Akira and Volkoff, Tyler and Cincio, Lukasz and Coles, Patrick J},
  journal={Nature communications},
  volume={12},
  number={1},
  pages={1791},
  year={2021},
  publisher={Nature Publishing Group UK London}
}

@article{marrero2021entanglement,
  title = {Entanglement-Induced Barren Plateaus},
  author = {Ortiz Marrero, Carlos and Kieferov\'a, M\'aria and Wiebe, Nathan},
  journal = {PRX Quantum},
  volume = {2},
  issue = {4},
  pages = {040316},
  numpages = {16},
  year = {2021},
  publisher = {American Physical Society},
  doi = {10.1103/PRXQuantum.2.040316},
  url = {https://link.aps.org/doi/10.1103/PRXQuantum.2.040316}
}

@article{larocca2022diagnosing,
  doi = {10.22331/q-2022-09-29-824},
  url = {https://doi.org/10.22331/q-2022-09-29-824},
  title = {Diagnosing {B}arren {P}lateaus with {T}ools from {Q}uantum {O}ptimal {C}ontrol},
  author = {Larocca, Martin and Czarnik, Piotr and Sharma, Kunal and Muraleedharan, Gopikrishnan and Coles, Patrick J. and Cerezo, M.},
  journal = {{Quantum}},
  issn = {2521-327X},
  publisher = {{Verein zur F{\"{o}}rderung des Open Access Publizierens in den Quantenwissenschaften}},
  volume = {6},
  pages = {824},
  year = {2022}
}

@article{grant2019initialization,
  doi = {10.22331/q-2019-12-09-214},
  url = {https://doi.org/10.22331/q-2019-12-09-214},
  title = {An initialization strategy for addressing barren plateaus in parametrized quantum circuits},
  author = {Grant, Edward and Wossnig, Leonard and Ostaszewski, Mateusz and Benedetti, Marcello},
  journal = {{Quantum}},
  issn = {2521-327X},
  publisher = {{Verein zur F{\"{o}}rderung des Open Access Publizierens in den Quantenwissenschaften}},
  volume = {3},
  pages = {214},
  year = {2019}
}

@article{pesah2021absence,
  title = {Absence of Barren Plateaus in Quantum Convolutional Neural Networks},
  author = {Pesah, Arthur and Cerezo, M. and Wang, Samson and Volkoff, Tyler and Sornborger, Andrew T. and Coles, Patrick J.},
  journal = {Phys. Rev. X},
  volume = {11},
  issue = {4},
  pages = {041011},
  numpages = {26},
  year = {2021},
  publisher = {American Physical Society},
  doi = {10.1103/PhysRevX.11.041011},
  url = {https://link.aps.org/doi/10.1103/PhysRevX.11.041011}
}

@article{skolik2021layerwise,
  title={Layerwise learning for quantum neural networks},
  author={Skolik, Andrea and McClean, Jarrod R and Mohseni, Masoud and Van Der Smagt, Patrick and Leib, Martin},
  journal={Quantum Machine Intelligence},
  volume={3},
  number={1},
  pages={5},
  year={2021},
  publisher={Springer}
}

@article{sack2022avoiding,
  title={Avoiding barren plateaus using classical shadows},
  author={Sack, Stefan H and Medina, Raimel A and Michailidis, Alexios A and Kueng, Richard and Serbyn, Maksym},
  journal={PRX Quantum},
  volume={3},
  number={2},
  pages={020365},
  year={2022},
  publisher={APS}
}

@article{patti2021entanglement,
  title={Entanglement devised barren plateau mitigation},
  author={Patti, Taylor L and Najafi, Khadijeh and Gao, Xun and Yelin, Susanne F},
  journal={Physical Review Research},
  volume={3},
  number={3},
  pages={033090},
  year={2021},
  publisher={APS}
}

@article{bermejo2026quantum,
  title = {Quantum Convolutional Neural Networks are Effectively Classically Simulable},
  author = {Bermejo, Pablo and Braccia, Paolo and Rudolph, Manuel S. and Holmes, Zo\"e and Cincio, Lukasz and Cerezo, M.},
  journal = {PRX Quantum},
  volume = {7},
  issue = {2},
  pages = {020304},
  numpages = {28},
  year = {2026},
  publisher = {American Physical Society},
  doi = {10.1103/8qt9-72ts},
  url = {https://link.aps.org/doi/10.1103/8qt9-72ts}
}

@article{goh2025lie,
  title = {Lie-algebraic classical simulations for quantum computing},
  author = {Goh, Matthew L. and Larocca, Martin and Cincio, Lukasz and Cerezo, M. and Sauvage, Fr\'ed\'eric},
  journal = {Phys. Rev. Res.},
  volume = {7},
  issue = {3},
  pages = {033266},
  numpages = {32},
  year = {2025},
  publisher = {American Physical Society},
  doi = {10.1103/3y65-f5w6},
  url = {https://link.aps.org/doi/10.1103/3y65-f5w6}
}

@article{recio2025train,
  title={Train on classical, deploy on quantum: scaling generative quantum machine learning to a thousand qubits},
  author={Recio-Armengol, Erik and Ahmed, Shahnawaz and Bowles, Joseph},
  journal={arXiv preprint arXiv:2503.02934},
  year={2025}
}

@article{kolarovszki2026generative,
  title={Generative modeling with Gaussian Boson Sampling: classically trainable Bosonic Born Machines},
  author={Kolarovszki, Zolt{\'a}n and Bak{\'o}, Bence and Oszmaniec, Micha{\l} and Oh, Changhun and Zimbor{\'a}s, Zolt{\'a}n},
  journal={arXiv preprint arXiv:2603.11195},
  year={2026}
}

@article{li2026machine,
  title={Machine learning of quantum data using optimal similarity measurements},
  author={Li, Zhenghao and Zhan, Hao and Winston, Shana H and Mer, Ewan and Yin, Zhenghao and Yu, Shang and Alwehaibi, Yazeed K and Machado, Gerard J and Lopena, Dayne Marcus and Zhang, Lijian and others},
  journal={arXiv preprint arXiv:2602.23501},
  year={2026}
}

@misc{monbroussou2026classical,
  title={Classical simulation and model concentration in passive linear optics},
  author={Monbroussou, L. and Thomas, H. and Mhiri, H. and Holmes, Z. and Kashefi, E.},
  year={2026}
}

\appendix

\section{Sample complexity for resolving loss differences}
\label{app:variance_ratio_sufficiency}
 
In this section we justify the claims of Sec.~\ref{subsec:barren_plateaus_and_trainability} that a number of samples scaling with the variance ratio $\mathrm{Var}_{\mathrm{samp}}(\hat{O})/\mathrm{Var}_{\mathrm{circ}}(\hat{O})$ is sufficient to resolve loss differences at different parameter settings to within proportional error. 
We do this through Chebyshev's inequality: setting the precision of the loss estimates at the standard deviation of the loss landscape, a number of samples scaling with the variance ratio suffices to resolve any pair of parameter settings whose losses differ at this scale. 
Under weak conditions on the loss distribution, a non-negligible fraction of randomly drawn pairs are guaranteed to differ on this scale, justifying the use of this ratio as a measure of trainability.
 
\subsection{Setup}
We consider a fixed parameter setting $\bm{\theta}$. A single measurement of the observable $\hat{O}$ on $\rho(\bm{\theta})$ returns an outcome $\hat{O}^{(i)}$ drawn from a distribution with mean $\ell(\bm{\theta}) = \tr[\hat{O}\rho(\bm{\theta})]$ and variance
\begin{align}
    \var[\hat{O}|\bm{\theta}] = \tr[\hat{O}^2 \rho(\bm{\theta})] - \ell(\bm{\theta})^2.
\end{align}
Given $N$ independent measurement outcomes $\hat{O}^{(1)}, \dots, \hat{O}^{(N)}$, we form the standard empirical loss estimate:
\begin{align}
    \bar{\ell}_N(\bm{\theta}) = \frac{1}{N} \sum_{i=1}^N \hat{O}^{(i)},
\end{align}
which satisfies $\ex[\bar{\ell}_N(\bm{\theta})] = \ell(\bm{\theta})$ and, by independence of the $\hat{O}^{(i)}$ measurements,
\begin{align}
\label{eq:app_estimator_variance}
    \var[\bar{\ell}_N(\bm{\theta})] = \frac{\var[\hat{O}|\bm{\theta}]}{N}.
\end{align}

\subsection{Sampling requirements through Chebyshev's inequality}
Directly applying Chebyshev's inequality, we find that for any $\varepsilon_{\textrm{additive}} > 0$,
\begin{align}
\label{eq:app_chebyshev_single}
    \Pr\Big[\big|\bar{\ell}_N(\bm{\theta}) - \ell(\bm{\theta})\big| \ge \varepsilon_{\textrm{additive}} \Big] \le \frac{\var[\hat{O}|\bm{\theta}]}{N\varepsilon_{\textrm{additive}}^2}.
\end{align}
In particular, for any $\delta \in (0,1]$, taking $N \ge \var[\hat{O}|\bm{\theta}]/(\delta\varepsilon_{\textrm{additive}}^2)$ samples suffices to estimate $\ell(\bm{\theta})$ to within additive error $\varepsilon_{\textrm{additive}}$ with probability at least $1-\delta$.
 
In order to resolve a loss-decreasing direction with parameter shift rules, it is necessary to resolve which of two parameter settings, $\bm{\theta}$, and $\bm{\theta'}$, has the lowest loss. Although photonic parameter-shift rules may generally involve several loss evaluations, we focus on this two-point comparison as a useful case study. We expect the resulting conclusions to extend naturally to more general parameter-shift rules.

We'll denote the difference in the losses by $D(\bm\theta, \bm{\theta'}) = \ell(\bm{\theta}) - \ell(\bm{\theta'})$, and the estimated difference in losses by $\bar{D}_N(\bm\theta, \bm{\theta'}) = \bar{\ell}_N(\bm{\theta}) - \bar{\ell}_N(\bm{\theta}')$. We have $\ex[\bar{D}_N] = D$ and, by independence of the observable samples,
\begin{align}
\label{eq:app_diff_variance}
    \var[\bar{D}_N] = \frac{\var[\hat{O}|\bm{\theta}] + \var[\hat{O}|\bm{\theta'}]}{N}.
\end{align}
We take the magnitude of the loss difference, $|D(\bm\theta, \bm{\theta'})|$, to define the relevant error scale. Estimating $\bar{D}_N$ to within a relative error $\varepsilon_{\textrm{relative}}|D|$ is sufficient to determine the loss decreasing direction and, in the context of parameter shift rules, estimate the gradient with a relative accuracy scaling as $\varepsilon_{\textrm{relative}}$.
Plugging this error into the above inequality, we know that in order to achieve this with constant probability $1-\delta$, we see the number of shots is bounded by
\begin{align}
\label{eq:chebyshev_sample_complexity}
    N \ge \frac{\var[\hat{O}|\bm{\theta}] + \var[\hat{O}|\bm{\theta}']}{\delta \, \varepsilon^2 \, \big(\ell(\bm{\theta})-\ell(\bm{\theta}')\big)^2}
\end{align}
 
\subsection{Sufficiency of the variance ratio}
 
Equation~\eqref{eq:chebyshev_sample_complexity} bounds the samples needed to resolve the loss at a pair of parameter settings in terms of the fixed-parameter variances at the two settings and their loss difference $D$. 

When $\bm{\theta},\bm{\theta}'$ are drawn independently, averaging over both parameters recovers the sample variance and circuit variance discussed in Sec.~2.
Explicitly, the numerator averages to $\ex_{\bm{\theta},\bm{\theta}'}\big[\var[\hat{O}|\bm{\theta}] + \var[\hat{O}|\bm{\theta}']\big] = 2\,\mathrm{Var}_{\mathrm{samp}}(\hat{O})$. Similarly, because $\mathbb{E}_{\bm{\theta},\bm{\theta'}}[D(\bm{\theta}, \bm{\theta'})] = 0$, the expected value of the squared loss difference in the denominator is exactly its variance over the landscape,
\begin{align}
\begin{split}
\label{eq:app_expected_squared_difference}
    \ex_{\bm{\theta},\bm{\theta'}} \big[D^2(\bm\theta, \bm{\theta'})\big] &= \var_{\bm{\theta},\bm{\theta'}}[D(\bm{\theta}, \bm{\theta'})] \\
    &= \var_{\bm{\theta}}[\ell(\bm{\theta})] + \var_{\bm{\theta}'}[\ell(\bm{\theta}')] \\
    &= 2\,\mathrm{Var}_{\mathrm{circ}}(\hat{O}),
\end{split}
\end{align}
using independence of the parameter choices. At the level of these averages, the pairwise bound \eqref{eq:chebyshev_sample_complexity} therefore becomes exactly the variance ratio, $N \ge \mathrm{Var}_{\mathrm{samp}}(\hat{O})\big/\big(\delta\,\varepsilon^2\,\mathrm{Var}_{\mathrm{circ}}(\hat{O})\big)$. 

In particular, Eq.~\eqref{eq:app_expected_squared_difference} identifies
$\sqrt{\var_{\mathrm{circ}}(\hat{O})}$ as the characteristic scale of loss differences across the circuit ensemble. It is therefore natural to define the target additive error as a fixed fraction of this scale, $\varepsilon_{\textrm{additive}} = \varepsilon \sqrt{\var_{\textrm{circ}}(\hat{O})}$ so $\varepsilon$ can be interpreted as the relative error. Estimating each loss to this accuracy corresponds to resolving variations that are proportional to typical differences encountered across the loss landscape; since $\sqrt{\var_{\textrm{circ}}(\hat{O})}$ quantifies the characteristic magnitude of these fluctuations, this choice of error scale should allow changes induced by varying the circuit parameters to be distinguished. Substituting this error scale into Chebyshev's inequality then naturally gives rise to the ratio between the expected sample variance and the circuit variance.

We will now formalise this statement. 
\begin{proposition}
\label{thm:variance_ratio_sufficiency}
    Let $\varepsilon > 0$, $\delta \in (0,1]$ and $t > 0$. Let $\bm{\theta}, \bm{\theta'}$ be drawn independently from a parameter distribution. With probability at least $1 - 2/t$, taking
    \begin{align}
    \label{eq:variance_ratio_sample_complexity}
        N \ge \frac{t}{\delta \, \varepsilon^2} \, \frac{\mathrm{Var}_{\mathrm{samp}}(\hat{O})}{\mathrm{Var}_{\mathrm{circ}}(\hat{O})}
    \end{align}
    samples at each setting yields loss estimates satisfying
    \begin{align}
        \big|\bar{\ell}_N(\bm{\theta}) - \ell(\bm{\theta})\big| \; , \; \big|\bar{\ell}_N(\bm{\theta}') - \ell(\bm{\theta}')\big| \le \varepsilon\sqrt{\mathrm{Var}_{\mathrm{circ}}(\hat{O})},
    \end{align}
    simultaneously with probability at least $1 - 2\delta$. In particular, whenever $|\ell(\bm{\theta}) - \ell(\bm{\theta}')| > 2\varepsilon\sqrt{\mathrm{Var}_{\mathrm{circ}}(\hat{O})}$, the loss-minimising direction is correctly identified.
\end{proposition}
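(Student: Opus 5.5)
The argument combines two applications of Markov's inequality with the single-setting Chebyshev bound~\eqref{eq:app_chebyshev_single}. Markov controls which parameter pairs are ``typical''. Chebyshev then controls the shot noise at those pairs.

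First, the random quantity $\var[\hat{O}|\bm\theta]$ is nonnegative and has mean $\mathrm{Var}_{\mathrm{samp}}(\hat{O})$ over the parameter distribution. Markov's inequality therefore gives
\[
\Pr_{\bm\theta}\big[\var[\hat{O}|\bm\theta] \ge t\,\mathrm{Var}_{\mathrm{samp}}(\hat{O})\big] \le 1/t,
\]
and the same bound holds for $\bm\theta'$. A union bound shows that, with probability at least $1-2/t$ over the draw of $(\bm\theta,\bm\theta')$, both settings are ``good'', meaning $\var[\hat{O}|\bm\theta],\,\var[\hat{O}|\bm\theta'] \le t\,\mathrm{Var}_{\mathrm{samp}}(\hat{O})$. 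This is the only place the parameter randomness enters, and it explains the $t$ in~\eqref{eq:variance_ratio_sample_complexity}.

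Second, condition on a good pair and apply~\eqref{eq:app_chebyshev_single} at each setting with $\varepsilon_{\textrm{additive}} = \varepsilon\sqrt{\mathrm{Var}_{\mathrm{circ}}(\hat{O})}$. The failure probability at $\bm\theta$ is at most
\[
\frac{\var[\hat{O}|\bm\theta]}{N\varepsilon^2\mathrm{Var}_{\mathrm{circ}}(\hat{O})} \le \frac{t\,\mathrm{Var}_{\mathrm{samp}}(\hat{O})}{N\varepsilon^2\mathrm{Var}_{\mathrm{circ}}(\hat{O})} \le \delta,
\]
using the choice of $N$ in~\eqref{eq:variance_ratio_sample_complexity}. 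The same holds at $\bm\theta'$. A union bound over the two settings (the estimates are in fact independent, but the union bound is enough) shows that both estimates lie within $\varepsilon\sqrt{\mathrm{Var}_{\mathrm{circ}}}$ of their true values with probability at least $1-2\delta$.

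Finally, on this event the triangle inequality gives
\[
\big|\bar D_N - D\big| \le 2\varepsilon\sqrt{\mathrm{Var}_{\mathrm{circ}}(\hat{O})}.
\]
So if $|D| > 2\varepsilon\sqrt{\mathrm{Var}_{\mathrm{circ}}(\hat{O})}$, then $\bar D_N$ has the same sign as $D$, and the loss-minimising direction is identified correctly.

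I expect the only delicate point to be bookkeeping rather than technique: the two probabilities live at different levels. The $1-2/t$ is over the parameter draw, and the $1-2\delta$ is over the measurement shots conditional on a good draw. The statement should be read this way, and the proof will condition explicitly. I will also assume $\mathrm{Var}_{\mathrm{circ}}(\hat{O})>0$ so that the ratio is defined.
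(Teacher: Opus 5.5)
Your proposal is correct and follows the paper's proof essentially step for step: Markov's inequality on $\var[\hat{O}|\bm\theta]$ with a union bound over the two settings gives the $1-2/t$, Chebyshev at $\varepsilon_{\textrm{additive}} = \varepsilon\sqrt{\mathrm{Var}_{\mathrm{circ}}(\hat{O})}$ with a union bound gives the $1-2\delta$, and the triangle inequality fixes the sign of $\bar{D}_N$. You also state explicitly that the $1-2/t$ is over the parameter draw and the $1-2\delta$ is over the shots conditional on a good pair, which the paper leaves implicit.
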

\begin{proof}
    The sample variance is the average of the single-shot variances, $\mathrm{Var}_{\mathrm{samp}}(\hat{O}) = \ex_{\bm{\theta}}\big[\var[\hat{O}|\bm{\theta}]\big]$, so by Markov's inequality $\mathbb{P}_{\bm{\theta}}\big[\var[\hat{O}|\bm{\theta}] > t\,\mathrm{Var}_{\mathrm{samp}}(\hat{O})\big] < 1/t$, giving $\var[\hat{O}|\bm{\theta}], \var[\hat{O}|\bm{\theta}'] \le t\,\mathrm{Var}_{\mathrm{samp}}(\hat{O})$ with probability at least $1 - 2/t$. On this event, Chebyshev's inequality \eqref{eq:app_chebyshev_single} with error $\varepsilon_{\mathrm{additive}} = \varepsilon\sqrt{\mathrm{Var}_{\mathrm{circ}}(\hat{O})}$ shows that each estimate achieves this error with probability at least $1-\delta$ for $N$ satisfying \eqref{eq:variance_ratio_sample_complexity}. Both estimates achieve these errors simultaneously with probability at least $1-2\delta$. Finally, if $|D| > 2\varepsilon\sqrt{\mathrm{Var}_{\mathrm{circ}}(\hat{O})}$, the triangle inequality gives $\mathrm{sign}(\bar{D}_N) = \mathrm{sign}(D)$.
\end{proof}
 
Taking $t$, $\varepsilon$ and $\delta$ to be constants, a number of samples $N = \mathrm{O}\big(\mathrm{Var}_{\mathrm{samp}}(\hat{O})/\mathrm{Var}_{\mathrm{circ}}(\hat{O})\big)$ is sufficient to resolve losses to within their expected fluctuation, and to correctly order any pair of settings whose losses differ by more than this resolution.
 
Proposition \ref{thm:variance_ratio_sufficiency} makes explicit how trainability fails at the two ends of the variance ratio. If the sample variance is exponentially large in system size, it takes exponentially many samples to build estimates to fixed precision. If instead the circuit variance is exponentially small, the loss differences need to be resolved to exponentially small error $\sqrt{\mathrm{Var}_{\mathrm{circ}}(\hat{O})}$, and by \eqref{eq:app_chebyshev_single} reaching this precision requires exponentially many samples for constant or larger shot noise. In either case the sample complexity \eqref{eq:variance_ratio_sample_complexity} is exponential unless the other variance compensates. Conversely, if the sample variance grows at most polynomially with system size and the circuit variance decays at most polynomially, then the variance ratio is at most polynomial, and polynomially many samples are certainly sufficient to resolve loss differences at the landscape scale.
 
\subsection{Typical loss differences}
\label{subsec:app_typical_loss_differences}
 
With proposition \ref{thm:variance_ratio_sufficiency}, we show that with a number of samples scaling with the variance ratio, we can resolve the difference between loss at two different parameter settings to within the average loss fluctuations expected in the landscape. However, we don't consider how regularly we can expect the difference of loss to be of this magnitude. We now discuss general assumptions on the loss landscape that would be sufficient to guarantee that values of $|D|$ scale approximately with $\sqrt{\mathrm{Var}_{\mathrm{circ}}(\hat{O})}$ with high probability.

\begin{enumerate}
    \item[(i)] \emph{Anticoncentration of the landscape.} There exist $\alpha, \beta > 0$, scaling at worst inverse-polynomially in system size, such that
    \begin{align}
        \mathbb{P}\Big[D^2 \ge \beta \cdot 2\,\mathrm{Var}_{\mathrm{circ}}(\hat{O})\Big] \ge \alpha.
    \end{align}
    In words, this conjectures that for an inverse-polynomially large fraction of randomly drawn pairs, the loss difference is at least as large as $\sqrt{\mathrm{Var}_{\mathrm{circ}}(\hat{O})}$ up to a polynomial factor. Under this condition, proposition \ref{thm:variance_ratio_sufficiency} with any $\varepsilon < \sqrt{\beta/2}$, using $N = \mathrm{O}\big(\beta^{-1}\,\mathrm{Var}_{\mathrm{samp}}(\hat{O})/\mathrm{Var}_{\mathrm{circ}}(\hat{O})\big)$ samples is sufficient for loss resolution. As such, having a polynomially scaling variance ratio is still sufficient to claim efficient trainability, by our definition. We note that this condition is analogous to the anticoncentration properties of boson sampling output probabilities \cite{mhiri2026bosonsamplingdiluteregime}. Indeed this condition holds for $\hat{O} = \ketbra{\bm{s}}{\bm{s}}$ estimating output probabilities through the anticoncentration conjecture.
    \item[(ii)] \emph{Gaussian loss differences.} We assume $D \sim \mathcal{N}\big(0, 2\,\mathrm{Var}_{\mathrm{circ}}(\hat{O})\big)$. This gives condition (i) with sharp, system-size-independent constants:
    \begin{align}
        \mathbb{P}\Big[|D| \ge \sqrt{\mathrm{Var}_{\mathrm{circ}}(\hat{O})}\Big] = 2\,\Phi\big(-1/\sqrt{2}\big) \approx 0.48,
    \end{align}
    where $\Phi$ is the standard normal cumulative distribution function. We can conclude nearly half of all randomly drawn pairs are resolved by proposition~\ref{thm:variance_ratio_sufficiency} with any $\varepsilon < 1/2$ and $N = \mathrm{O}\big(\mathrm{Var}_{\mathrm{samp}}(\hat{O})/\mathrm{Var}_{\mathrm{circ}}(\hat{O})\big)$ samples, with no polynomial overheads.
\end{enumerate}
Under either condition, combined with proposition~\ref{thm:variance_ratio_sufficiency}, a large fraction of parameter pairs are resolvable with a number of samples scaling, up to polynomial factors, as the variance ratio $\mathrm{Var}_{\mathrm{samp}}(\hat{O})/\mathrm{Var}_{\mathrm{circ}}(\hat{O})$, which is then sufficient as a measure of trainability. Together with the equivalence between gradient and loss concentration established in Appendix~\ref{appendix:bp_equivalence}, this extends from resolving loss differences to resolving gradients themselves.

\section{Equivalence between barren plateaus and loss concentration}
\label{appendix:bp_equivalence}
In Ref.~\cite{arrasmith2022equivalence}, the authors prove an equivalence between exponential concentration of the loss function and the presence of barren plateaus across all parameters. This implies that loss concentration is a sufficient condition for diagnosing barren plateaus, but the converse is weaker; a non-concentrated loss does not guarantee that every parameter remains trainable, but only that at least one parameter has a non-vanishing gradient. To determine the absence of barren plateaus across all parameters requires additionally showing that the parameters variances are in the same trainability class, for example by showing that all gradient variances are comparable in size.

In this section, we first adapt the results of Ref.~\cite{arrasmith2022equivalence} to naturally parametrised linear optical photonic architectures. A key difference that emerges is that average gradients can be non-zero, and hence concentrated gradient variance alone is insufficient to determine whether a parameter is trainable. We therefore look at the second moment of the gradients as the analogous condition for whether a parameter is trainable. With this, we show an equivalence between loss concentration and all parameter admitting barren plateau type behaviour.

We then consider a particular class of parametrised linear optical networks and show that the ensemble averaged gradient second moments are equal for all parameters. For this class of circuits, we can expect that anti-concentration of the loss function implies the absence of barren plateaus across all parameters for this class of circuits.

\subsection{Setup}
Consider an $M$-mode parametrised linear optical network consisting of $L$ parameters with associated generator $\{\hat{B}_j\}$. This can be written as
\begin{align}
\begin{split}
    U(\bm{\theta}) =  \prod_{j=1}^L U_j(\theta_j), \quad \quad U_j(\theta_j) = e^{-i\theta_j \hat{B}_j} \hat{W}_j
\end{split}
\end{align}
where $\hat{W}_j$ is a fixed unitary. For example, the generators could be beam splitter interaction and fixed unitaries fixed phase shifters. The gradient of this parametrised circuit with respect to $\theta_k$ reads
\begin{align}
\begin{split}
\label{eq:photonic_PQC_derivative_with_operator}
    G_k(\bm{\theta}) = U_{+k}(\bm{\theta}) \big(-i \hat{B}_k \big) U_{-k}(\bm{\theta})
\end{split}
\end{align}
where we defined
\begin{align}
\begin{split}
    U_{-k}(\bm{\theta}) = \prod_{j=1}^k U_j(\bm{\theta}), \quad \quad U_{+k}(\bm{\theta}) = \prod_{j=k+1}^L U_j(\bm{\theta})
\end{split}
\end{align}
satisfying $U_{+j} U_{-j} = U$. 

Now we consider a loss function $\ell(\bm{\theta})$ given by the expectation value of some observable
\begin{align}
\begin{split}
    \ell(\bm{\theta}) &= \tr[U(\theta) \rho U^\dag(\bm{\theta}) \, \hat{O}]
\end{split}
\end{align}
The squared magnitude of loss fluctuation across the parameter landscape is captured by the variance
\begin{align}
\begin{split}
    \var_{\bm\theta}[\ell(\bm\theta)] = \ex_{\bm\theta} [\ell(\bm{\theta})^2] - \ex_{\bm\theta}[\ell(\bm\theta)]^2
\end{split}
\end{align}
A loss function is said to be concentrated if the size of fluctuations is exponentially small with respect to problem size $N$
\begin{align}
\begin{split}
    \var_{\bm\theta}[\ell(\bm\theta)] \sim \mathrm{O}(e^{-N}) \implies \ell \textrm{ is exponentially concentrated.}
\end{split}
\end{align}
In qubit-based architectures, the problem size is almost always determined by the qubit count. In our photonic case, there is more freedom how problem size is defined -- the number of modes, number of input photons, or support/weight of the observable can all meaningfully change the problem size. For now, we keep the problem size as labelled by $N$.

The derivative of the loss function with respect to a parameter $\theta_k$, denoted $G_k(\bm{\theta})$, can be written as
\begin{align}
\begin{split}
    G_k(\bm{\theta}) &= \partial_{\theta_k} \ell(\bm{\theta}) \\
    &= i \tr\big[ U_{-k}(\bm{\theta}) \rho U_{-k}^\dag(\bm{\theta}) \,\,[\hat{B}_k, U_{+k}^\dag(\bm{\theta}) \hat{O} U_{+k}(\bm{\theta})] \big]
\end{split}
\end{align}
where $[\bullet, \star] = \bullet\star-\star\bullet$ is the commutator. In qubit devices, the average gradient is zero \cite{mcclean2018barren, arrasmith2022equivalence, larocca2025barren}. In photonic devices, if averages are taken with respect to the Haar measure then this is not the case because $\mu_{\textrm{Haar}}(\bm{\theta}) \ne \textrm{Uniform}([0, 2\pi])$; instead there can be a specific generator-observable dependence,
\begin{align}
\begin{split}
    \ex_{\bm{\theta}} [G_k(\bm{\theta})] = \ex_{\bm{\theta}} \left[ i \tr[U_{-k}(\bm{\theta}) \rho U_{-k}^\dag(\bm{\theta}) \, \big[ \hat{B}_k, \, U^\dag_{+k}(\bm{\theta}) \hat{O} U_{+k} (\bm\theta)\big]] \right].
\end{split}
\end{align}

Barren plateaus are said to occur if the gradient variances decrease exponentially with respect to the problem size $N$ \cite{mcclean2018barren, arrasmith2022equivalence, larocca2025barren}
\begin{align}
\begin{split}
    \var_{\bm\theta}[ G_k(\bm{\theta})] \sim \mathrm{O}(e^{-N}) \implies \ell \textrm{ has a $\theta_k$ barren plateau.}
\end{split}
\end{align}
A barren plateau so the $k^{\textrm{th}}$ parameter with average gradient $\ex_{\bm\theta}[G_k(\bm\theta)]=0$ indicates that exponential precision is required in order to distinguish the direction (positive orsnegative) in which the loss function is decreasing. Therefore, optimising the loss function along the $\theta_k$ manifold typically requires exponentially many shots and is not efficient.
In contrast, if the average gradient is non-zero then estimating the sign of the gradient is far easier; if one can efficiently estimate if $G_k$ is positive or negative then one can update parameters in the correct direction efficiently. This suggests that the variance-based barren plateau definition is not suitable for cases when $\ex_{\bm\theta}[G_k]$ can be non-zero. For example, we could consider gradients being exponentially concentrated around some polynomially scaling average $|\ex_{\bm\theta}[G_k]| > \mathrm{O}(e^{-N})$; while the variance of $G_k$ indicates the existence of barren plateaus, the parameter $\theta_k$ could actually be efficiently updated.

In the case with potentially finite average gradient, it is more suitable to define barren plateaus with respect to the expected gradient second moments; we say the $k^{\textrm{th}}$ parameter admits a barren plateau if
\begin{align}
\begin{split}
    \ex_{\bm\theta} [ G_k^2(\bm\theta)] &= \var_{\bm\theta}[G_k(\bm\theta)] + \ex_{\bm\theta} [G_k(\bm\theta)]^2 \\
    &\sim \mathrm{O}(e^{-N}) \implies \ell \textrm{ has a $\theta_k$ barren plateau.}
\end{split}
\end{align}
This now includes a term accounting for the squared magnitude of the average gradient. If the average gradient is zero, we recover the standard definition. Similarly, if both the average and variance terms are exponentially small then this correctly indicates the parameter is not efficiently trainable. However, if the variance is exponentially small but the average is polynomially scaling, then the gradient second moment is also polynomially scaling and our definition indicates that the parameter is trainable.

For this new definition of barren plateau, we go on to prove a photonic analogue of the results from Ref.~\cite{arrasmith2022equivalence}; that there is an equivalence between the loss function being exponentially concentrated and all parameters admitting barren plateaus. First, we show that loss concentration implies barren plateaus.

\subsection{Loss concentration implies barren plateaus}
In Ref.~\cite{facelli2024exactgradients} and Ref.~\cite{pappalardo2025photonic}, the authors prove a parameter shift rule for linear optical networks with Fock input states. Namely, for a loss function $\ell(\bm{\theta}) = \tr[U(\bm{\theta}) \rho U^\dag(\bm{\theta}) \, \hat{O}]$ and input Fock state containing a definite $n$ photons, the gradients are given by~\cite{facelli2024exactgradients}
\begin{align}
\begin{split}
    \partial_{\theta_k} \ell(\bm{\theta}) &= \sum_{j = 0}^{2n} \frac{(-1)^{j+1}}{4n\sin^2(\kappa_j/2)} \ell(\bm{\theta} + \kappa_j\bm{e}_k)
\end{split}
\end{align}
for $\kappa_j = \frac{(2j-1)\pi}{2n}$. For observables given by photon number monomials (PNMs) of degree $|\bm{\alpha}|$,
the number of loss evaluations can be reduced to $\min\{|\bm{\alpha}|, n\}$ \cite{facelli2024exactgradients}. This implies that for observables with a degree scaling independently of system size, the number of loss evaluations is constant and given by $|\bm{\alpha}|$.

Loss concentration implying barren plateaus is a direct consequence of the parameter shift rule, as proven in Ref.~\cite{arrasmith2022equivalence}. The standard qubit-based derivation has a parameter shift rule containing two terms. In our photonic setup, the gradient can be written as a sum over $\min\{|\bm{\alpha}|, n\}$ loss evaluations. Using the same argument as in Ref.~\cite{arrasmith2022equivalence}, it is clear that a concentrated cost function implies all parameters admit barren plateaus, because the sum of polynomially many exponentially concentrated loss functions must itself be exponentially concentrated. 
This proves the following statement;
\begin{proposition}
    If a loss function $\ell(\bm{\theta})$ is exponentially concentrated, then this implies that all associated gradients $G_k = \partial_{\theta_k}\ell(\bm\theta)$ have an exponentially concentrated second moment and hence admit barren plateaus.
\end{proposition}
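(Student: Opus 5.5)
The approach is to use the photonic parameter-shift rule of Ref.~\cite{facelli2024exactgradients} to write each gradient $G_k$ as a finite linear combination of shifted loss evaluations. We then bound its second moment $\ex_{\bm\theta}[G_k^2]$ directly by the loss variance, after removing the mean. Write the rule as $G_k(\bm\theta) = \sum_{j} w_j\, \ell(\bm\theta + \kappa_j \bm e_k)$, with $R = 2n+1$ terms in general, or $R=\min\{|\bm\alpha|,n\}$ terms for PNMs, and coefficients $w_j = (-1)^{j+1}/(4n\sin^2(\kappa_j/2))$.

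\textbf{Step 1: the shift weights sum to zero.} Applying the rule to a constant loss, whose derivative vanishes, gives $\sum_j w_j = 0$. This is the key photonic modification: it lets us replace $\ell$ by the centred loss $\tilde\ell = \ell - \ex_{\bm\theta}[\ell]$, so that $G_k = \sum_j w_j \tilde\ell(\bm\theta+\kappa_j\bm e_k)$. Any nonzero mean gradient is then itself controlled by fluctuations of $\ell$, which is exactly what the second-moment definition of a barren plateau requires.

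\textbf{Step 2: Cauchy--Schwarz.} We have
\begin{align}
\ex_{\bm\theta}[G_k^2] \le \Big(\sum_j |w_j|\Big)\sum_j |w_j|\, \ex_{\bm\theta}\big[\tilde\ell(\bm\theta+\kappa_j\bm e_k)^2\big].
\end{align}

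\textbf{Step 3: invariance of the parameter distribution under shifts.} We need $\ex_{\bm\theta}[\tilde\ell(\bm\theta+\kappa_j\bm e_k)^2] = \var_{\bm\theta}[\ell(\bm\theta)]$. If $\theta_k$ is uniform on its period, the shift is measure preserving and the identity is immediate. For the Haar-induced measure on beam-splitter angles, which is not uniform, I would instead argue at the level of the unitary. The shifted circuit is $U_{+k} e^{-i\kappa_j\hat B_k}\hat W_k U_{-(k-1)}$. I would show either that the Haar measure is invariant under this insertion when $\hat W_k$ or the neighbouring parameters contain Haar-random factors, or that the density ratio between shifted and unshifted measures is bounded by a constant $c$ independent of system size. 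In the latter case $\ex[\tilde\ell(\cdot+\kappa_j\bm e_k)^2] \le c\,\var_{\bm\theta}[\ell]$, with $c$ absorbed into the polynomial prefactor. I expect this step to be the main obstacle. It is also where the statement's unstated assumption, that concentration is measured under a shift-compatible ensemble, has to be made explicit.

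\textbf{Step 4: bound the prefactor.} Combining the steps gives $\ex_{\bm\theta}[G_k^2] \le c\,(\sum_j|w_j|)^2 \var_{\bm\theta}[\ell]$. Since $\sin^2(\kappa_j/2)\ge \sin^2(\pi/4n) = \Omega(1/n^2)$, each $|w_j| = \mathrm{O}(n)$, so $\sum_j|w_j| = \mathrm{O}(n^2)$. This is polynomial, and for fixed-order PNMs with $\min\{|\bm\alpha|,n\}$ terms it is smaller still. Therefore $\var_{\bm\theta}[\ell]\in\mathrm{O}(e^{-N})$ implies $\ex_{\bm\theta}[G_k^2]\in\mathrm{O}(\mathrm{poly}(n)e^{-N})$ for every $k$. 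Provided $n$ is at most polynomial in the problem size $N$, this is still exponentially small, so every parameter admits a barren plateau in the second-moment sense.
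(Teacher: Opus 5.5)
Your argument follows the paper's route. The paper's whole proof invokes the photonic parameter-shift rule and the argument of Arrasmith et al., asserting that a polynomial combination of exponentially concentrated shifted losses is itself concentrated. Your Steps 1, 2 and 4 are a correct, explicit version of that sentence. Step 1 improves on the paper: under the second-moment definition you must also control $\ex_{\bm\theta}[G_k]$, and it is $\sum_j w_j=0$ that does this, which the paper never states.

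Two bookkeeping points. First, the exact rule uses $2R$ shifts, $j=1,\dots,2R$, with $w_j=(-1)^{j+1}/(4R\sin^2(\kappa_j/2))$ and $\kappa_j=(2j-1)\pi/(2R)$. Here $R=n$ in general and $R=\min\{|\bm\alpha|,n\}$ for PNMs, so there are $2n$ terms, not $2n+1$. The main-text display with $j=0,\dots,2n$ uses the shift $-\pi/(2n)\equiv\kappa_{2n}$ twice, and its weights then do not sum to zero, so your Step 1 needs the $2R$-term form. Second, since $\sum_{j=1}^{2R}\csc^2(\kappa_j/2)=4R^2$, you get $\sum_j|w_j|=R$ exactly, and your bound sharpens to $\ex_{\bm\theta}[G_k^2]\le R^2\,\var_{\bm\theta}[\ell]$.

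The real gap is Step 3, and the paper has the same gap. Its one-liner presupposes that every shifted loss $\ell(\bm\theta+\kappa_j\bm e_k)$ is concentrated under the same ensemble. That is automatic only when the law of $\theta_k$ is translation invariant, as with the uniform phases of Arrasmith et al. You are right to make this an explicit hypothesis, but neither of your proposed ways of discharging it works for the Haar-dialling measure the paper uses.

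Option (b) fails. The layer-$m$ density $p_m(\theta)=m\sin^{2m-1}(\theta/2)\cos(\theta/2)$ is supported on $[0,\pi]$, concentrates in a window of width $\mathrm{O}(m^{-1/2})$ below $\pi$, and is exponentially small in $m$ at any fixed distance from $\pi$. Its translate by an $\mathrm{O}(1)$ shift is therefore not even absolutely continuous with respect to $p_m$, let alone with a likelihood ratio bounded independently of $M$.

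Option (a) is unavailable for the bare mesh. The factors on either side of gate $k$ carry fewer than $M^2$ parameters, so neither can be Haar on $\mathrm{U}(M)$. It does work for the sandwiched ensemble $V_2U(\bm\theta)V_1$ of the later subsection, with concentration measured jointly over $(V_1,V_2,\bm\theta)$.

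If you want the Haar-dialling case itself, use the fact that, with the other parameters fixed, $\ell$ is a trigonometric polynomial of degree $R$ in $\theta_k$. A Tur\'an/Remez-type inequality transfers $L^2(p_m)$ control on that window to the whole circle at a cost $m^{\mathrm{O}(R)}$. Bernstein's inequality then gives $\ex_{\bm\theta}[G_k^2]\le M^{\mathrm{O}(R)}\var_{\bm\theta}[\ell]$. This is polynomial for fixed-order observables, but useless when $R\sim n$.
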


\subsection{Barren plateaus imply loss concentration}

In Ref.~\cite{arrasmith2022equivalence}, the derivation that barren plateaus imply cost concentration applied if (i) the loss function was periodic on each parameter and (ii) the parameters were sampled from a uniform distribution on their domain. In the photonic case, we are interested in sampling from Haar random linear optical networks for which both (i) and (ii) are not simultaneously true. For example, consider using the method of direct-dialling Haar random linear optical networks in Ref.~\cite{russell2017direct} -- the beam splitter parameters are not uniformly sampled and hence we need a different method of showing barren plateaus imply loss concentration.

Each beam splitter is implemented as a Mach–Zehnder interferometer with both an internal and external phase, and the circuit ends with a layer of $M$ output phases, giving $L=M^2$ trainable parameters. Russel \emph{et al}~\cite{russell2017direct} construct a product measure $\nu = \nu_1 \times \dots \times \nu_L$ on $\bm\theta$ such that $U(\bm\theta) \sim \textrm{Haar}(\mathrm{U}(M))$ whenever $\bm\theta \sim \nu$. The external output phases are uniform on $[0, 2\pi)$ while the internal phases at layer $\in\{1, \dots, M-1\}$ have probability density
\begin{align}
\begin{split}
\label{eq:dialling_density}
    p_k(\theta) = k\sin^{2k-1}(\theta/2) \cos(\theta/2)
\end{split}
\end{align}
for $\theta \in [0, \pi]$. We call $\nu$ the Haar-dialling measure; it satisfies $\var_{\textrm{circ}}(\hat{O})= \var_{\bm\theta}[\ell(\bm\theta)]$ being exactly the circuit variance from Sec.~\ref{sec:trainability}.

We now prove the following statement, that the circuit variance is upper bounded by the largest gradient second moment.
\begin{theorem}
\label{thm:BP_implies_loss_concentration}
    For $\bm\theta$ drawn from the Haar-dialling measure $\nu$ of Ref.~\cite{russell2017direct}, the circuit variance of the loss function is upper bounded by
    \begin{align}
    \begin{split}
    \label{eq:dialling_main_bound}
        \var_{\mathrm{circ}}(\hat{O}) &\le \sum_{j=1}^{L} \kappa_j \ex_{\bm\theta \sim \nu}[G_j(\bm\theta)^2] \\
        &\le \frac{3L \pi^2}{2} \max_j \ex_{\bm\theta \sim \nu}[G_j(\bm\theta)^2]
    \end{split}
    \end{align}
    where $\kappa_j = 1$ for external and output phases, and $\kappa_j \le 3\pi^2/2$ for internal phases.
\end{theorem}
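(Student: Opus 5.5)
We bound the circuit variance through the Efron--Stein inequality for product measures, followed by a one-dimensional Poincaré inequality for each factor $\nu_j$. Since $\nu=\nu_1\times\dots\times\nu_L$ is a product measure, Efron--Stein gives
\begin{align}
    \var_{\bm\theta\sim\nu}[\ell(\bm\theta)] \le \sum_{j=1}^L \ex_{\bm\theta\sim\nu}\Big[\var_{\theta_j\sim\nu_j}\big[\ell(\bm\theta)\,\big|\,\bm\theta_{\setminus j}\big]\Big],
\end{align}
which reduces the problem to bounding, for each fixed $\bm\theta_{\setminus j}$, the variance of the single-variable function $f(\theta_j)=\ell(\bm\theta)$ under $\nu_j$. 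It then suffices to establish a Poincaré inequality $\var_{\nu_j}[f]\le \kappa_j\,\ex_{\nu_j}[f'^2]$ for each factor. Substituting $f'=G_j$ and averaging over the remaining coordinates yields the first line of Eq.~\eqref{eq:dialling_main_bound}. The second line then follows immediately from $\kappa_j\le 3\pi^2/2$ (note $1\le 3\pi^2/2$) and from bounding the sum by $L$ times its largest term.

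For the external and output phases, $\nu_j$ is uniform on the circle $[0,2\pi)$, and $f$ is $2\pi$-periodic since it is a trigonometric polynomial in $\theta_j$. We would apply Wirtinger's inequality via Fourier series: writing $f=\sum_m c_m e^{im\theta}$, we have $\var f=\sum_{m\ne0}|c_m|^2\le\sum_m m^2|c_m|^2=\ex[f'^2]$. This gives $\kappa_j=1$.

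The internal phases are the main obstacle, because their densities $p_k(\theta)=k\sin^{2k-1}(\theta/2)\cos(\theta/2)$ on $[0,\pi]$ are non-uniform, and at the endpoints they vanish, at $\theta=0$ to high order. First we would change variables to $x=\sin^2(\theta/2)$. The density then becomes $k x^{k-1}$ on $[0,1]$, a Beta$(k,1)$ law, which is log-concave. By Bobkov's result \cite{bobkov1999isoperimetric}, a log-concave measure on an interval has Poincaré constant bounded by a universal multiple of its variance, or alternatively by $(\text{length})^2/\pi^2$ up to constants. More directly, we could avoid the change of variables and work in $\theta$ itself: on $[0,\pi]$, each $p_k$ is log-concave, since $\log p_k=(2k-1)\log\sin(\theta/2)+\log\cos(\theta/2)+\text{const}$ is a sum of concave functions on $(0,\pi)$. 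The Payne--Weinberger/Bobkov bound for log-concave densities on a convex interval of diameter $D=\pi$ then gives a Poincaré constant of at most $D^2/\pi^2$ times a universal factor. Concretely, we would use the bound $\var_\mu f\le 12\,\var_\mu(\mathrm{id})\,\ex_\mu f'^2$ (Bobkov's constant), together with $\var_{p_k}(\theta)\le \pi^2/8$, the latter following from the density being supported on $[0,\pi]$ and unimodal. This gives $\kappa_j\le 12\cdot\pi^2/8=3\pi^2/2$.

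The step we expect to require care is verifying the variance bound $\var_{p_k}(\theta)\le\pi^2/8$ uniformly in $k$. Our first approach would be the general fact that a unimodal density on an interval of length $\pi$ has variance at most $\pi^2/12$ (attained by the uniform law), which is stronger than what is needed. A second point to check is that the derivative $f'$ in the Poincaré inequality coincides with $G_j$ when the density is expressed in the $\theta$ parametrisation rather than in $x$. Working directly in $\theta$ avoids any Jacobian factor, so the inequality applies to $f'=G_j$ as required.
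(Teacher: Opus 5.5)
Your argument is the paper's proof, step for step. It uses Efron--Stein over the product measure $\nu$, Wirtinger's inequality ($\kappa_j=1$) for the uniformly distributed phases, and, for the internal phases, log-concavity of $p_k$ combined with Bobkov's bound $\kappa_j\le 12\,\var_{\nu_j}[\theta_j]$.

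The only difference is the variance estimate, and there your version is the one that actually gives the stated constant. The paper uses Popoviciu's inequality, $\var_{\nu_j}[\theta_j]\le\pi^2/4$. With Bobkov's factor $12$ this yields only $\kappa_j\le 3\pi^2$. Reaching $3\pi^2/2$ requires $\var_{\nu_j}[\theta_j]\le\pi^2/8$, as you correctly identified.

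However, your justification for that bound is false as stated. A unimodal density on an interval of length $\pi$ can have variance above $\pi^2/12$. For example, put weight $1/3$ on a narrow spike at $0$ and weight $2/3$ uniformly on $[0,\pi]$. This is nonincreasing, hence unimodal, and its variance tends to $\pi^2/9$, which is the sharp unimodal bound. Since $\pi^2/9<\pi^2/8$, your conclusion still holds. For these particular densities you can also simply compute the variance, e.g.\ $\var_{p_1}[\theta]=\pi^2/4-2$.

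Separately, the Payne--Weinberger alternative you mention holds in one dimension with factor exactly one, not merely up to a universal constant. Let $\mu=e^{-V}dx$ with $V$ convex on an interval of length $D$, and let $g$ be a Neumann eigenfunction with eigenvalue $\lambda$. Setting $u=g'$, which vanishes at the endpoints, gives
\[
\lambda\int u^2e^{-V}=\int\bigl(u'^2+V''u^2\bigr)e^{-V}.
\]
Substituting $u=v\,e^{V/2}$, the right-hand side becomes $\int v'^2+\tfrac12\int V''v^2+\tfrac14\int V'^2v^2\ge\int v'^2$. Applying the Dirichlet Wirtinger inequality to $v$ then gives $\lambda\ge\pi^2/D^2$.

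With $D=\pi$ this gives $\kappa_j\le 1$ for the internal phases too. That yields the sharper bound $\var_{\mathrm{circ}}(\hat{O})\le L\max_j\ex_{\bm\theta\sim\nu}[G_j(\bm\theta)^2]$, with no need for Bobkov's inequality or any variance estimate. The vanishing of $p_k$ at the endpoints is handled by restricting to $[\epsilon,\pi-\epsilon]$ and letting $\epsilon\to0$.
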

\begin{proof}
    The random variables $\theta_j$ are independent under this measure $\nu$, so the Efron-Stein inequality~\cite{efrom1981jackknife} gives
    \begin{align}
    \begin{split}
    \label{eq:efron_stein_step}
        \var_{\bm\theta\sim\nu}[\ell(\bm\theta)] \le \sum_{j=1}^L \ex_{\bm\theta\sim\nu} \big[ \var_{\theta_j}[\ell(\bm\theta) | \bm\theta_{\not j} ] \big]
    \end{split}
    \end{align}
    where $\var_{\theta_j\sim\nu_j} [\bullet | \bm\theta_{\not j}]$ is the variance over $\theta_j\sim\nu_j$ with all other coordinates held fixed. 
    Each conditional variance can then be bounded by a one-dimensional Poincar\'e inequality of the form $\var_{\theta_j\sim\nu_j}[f(\theta_j)] \le \kappa_j \ex_{\theta_j\sim\nu_j}[(\partial_{\theta_j}f(\theta_j))^2]$.

    For the uniformly-distributed external and output phases, Wirtinger's inequality gives $\kappa_j = 1$. For the internal phases, the probability density~\eqref{eq:dialling_density} is log-concave on $(0,\pi)$, since
    \begin{equation}
        \frac{d^2}{d\theta^2}\log p_k(\theta) = -\frac{2k-1}{4}\csc^2(\theta/2) - \frac14\sec^2(\theta/2) < 0,
    \end{equation}
    Thus, Bobkov's Poincar\'e inequality for log-concave measures on $\mathbb{R}$~\cite{bobkov1999isoperimetric}
    gives $\kappa_j \le 12 \, \var_{\nu_j}[\theta_j]$; since $\theta_j \in [0,\pi]$, Popoviciu's inequality bounds $\var_{\nu_j}[\theta_j] \le \pi^2/4$,
    so $\kappa_j \le 3\pi^2/2$.

    Substituting these Poincar\'e inequalities into \eqref{eq:efron_stein_step} and taking expectation over the remaining coordinates gives the first inequality in \eqref{eq:dialling_main_bound}, where we've bounded every $\kappa_j$ by $3\pi^2/2$.
\end{proof}

Theorem~\ref{thm:BP_implies_loss_concentration} states that if all parameters admit barren plateaus (as indicated by an exponentially small $\ex_{\bm\theta}[G_k]$), then the loss function is also exponentially concentrated. This therefore proves that barren plateaus imply loss concentration.

Together with the last section, these results demonstrate an equivalence between exponential concentration of the loss function and the presence of barren plateaus across all circuit parameters. We next identify a class of parametrised circuits for which the ensemble-average gradient variances are equal for every parameter. The variance of the loss function can therefore be expected to provide a necessary and sufficient condition for understanding whether barren plateaus are present or not across the entire parameter landscape.

\subsection{Equality of expected gradient second moments}
A polynomially scaling loss function variance alone is insufficient to rule out barren plateaus -- it merely indicates that \emph{at least one parameter} is trainable, but perhaps a further $L-1$ parameters are untrainable. In order to rule out barren plateaus by considering only the loss variance, we will now consider a family of parametrised circuits for which the expected value of the gradient second moments, $\ex[G_k^2]$, are all equal. In this case, the above equivalence between barren plateaus and concentrated loss function implies that a polynomially scaling loss variance is indeed sufficient to determine that all parameters are trainable.

However, for a generic parametrised circuit, we can never expect all gradient variances to be in the same trainability class because some some parameters may inevitably have zero impact on the loss function. Take for example an initial state $\ket{1,0\dots, 0}$; clearly any beam splitter outside the lightcone of the first mode will have zero impact on the loss function and hence the associated parameter gradients are exactly zero. A similar argument applies to local observables.

In this section, we consider sandwiching the parametrised network $U(\bm{\theta}) = \prod_{j=1}^L U_j(\theta_j)$ between two other linear optical networks $V_1$ and $V_2$ drawn from the Haar measure of $\textrm{U}(M)$ such that the total dynamics read 
\begin{align}
\begin{split}
    U_{\textrm{total}}(\bm{\theta}; V_1, V_2) = V_2 U(\bm{\theta}) V_1.
\end{split}
\end{align}
These networks distribute the state and observable across all $M$ modes and can be expected to prevent gradients from being identically zero. The loss function and its derivatives can be written as
\begin{align}
\begin{split}
    \ell(\bm{\theta}; V_1, V_2) = \tr[ U(\bm{\theta}) V_1 \rho V_1^\dag U^\dag(\bm{\theta}) \, V_2^\dag \hat{O} V_2 ]
\end{split}
\end{align}
and
\begin{align}
\begin{split}
    G_k(\bm{\theta}; V_1, V_2) &= \partial_{\theta_k} \ell(\bm{\theta}; V_1, V_2) \\
    &= i\tr[ U_{-k}(\bm{\theta}) V_1 \rho V_1^\dag U_{-k}(\bm{\theta}) \, \,\big[\hat{B}_k, \, U_{+k}^\dag(\bm{\theta}) V_2^\dag \hat{O} V_2 U_{+k}(\bm{\theta}) \big]] 
\end{split}
\end{align}
respectively. As mentioned, the average gradient values can be non-zero and so it is the second moment of the gradient, 
\begin{align}
\begin{split}
    \mu_2(G_k) &= \ex_{\bm{\theta}}[G_k(\bm{\theta})^2] \\
    &= \var_{\bm\theta}[G_k(\bm{\theta})] + \ex_{\bm\theta}[G_k(\bm{\theta})]^2
\end{split}
\end{align}
that can be used to identify barren plateau behaviour -- if this quantity is not exponentially small, then one does not require exponential precision in order to update the parameter $\theta_k$ in the correct direction. We now state and prove the following proposition.
\begin{proposition}
    Let $G_j(\bm{\theta}; V_1,V_2)$ and $G_k(\bm{\theta};V_1,V_2)$ denote the gradients associated with parameters $\theta_j$ and $\theta_k$ respectively, which have generators $\hat{B}_j$ and $\hat{B}_k$.
    If the two generators are equivalent up to a permutation of the mode labels, then their expected gradient second moments are equal. More precisely, if
    \begin{equation}
        \hat{B}_j = \hat{\pi}\circ\hat{B}_k
    \end{equation}
    for some operator $\hat{\pi}$ that permutes the mode labels, then
    \begin{equation}
        \ex_{V_1,V_2}\left[\, \mu_2(G_j(V_1,V_2)) \,\right]
        =
        \ex_{V_1,V_2}\left[\, \mu_2(G_k(V_1,V_2)) \,\right].
    \end{equation}
    Consequently, all parameters whose generators belong to the same equivalence class under mode permutations have the same trainability scaling. In particular, when averaging over $V_1$ and $V_2$, all beam-splitter parameters either admit barren-plateau-type behaviour or do not.
\end{proposition}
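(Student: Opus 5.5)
The plan is to use the left- and right-invariance of the Haar measure on $\mathrm{U}(M)$ to absorb the mode permutation $\hat\pi$ into the sandwiching networks $V_1,V_2$. The key observation is that a permutation of mode labels is itself a passive linear optical transformation. It corresponds to a permutation matrix $P_\pi\in\mathrm{U}(M)$ whose induced Fock-space unitary $\hat P_\pi$ satisfies $\hat P_\pi \hat a_i^\dagger \hat P_\pi^\dagger = \hat a_{\pi(i)}^\dagger$. Hence $\hat B_j = \hat P_\pi \hat B_k \hat P_\pi^\dagger$, and therefore $e^{-i\theta \hat B_j} = \hat P_\pi e^{-i\theta \hat B_k}\hat P_\pi^\dagger$. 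The only issue is that $\theta_j$ and $\theta_k$ sit at different positions in the circuit, surrounded by different fixed and parametrised gates. So the argument must exploit the fact that the \emph{expected} second moment depends on the position only through the Haar-random environment.

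First, rewrite $G_k(\bm\theta;V_1,V_2)$ in the form of Eq.~\eqref{eq:photonic_PQC_derivative_with_operator}, i.e.\ as $i\tr[\sigma_k\,[\hat B_k,\hat O_k]]$. Here $\sigma_k = U_{-k}V_1\rho V_1^\dagger U_{-k}^\dagger$ and $\hat O_k = U_{+k}^\dagger V_2^\dagger \hat O V_2 U_{+k}$. For fixed $\bm\theta$, the operators $U_{-k}(\bm\theta)V_1$ and $V_2U_{+k}(\bm\theta)$ are products of a Haar-random element with a fixed element of $\mathrm{U}(M)$, since $U_{\pm k}$ are linear optical. By the invariance of the Haar measure, $W_1:=U_{-k}V_1$ and $W_2:=V_2U_{+k}$ are again independent Haar-distributed. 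Thus for each fixed $\bm\theta$,
\begin{equation}
\ex_{V_1,V_2}\big[G_k(\bm\theta;V_1,V_2)^2\big] = \ex_{W_1,W_2}\Big[\big(i\tr[W_1\rho W_1^\dagger\,[\hat B_k, W_2^\dagger \hat O W_2]]\big)^2\Big] =: g(\hat B_k),
\end{equation}
which is independent of $\bm\theta$ and of where the gate sits. Exchanging the $\bm\theta$ and $(V_1,V_2)$ expectations by Fubini (the integrand is bounded on each fixed-photon-number sector, or at least integrable) then gives $\ex_{V_1,V_2}[\mu_2(G_k)] = g(\hat B_k)$.

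Next, apply the same substitution with $\hat B_j=\hat P_\pi\hat B_k\hat P_\pi^\dagger$. Moving $\hat P_\pi$ inside the trace gives $\tr[\hat P_\pi^\dagger W_1\rho W_1^\dagger \hat P_\pi\,[\hat B_k,\hat P_\pi^\dagger W_2^\dagger\hat O W_2\hat P_\pi]]$. The maps $W_1\mapsto P_\pi^\dagger W_1$ and $W_2\mapsto W_2P_\pi$ preserve Haar measure, so $g(\hat B_j)=g(\hat B_k)$. This is the claimed equality. The consequence for beam splitters follows because any two beam splitter generators acting on mode pairs $(a,b)$ and $(c,d)$ with the same mixing form are related by a transposition-type relabelling. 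Phase shifters on different modes are handled the same way.

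The main obstacle I expect is the measure-theoretic bookkeeping around the Haar-dialling measure. The parameter expectation $\ex_{\bm\theta}$ uses $\nu$, and $U_{\pm k}(\bm\theta)$ are correlated through $\bm\theta$. Because of this, I will carefully condition on $\bm\theta$ first and only then average over $V_1,V_2$, rather than attempting to show that $W_1,W_2$ are jointly Haar over all randomness. Conditioning on $\bm\theta$ makes the invariance argument exact pointwise, which sidesteps the issue. I also need the integrability of $G_k^2$ for unbounded $\hat O$. I will assume it is finite, since on the $n$-photon subspace everything is a finite-dimensional matrix.
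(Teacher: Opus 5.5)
Your proposal is correct and follows the paper's proof: for each fixed $\bm\theta$, Haar invariance absorbs $U_{-k}$ into $V_1$ and $U_{+k}$ into $V_2$, so the averaged second moment depends only on $\hat B_k$. Your last step, writing $\hat B_j=\hat P_\pi\hat B_k\hat P_\pi^\dagger$ and absorbing $P_\pi$ into the Haar-random networks, spells out what the paper leaves as an appeal to the ``symmetry of Haar averaged operators''.
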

\begin{proof}
    We can directly compute the ensemble averaged value of the second moment over the Haar random networks $V_1$ and $V_2$.
    The second moment of the gradient $G_j$ over $\bm{\theta}$ can be written as
    \begin{align}
    \begin{split}
        \mu_2(G_k(V_1, V_2)) &= -\ex_{\bm{\theta}} \tr[\big(U_{-k}V_1 \rho V_1^\dag U_{-k}^\dag\big)^{\otimes 2} \, \big[\hat{B}_k, \, U_{+k}^\dag V_2^\dag \hat{O} V_2 U_{+k}\big]^{\otimes 2}  ]
    \end{split}
    \end{align}
    where we dropped the $\bm{\theta}$ argument on $U_{\pm k}$ for ease of reading. Taking the average over $V_1$ and $V_2$ gives
    \begin{align}
    \begin{split}
        \ex_{V_1,V_2} \left[\,\mu_2(G_k(V_1,V_2))\, \right] &= -\ex_{V_1, V_2, \bm{\theta}} \tr[ \big(U_{-k} V_1 \rho V_1^\dag U_{-k}^\dag \big)^{\otimes 2} \, \big[\hat{B}_k, \, U_{+k}^\dag V_2^\dag \hat{O} V_2 U_{+k}  \big]^{\otimes 2}]
    \end{split}
    \end{align}
    Using the left/right invariance of the Haar measure, we make the transformation $V_1 \rightarrow U_{-k}^\dag V_1$ and $V_2 \rightarrow V_2 U^\dag_{+k}$ to remove all $U$ dependence.
    \begin{align}
    \begin{split}
        \ex_{V_1,V_2}[\mu_2(G_k(V_1,V_2))] &= -\tr[ \ex_{V_1} \left[ \big(V_1 \rho V_1^\dag\big)^{\otimes 2} \right] \ex_{V_2} \left[ \big[ \hat{B}_k, \, V_2 \hat{O} V_2^\dag \big]^{\otimes 2} \right] ]
    \end{split}
    \end{align}
    Now we notice that, due to the symmetry of Haar averaged operators, this quantity depends only on what class of generators $\hat{B}_k$ is in; for our linear optical network, this is either a beam splitter or phase shifter. Therefore, the ensemble averaged gradient second moments for all parameters that share a generator are equal, proving the above proposition.    
\end{proof}

Finally, consider parameterised circuits containing trainable beam splitters and randomly chosen, fixed phase shifters. Let $\bm\theta$ and $\bm\varphi$ denote the beam-splitter and phase-shifter parameters, respectively, and suppose that only $\bm\theta$ is varied during training. Since the Haar measure induced by the parameterisation of Ref.~\cite{russell2017direct} factorises over $\bm\theta$ and $\bm\varphi$, the two sets of parameters may be averaged independently. We may therefore treat the average over the fixed phase shifters as an ensemble average and study the expected beam-splitter gradient second moments. This avoids the possibility that a non-vanishing loss variance is supported entirely by trainable phase-shifter directions while all beam-splitter gradients remain exponentially suppressed. We find that, after averaging over the random phase shifters and the Haar-random unitaries $V_1$ and $V_2$, the beam-splitter gradient second moments are equal for all trainable parameters -- for $G_j(\bm\theta; \bm\varphi, V_1, V_2) = \partial_{\theta_j} \ell(\bm\theta; \bm\varphi, V_1, V_2)$, find
\begin{align}
\begin{split}
    \ex_{V_1, V_2, \bm\varphi}\big[\ex_{\bm\theta}[G_j^2(\bm\theta; \bm\varphi,V_1, V_2)] \big] = \ex_{V_1, V_2, \bm\varphi}\big[\ex_{\bm\theta}[G_k^2(\bm\theta; \bm\varphi,V_1, V_2)] \big]
\end{split}
\end{align}
for all $j$ and $k$.
Thus, averaged across this ensemble $(\bm\varphi, V_1, V_2)$, all loss function parameters belong to the same trainability class. In combination with the preceding variance bound, a polynomially large loss variance implies that the common gradient second moment cannot be exponentially small, provided that the number of trainable parameters grows at most polynomially. Consequently, we can expect that barren plateaus can be ruled out without explicitly evaluating the individual gradients.

\section{Average linear optical network output}
\label{appendix:average_linear_optical_network}
In this section, we compute the average output state from a linear optical network using Weingarten calculus. Alternative proofs based around Schur's lemma also exist, such as in Ref.~\cite{mhiri2026bosonsamplingdiluteregime}. 
Consider an arbitrary $M$-mode state $\rho$ with photon number representation
\begin{align}
\begin{split}
    \rho = \sum_{\bm{n}, \bm{m} \in \mathbb{N}_0^M} c_\rho(\bm{n}, \bm{m}) \ketbra{\bm{n}}{\bm{m}}.
\end{split}
\end{align}
We evolve it via a passive linear optical network $U$. In Theorem \ref{thm:average_PLON_output_state}, we compute the average output state over the Haar measure of $\mathrm{U}(M)$.

\begin{theorem}
\label{thm:average_PLON_output_state}
    For an $M$-mode linear optical network $U$, the Haar averaged output state of $U \rho U^\dag$ is given by
    \begin{align}
    \begin{split}
        \ex_{U} \, [U \rho U^\dag] &= \sum_{n = 0}^\infty p_\rho(n) \pi_{M}(n)
    \end{split}
    \end{align}
    where $\pi_M(n)$ is the maximally mixed state consisting of $n$ photons in $M$ modes, i.e., the normalised projector onto the $n$ photon subspace,
    \begin{align}
    \begin{split}
        \pi_M(n) =  \frac{1}{{n+M-1 \choose n}}  \Pi_M(n), \quad \quad \Pi_M(n)=
        \sum_{\bm{m} \in \Phi_M^n}\ketbra{\bm{m}},
    \end{split}
    \end{align}
    and $p_\rho(n)$ is the trace of the projector onto the $n$ photon subspace of $\rho$,
    \begin{align}
    \begin{split}
        p_\rho(n) &= \tr[\Pi_M(n) \,\rho].
    \end{split}
    \end{align}
\end{theorem}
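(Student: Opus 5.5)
The plan is to split $\rho$ into photon-number blocks and then show that the Haar twirl over $\mathrm{U}(M)$ sends each block to a multiple of the projector $\Pi_M(n)$. The reduction step uses that $\hat U$ preserves total photon number, so $\hat U$ commutes with every $\Pi_M(n)$. Write $\rho = \sum_{n,n'} \Pi_M(n)\rho\,\Pi_M(n')$. The first task is to kill the off-diagonal blocks $n\neq n'$. For this, use the phase subgroup $U = e^{i\phi}\mathbb{1}_M\in\mathrm{U}(M)$, whose optical unitary is $e^{i\phi \hat N}$. By left invariance of the Haar measure, $\ex_U[\hat U X \hat U^\dag] = \ex_U[\hat U e^{i\phi\hat N} X e^{-i\phi \hat N}\hat U^\dag]$ for every $\phi$. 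Averaging over $\phi\in[0,2\pi)$ annihilates the block $\Pi_M(n)\rho\Pi_M(n')$ whenever $n\ne n'$. Only the diagonal blocks $\rho_n := \Pi_M(n)\rho\Pi_M(n)$ survive.

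Next I would show that $T_n(X) := \ex_U[\hat U X\hat U^\dag]$, restricted to operators on $\mathcal{H}_{n,M}$, satisfies $T_n(X) = \tr[X]\,\pi_M(n)$. Invariance of the Haar measure gives $\hat V T_n(X)\hat V^\dag = T_n(X)$ for all $V\in\mathrm{U}(M)$, so $T_n(X)$ lies in the commutant of the representation $U\mapsto \hat U|_{\mathcal{H}_{n,M}}$. This representation is $\mathrm{Sym}^n(\mathbb{C}^M)$, which is irreducible: it is the highest-weight irrep with weight $(n,0,\dots,0)$. Irreducibility can also be checked directly, since the Lie algebra action of $\hat a_i^\dag \hat a_j$ connects every Fock state $\ket{\bm m}$ to $\ket{n,0,\dots,0}$, and that state spans the unique highest-weight line. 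Schur's lemma then forces $T_n(X) = c\,\Pi_M(n)$. The twirl is trace preserving, so $c = \tr[X]/\dim\mathcal{H}_{n,M} = \tr[X]/\binom{n+M-1}{n}$.

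Combining the two steps gives $\ex_U[\hat U\rho\hat U^\dag] = \sum_n \tr[\rho_n]\,\pi_M(n)$, and $\tr[\rho_n] = \tr[\Pi_M(n)\rho] = p_\rho(n)$, which is the claimed formula. For infinite-dimensional $\rho$, the sum over $n$ converges in trace norm because the $p_\rho(n)$ sum to one, so exchanging the Haar average with the block sum is justified by dominated convergence.

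The main obstacle is the irreducibility claim, which is where the Schur-lemma argument could stall if handled sloppily. I would prove it directly. Take any nonzero invariant subspace $W$. Applying the diagonal torus, i.e. phase shifters, shows $W$ is spanned by Fock states. Applying beam-splitter generators then moves between all Fock states, since the matrix elements $\sqrt{m_j(m_i+1)}$ are nonzero whenever $m_j>0$, so $W = \mathcal{H}_{n,M}$.

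The section mentions Weingarten calculus as an alternative route: expand matrix elements as permanents and use $\ex[U_{i_1 j_1}\cdots \bar U\cdots]$. I would avoid it because the combinatorics are heavier.
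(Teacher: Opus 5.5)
Your proof is correct, but it follows the Schur's-lemma route that the paper only mentions in passing (attributing it to Mhiri et al.). The paper's own proof is a Weingarten-calculus computation, which proceeds as follows:
\begin{itemize}
\item $\rho$ is written via creation operators acting on the vacuum.
\item The Haar integral of $U_{j_1k_1}\cdots U_{j_nk_n}U^*_{j'_1k'_1}\cdots U^*_{j'_mk'_m}$ supplies a factor $\delta_{n,m}$. This is the counterpart of your global-phase argument against off-diagonal photon-number blocks.
\item The $\tau$-sum is evaluated with $\sum_{\tau\in\mathcal{S}_n}\mathrm{Wg}_{M,n}(\tau)=1/\bigl(n!\binom{n+M-1}{n}\bigr)$.
\item Contracting the output indices gives $n!\,\Pi_M(n)$.
\item The $\sigma$-sum forces $\bm n=\bm m$ and counts a Young subgroup of order $\prod_i n_i!$, which cancels the Fock normalisation.
\end{itemize}

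You replace all of this bookkeeping with a single structural input: irreducibility of $\mathcal{H}_{n,M}\cong\mathrm{Sym}^n(\mathbb{C}^M)$, which your torus-plus-$\hat a_i^\dagger\hat a_j$ argument proves cleanly. You then fix the constant by trace preservation rather than by a Weingarten sum identity. Your route also makes the paper's remark that the result holds for arbitrary operators $\hat O$ immediate, since $T_n(X)=\tr[X]\,\pi_M(n)$ for every $X$. The Weingarten route, in exchange, needs no representation theory. It also shows explicitly, matrix element by matrix element, why coherences between distinct patterns $\bm n\neq\bm m$ vanish.

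One small refinement to your convergence remark. Summability of $p_\rho(n)$ only controls the diagonal blocks, and $\sum_{n,n'}\|\Pi_M(n)\rho\,\Pi_M(n')\|_1$ can diverge; for a pure state it equals $\bigl(\sum_n\sqrt{p_\rho(n)}\bigr)^2$. The clean justification is that $P_N\rho P_N\to\rho$ in trace norm for $P_N=\sum_{n\le N}\Pi_M(n)$, combined with trace-norm contractivity of the twirl. The paper does not treat this point at all, so it is not a gap relative to its proof.
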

Note that this derivation equally applies to arbitrary operators  $\hat{O}=\sum_{\bm{n},\bm{m}} c_{\hat{O}}(\bm{n}, \bm{m}) \ketbra{\bm{n}}{\bm{m}}$.
\begin{proof}
    The $M$-mode input state can be written as
    \begin{align}
    \begin{split}
        \rho = \sum_{n, m = 0}^\infty \sum_{\bm{n} \in \Phi_M^n} \sum_{\bm{m} \in \Phi_M^\ell} \frac{c(\bm{n}, \bm{m})}{\sqrt{n_1! \dots n_M! \, m_1! \dots m_M!}} \, \hat{a}_1^{\dag n_1} \dots \hat{a}_M^{\dag n_M} \ketbra{0} \hat{a}_1^{m_1} \dots \hat{a}_M^{m_M}
    \end{split}
    \end{align}
    A passive linear optical network $U$ transforms $\hat{a}^\dag_j \rightarrow U_{jk}\hat{a}_k^\dag$ and $\hat{a}_j \rightarrow U^*_{jk} \hat{a}_k$ for the unitary matrix $U \in \textrm{U}(M)$. Note that, unless otherwise stated, we are using Einstein summation convention where double indices are summed over. The evolved state thus reads
    \begin{align}
    \begin{split}
    \label{eq:evolved_state_for_average_state}
        U \rho U^\dag &= \sum_{n, m = 0}^\infty \sum_{\bm{n} \in \Phi_M^n} \sum_{\bm{m} \in \Phi_M^m} \frac{c(\bm{n}, \bm{m})}{\sqrt{n_1! \dots n_M! \, m_1! \dots m_M!}} \\
        & \quad \times \big(\delta_{1j} U_{jk} \hat{a}_k^\dag\big)^{n_1} \dots \big(\delta_{Mj} U_{jk} \hat{a}_k^\dag \big)^{n_M} \ketbra{0}{0} \big(\delta_{1j'} U^*_{j'k'} \hat{a}_{k'} \big)^{m_1} \dots \big(\delta_{Mj'} U^*_{j'k'} \hat{a}_{k'}  \big)^{m_M} \\
        &= \sum_{n, m = 0}^\infty \sum_{\bm{n} \in \Phi_M^n} \sum_{\bm{m} \in \Phi_M^m} \frac{c(\bm{n}, \bm{m})}{\sqrt{n_1! \dots n_M! \, m_1! \dots m_M!}} \\
        &\quad\quad \times \delta_{1 j_1} \dots \delta_{1 j_{n_1}} \delta_{2 j_{n_1 + 1}} \dots \delta_{2 j_{n_1 + n_2}} \delta_{3 j_{n_1 + n_2 + 1}} \dots \delta_{M j_{n}} \\
        & \quad \quad \times \delta_{1 j'_1} \dots \delta_{1 j'_{m_1}} \delta_{2 j'_{m_1 + 1}} \dots \delta_{2 j'_{m_1 + m_2}} \delta_{3 j'_{m_1 + m_2 + 1}} \dots \delta_{M j'_{m}} \\
        &\quad \quad \times U_{j_1 k_1} U_{j_2 k_2} \dots U_{j'_{n} k'_{n}} \times U^*_{j'_1 k'_1} U^*_{j'_2 k'_2} \dots U^*_{j'_{m} k'_{m}} \\
        & \quad \quad \times \hat{a}_{k_1}^\dag \hat{a}^\dag_{k_2} \dots \hat{a}_{k_{n}}^\dag \ketbra{0}{0} \hat{a}_{k'_1} \hat{a}_{k'_2} \dots \hat{a}_{k'_{m}}
    \end{split}
    \end{align}
    The $j$ indices and Kronecker deltas label which mode the photons came from while the $k$ indices label which mode the photon ends up in. Weingarten calculus \cite{collins2006integration} can be used to average over $\mathrm{U}(M)$ via
    \begin{align}
    \begin{split}
        \int_{\textrm{U}(M)} \textrm{d}\mu(U) U_{j_1 k_1} &\dots U_{j_n k_n} U^*_{j'_1 k'_1} \dots U^*_{j'_m k'_m} \\
        &= \delta_{n,m} \sum_{\sigma, \tau \in \mathcal{S}_n} \textrm{Wg}_{M, n}(\tau \sigma^{-1}) \, \delta_{j_1 j'_{\sigma(1)}} \dots \sigma_{j_n j'_{\sigma(n)}} \delta_{k_1 k'_{\tau(1)}} \dots \delta_{k_n k'_{\tau(n)}}
    \end{split}
    \end{align}
    where $\mathcal{S}_n$ is the symmetric group on $n$ elements and $\textrm{Wg}_{M, n}$ is the Weingarten function.
    Plugging this in to Eq.~\eqref{eq:evolved_state_for_average_state} yields the average output state,
    \begin{align}
    \begin{split}
    \label{eq:evolved_average_state_step1}
        \ex_{\mu}\left[ U \rho U^\dag \right] &= \sum_{n = 0}^\infty \, \sum_{\bm{n}, \bm{m} \in \Phi_M^n} \frac{c(\bm{n}, \bm{m})}{\sqrt{n_1! \dots n_M! \, m_1! \dots m_M!}} \sum_{\sigma, \tau \in \mathcal{S}_n} \textrm{Wg}_{M, n}(\tau \sigma^{-1}) \\
        &\quad \quad \times \delta_{1 j_{\sigma(1)}} \dots \delta_{1 j_{\sigma(n_1)}} \delta_{2 j_{\sigma(n_1 + 1)}} \dots \delta_{2 j_{\sigma(n_1 + n_2)}} \dots \delta_{M j_{\sigma(n)}} \\
        & \quad \quad \times \delta_{1 j_1} \dots \delta_{1 j_{m_1}} \delta_{2 j_{m_1 + 1}} \dots \delta_{2 j_{m_1 + m_2}} \dots \delta_{M j_{n}} \\
        &\quad \quad \times \hat{a}_{k_{\tau(1)}}^\dag \dots \hat{a}^\dag_{k_{\tau(n)}} \ketbra{0}{0} \hat{a}_{k_1} \dots \hat{a}_{k_n} 
    \end{split}
    \end{align}
    The final line can be computed by noting that $[\hat{a}^\dag_k, \hat{a}^\dag_{k'}] = 0$ and hence $\tau$ can be dropped from the creation operator labels; the residual sum over Weingarten functions was computed in Ref~\cite{collins2014integration}
    \begin{align}
    \begin{split}
        \sum_{\tau \in \mathcal{S}_n} \textrm{Wg}_{M, n}( \tau \sigma^{-1}) &= \sum_{\tau' \in \mathcal{S}_n} \textrm{Wg}_{M, n}(\tau') \\
        &= \frac{1}{n!} {n + M - 1 \choose n}^{-1}.
    \end{split}
    \end{align}
    Similarly, we can contract the $k$ indices over the creation/annihilation operators to give the projector onto the $n$ photon subspace across $M$ modes multiplied by $n!$; this was done in Ref.~\cite{taylor2025optimal}
    \begin{align}
    \begin{split}
        \hat{a}_{k_1}^\dag \dots \hat{a}^\dag_{k_n} \ketbra{0}{0} \hat{a}_{k_1} \dots \hat{a}_{k_n} &= n! \, \Pi_M(n).
    \end{split}
    \end{align}
    Incorporating these into Eq.~\eqref{eq:evolved_average_state_step1} gives
    \begin{align}
    \begin{split}
        \ex_{\mu} \left[ U \rho U^\dag \right] &= \sum_{n = 0}^\infty \, \sum_{\bm{n}, \bm{m} \in \Phi_M^n} \frac{c(\bm{n}, \bm{m})}{\sqrt{n_1! \dots n_M! \, m_1! \dots m_M!}} \, {n + M - 1 \choose n}^{-1} \Pi_M(n) \\
        &\quad \quad \times \sum_{\sigma \in \mathcal{S}_n} \delta_{1 j_{\sigma(1)}} \dots \delta_{j_{\sigma(n_1)}} \delta_{2 j_{\sigma(n_1 + 1)}} \dots \delta_{2 j_{\sigma(n_1 + n_2)}} \dots \delta_{Mj_{\sigma(n)}}
    \end{split}
    \end{align}
    The summand on the second line is only non-zero when $\bm{n} = \bm{m}$. Further, it is only non-zero when $\sigma$ is an element of the Young subgroup $\textrm{Y}(\bm{n}) = \mathcal{S}_{\{1, \dots, n_1\}} \times \mathcal{S}_{\{n_1 + 1, \dots, n_1 + n_2\}} \times \dots \times \mathcal{S}_{\{n - n_M, \dots, n\}}$, where $\mathcal{S}_{\{\bullet\}}$ is the symmetric group acting on the set $\{\bullet\}$ (i.e., so $\mathcal{S}_n = \mathcal{S}_{\{1, 2, \dots, n\}}$). The order of this Young subgroup is clearly given by $n_1! n_2! \dots n_M!$. It follows that 
    \begin{align}
    \begin{split}
        \frac{1}{\sqrt{n_1! \dots n_M! \, m_1! \dots m_M!}} \sum_{\sigma \in \mathcal{S}_n} \delta_{1 j_{\sigma(1)}} \dots &\delta_{j_{\sigma(n_1)}} \delta_{2 j_{\sigma(n_1 + 1)}} \dots \delta_{2 j_{\sigma(n_1 + n_2)}} \dots \delta_{Mj_{\sigma(n)}} = 1
    \end{split}
    \end{align}
    and hence we derive the average output,
    \begin{align}
    \begin{split}
        \ex_\mu \left[ U \rho U^\dag \right] &= \sum_{n = 0}^\infty \sum_{\bm{n} \in \Phi_M^n} c(\bm{n}, \bm{n}) \, {n + M -1 \choose n}^{-1} \Pi_M(n) \\
        &= \sum_{n = 0}^\infty p_\rho(n) \, \pi_M(n).
    \end{split}
    \end{align}
\end{proof}

\section{First moment of the loss function}
\label{appendix:first_moment}
In this section, we compute the first moment of the loss function for different initial states. The observables under consideration are constructed from photon number monomials (PNMs), defined as
\begin{align}
\begin{split}
    \hat{M}_{\bm{\alpha}} = \bigotimes_{i=1}^K \hat{n}_i^{\alpha_i}
\end{split}
\end{align}
up to relabelling of the modes, where $\alpha=(\alpha_1, \dots, \alpha_K)$. The $k^{\mathrm{th}}$ moment of the PNMs are given by
\begin{align}
\begin{split}
    \mu_k(\alpha; \rho) &= \ex_U \left[ \tr[\hat{M}_{\bm{\alpha}} \, U \rho U^\dag] \right].
\end{split}
\end{align}
We also consider photon number polynomials (PNPs), given by linear combinations of PNMs; these can be written as
\begin{align}
\begin{split}
    \hat{O} &= \sum_{\bm\alpha \in A} c_{\bm\alpha} \hat{M}_{\bm{\alpha}}
\end{split}
\end{align}
for some $c_{\bm\alpha} \in \mathbb{R}$ and set of powers $A$. The set of PNPs are universal for any observable diagonal in the number basis. The first moment of any PNP is just a linear combination of PNM first moments;
\begin{align}
\begin{split}
    \mu_1(\hat{O}) = \sum_{\bm\alpha \in A} c_{\bm\alpha} \mu_1(\bm\alpha)
\end{split}
\end{align}
and hence we just need to compute $\mu_1(\bm\alpha)$ for different initial states.

\subsection{Derivation for PNMs}
Let us consider an input Fock state $\ket{\bm{n}} = \ket{n_1, \dots, n_M}$ with total photon number $\sum_{i=1}^M n_i = n$. From Theorem~\ref{thm:average_PLON_output_state}, the average output state of $U\ket{\bm{n}}$ is just the maximally mixed state of $n$ photons, $\pi_M(n)$. The first moment $\mu_1(\bm\alpha)$ therefore reads
\begin{align}
\begin{split}
\label{eq:fock_state_pnm_first_moment_step1}
    \mu_1(\bm\alpha; n) = {n + M - 1 \choose n}^{-1} \sum_{\bm{m}\in\Phi_M^n} \left(\prod_{i=1}^M m_i^{\alpha_i} \right)
\end{split}
\end{align}
where $\bm\alpha = (\alpha_1, \dots, \alpha_M)$ and can now include zeros (i.e., $\alpha_j=0$ for $M-K$ entries).

To compute this quantity, we rewrite $m^{\alpha_i} = \sum_{p=0}^{\alpha_i} p! {m \choose p} S(\alpha_i, p)$ where $S(\alpha_i, p)$ are the Stirling numbers of the second kind.
\begin{align}
\begin{split}
\label{eq:first_moment_fock_first_step}
    \mu_1(\bm\alpha; n) &= {n + M - 1 \choose n}^{-1} \sum_{\bm{m} \in \Phi_M^n } \sum_{\bm{p} = 0}^{\bm\alpha} \left(\prod_{i=1}^M {m_i \choose p_i} \, p_i! \, S(\alpha_i, p_i) \right) \\
    &= {n+M-1 \choose n}^{-1} \sum_{\bm{p}= 0}^{\bm\alpha} \left(\prod_{i=1}^M p_i! \, S(\alpha_i, p_i)\right) \sum_{\bm{m}\in\Phi_M^n} {m_i \choose p_i}
\end{split}
\end{align}
where $\sum_{\bm{p} = 0}^{\bm\alpha} = \sum_{p_1 = 0}^{\alpha_1} \dots \sum_{p_M=0}^{\alpha_M}$. The sums over $\bm{m}$ can be expressed using the generating function 
\begin{align}
\begin{split}
    \mathcal{G}(x, p_i) &= \sum_{m_i=0}^\infty {m_i \choose p_i} x^{m_i} \\
    &= \frac{x^{p_i}}{(1-x)^{p_i + 1}} 
\end{split}
\end{align}
via
\begin{align}
\begin{split}
    \sum_{\bm{m} \in \Phi_M^n} \left(\prod_{i=1}^M {m_i \choose k_i} \right) &= [x^n] \left(\prod_{i=1}^M \mathcal{G}(x, p_i) \right) \\
    &= [x^n]\left( x^{P} (1-x)^{-(P+M)} \right) \\
    &= [x^n] \left(\sum_{j=0}^\infty (-1)^j {-(P + M) \choose j} \, x^{j + P} \right) \\
    &= (-1)^{n-P} {-(P + M) \choose n-P} \\
    &= {n + M - 1 \choose n - P}
\end{split}
\end{align}
for $P = \sum_{i=1}^M p_i$. We can feed this into Eq.~\eqref{eq:first_moment_fock_first_step} to give
\begin{align}
\begin{split}
    \mu_1(\bm\alpha; n) &= \sum_{\bm{p} = 0}^{\bm\alpha} \frac{{n+M-1\choose n- P}}{{n+M-1\choose n}} \left(\prod_{i=1}^M p_i! S(\alpha_i, p_i) \right).
\end{split}
\end{align}
Noting that if $\alpha_i = 0$ then there is no contribution to the above sum, we can replace $\alpha = (\alpha_1, \dots, \alpha_M)$ for $\alpha_i \ge 0$ with $\alpha = (\alpha_1, \dots, \alpha_K)$ for $\alpha_i > 0$. Then we can carry out each sub-sum associated with a given value of $P$ individually. Note that if $P < K$ then there is at least one $S(\alpha_i, p_i) = 0$, and hence the sum starts at $P=K$. Using this and the convention that $S(\alpha_i, x) = 0$ if $x>\alpha$, the first moment can be written as
\begin{align}
\begin{split}
    \mu_1(\bm\alpha;n) &= \sum_{P = K}^{|\bm\alpha|} \frac{{n+M-1\choose n-P}}{{n+M-1\choose n}}\sum_{\bm{p}\in\Phi_K^P} \left(\prod_{i=1}^K p_i! \, S(\alpha_i, p_i) \right).
\end{split}
\end{align}
This final sum over the $\bm{p}$ indices does not admit a simple closed-form structure, but can be computed efficiently using the generating function 
\begin{align}
\begin{split}
    \mathcal{F}(x, \alpha_i) &= \sum_{p_i=0}^\infty p_i! \, S(\alpha_i, p_i) x^{p_i}.
\end{split}
\end{align}
Extracting coefficients from a product of generating functions is equivalent to computing the discrete convolution between the analogous sequences.
\begin{align}
\begin{split}
    \sum_{\bm{p}\in\Phi_K^P} \left(\prod_{i=1}^K p_i! \, S(\alpha_i, p_i)\right) &= [x^P] \left(\prod_{i=1}^K \mathcal{F}(x, \alpha_i)\right) \\
    &= \big(\mathcal{F}^{\alpha_1} * \mathcal{F}^{\alpha_2} * \dots *\mathcal{F}^{\alpha_K} \big) (P)
\end{split}
\end{align}
where
\begin{align}
\begin{split}
    \mathcal{F}^{\alpha_i} = \big(j! \, S(\alpha_i, j)\big)_{j=0}^{\alpha_i}
\end{split}
\end{align}
is the sequence associated with generating function $\mathcal{F}(x, \alpha_i)$ and $(f * g)\,(k)$ is the discrete convolution between sequences $f$ and $g$ evaluated on the $k^{\textrm{th}}$ index. It follows that the first moment of the cost function for an observable $\hat{M}_{\bm{\alpha}}$ reads
\begin{align}
\begin{split}
\label{eq:cost_function_first_moment_fock_final}
    \mu_1(\bm\alpha; n) &= \sum_{P=K}^{|\bm\alpha|} \frac{{n+M-1\choose n-P}}{{n+M-1\choose n}} \, \big(\mathcal{F}^{\alpha_1} * \mathcal{F}^{\alpha_2} * \dots *\mathcal{F}^{\alpha_K}\big) (P)
\end{split}
\end{align}
Furthermore, from Theorem~\ref{thm:average_PLON_output_state}, this can be extended to arbitrary input states; for an initial state $\rho$ with a probability of measuring $n$ photons $p_{\rho}(n)$, the average cost function value is
\begin{align}
\begin{split}
\label{eq:cost_function_first_moment_general_final}
    \mu_1(\bm\alpha; \rho) &= \sum_{n=K}^\infty p_\rho(n) \sum_{P=K}^{|\bm\alpha|} \frac{{n+M-1\choose n-P}}{{n+M-1\choose n}} \big(\mathcal{F}^{\alpha_1} * \mathcal{F}^{\alpha_2} * \dots * \mathcal{F}^{\alpha_K}\big) (P).
\end{split}
\end{align}
Note again that if $n < K$ then the summand vanishes, and hence the summation for $n$ also starts at $K$.

\subsection{Independence from system size}
\label{appendix:first_moment_asymptotics}
In this section, we study the behaviour of $\mu_1(\bm\alpha)$ when $\bm\alpha$ is held constant but $M$ and $n$ are allowed to grow. We will show that the first moment of the cost function for photon number monomials $\hat{M}_{\bm{\alpha}}$ asymptotically approaches a constant value if the photon density is held constant with $n = \nu M$ for $\nu \in (0, \infty)$.

Let $\bm\alpha = (\alpha_1, \dots, \alpha_K)$ be a constant such that $K, |\bm\alpha| \ll n, M$. From Eq.~\eqref{eq:cost_function_first_moment_fock_final}, the only dependence on $n$ and $M$ is in the binomial terms ${n+M-1\choose n-P}/{n+M-1\choose n} = \frac{n!}{(n-P)!} \frac{(M-1)!}{(M+P-1)!}$. These scale asymptotically as
\begin{align}
\begin{split}
    \frac{n!}{(n-P)!} = n^P \left(1 - \frac{1}{n} \frac{P(P-1)}{2} + \mathrm{O}(n^{-2}) \right)
\end{split}
\end{align}
and 
\begin{align}
\begin{split}
    \frac{(M-1)!}{(M+P-1)!} = M^{-P} \left(1 - \frac{1}{M} \frac{P(P-1)}{2} + \mathrm{O}(M^{-2})\right).
\end{split}
\end{align}
The first moment of the cost function for large $n$ and $M$ is therefore given by
\begin{align}
\begin{split}
    \mu_1(\bm\alpha; n) &= \sum_{P=K}^{|\bm\alpha|} \left(\frac{n}{M}\right)^P \left(1 - \frac{P(P-1)}{2} (n^{-1} + M^{-1})\right)\big(\mathcal{F}^{\alpha_1} * \dots *\mathcal{F}^{\alpha_K}\big)(P) \\
    & \quad \quad + \mathrm{O}(1/n^2, 1/M^2, 1/nM)
\end{split}
\end{align}
If the photon number scales with system size as $n=\nu M$, then the first moment asymptotically approaches a constant dependent only on the local photon density, $\nu$, and local observable, $\bm\alpha = (\alpha_1, \dots, \alpha_K)$,
\begin{align}
\begin{split}
    \mu_1(\bm\alpha; n=\nu M) &= \sum_{P=K}^{|\bm\alpha|} \nu^P \big(\mathcal{F}^{\alpha_1} * \mathcal{F}^{\alpha_2} * \dots * \mathcal{F}^{\alpha_K} \big) (P) \,\, - \,\,\mathrm{O}(1/M).
\end{split}
\end{align}
The leading term can be expressed as the product of $K$ ordered Bell polynomials, also known as Fubini polynomials.
\begin{align}
\begin{split}
\label{eq:first_moment_ordered_bell_asymptotics}
    \lim_{M\rightarrow \infty} \mu_1(\bm\alpha;n=\nu M) = \prod_{j=1}^K F_{\alpha_j}(\nu), \quad \quad F_{\alpha_j}(\nu) = \sum_{p=0}^{\alpha_j} p! S(\alpha_j, p) \nu^p
\end{split}
\end{align}
The ordered Bell polynomial asymptotics as $\alpha_j$ grows were derived in Ref.~\cite{belovas2023asymptotic} to be
\begin{align}
\begin{split}
\label{eq:ordered_bell_asymptotics}
    F_{\alpha_j}(\nu) \sim \frac{\alpha_j!}{(1+\nu) [\ln(1 + 1/\nu)]^{\alpha_j+1}}
\end{split}
\end{align}
with corrections exponentially small in $\alpha_j$. Combining Eq.~\eqref{eq:first_moment_ordered_bell_asymptotics} with Eq.~\eqref{eq:ordered_bell_asymptotics}, the first moment at constant photon density grows factorially in the order of the photon number operator in each mode $\alpha_j$ for $j\in\{1,\dots,K\}$, and exponentially in the number of supported modes $K$.

The fact the first moment approaches a constant also implies that to estimate $\ell(\bm\theta)$ for observable $\hat{M}_{\bm{\alpha}}$ within a given precision $\varepsilon_{\textrm{additive}}$, the expected number of shots required by Chebyshev's inequality is independent of system size. This further indicates that the \emph{second} moment of the cost function approaches a constant, via
\begin{align}
\begin{split}
    \mu_1(\bm\alpha)^2 \le \mu_2(\bm\alpha) \le \mu_1(2\bm\alpha)
\end{split}
\end{align}
where $2\bm\alpha = (2\alpha_1, 2\alpha_2, \dots, 2\alpha_K)$. Since both the lower and upper bounds are independent of $M$, this implies $\mu_2(\bm\alpha)$ is also asymptotically bounded.

\begin{figure}
    \centering
    \includegraphics[width=0.7\linewidth]{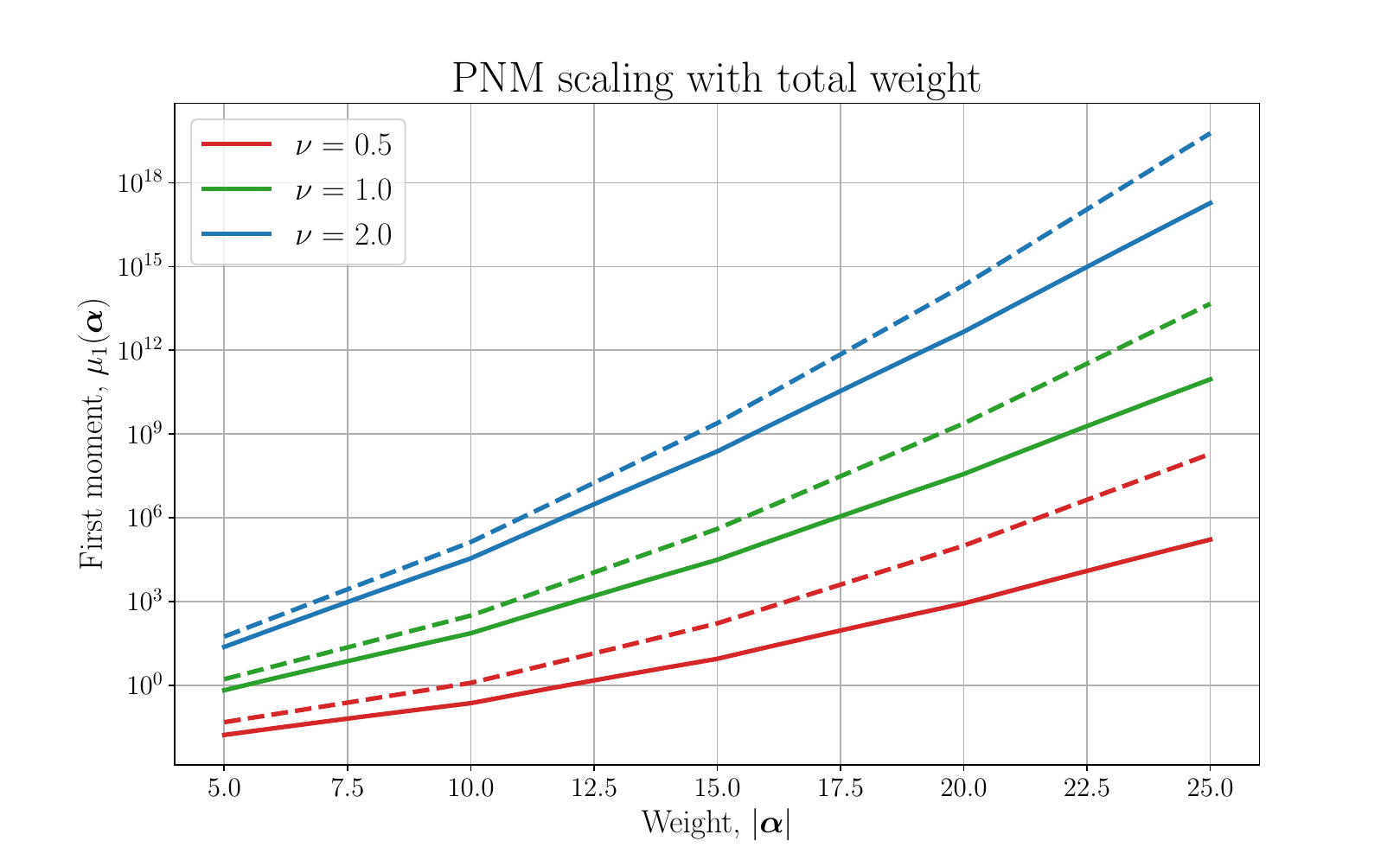}
    \caption{Plot of a PNM first moment against the total weight $|\bm\alpha|$. The solid curve is Eq.~\eqref{eq:pnm_first_moment_main} at $M=50$ and the dotted curve is the $M\rightarrow\infty$ limit from Eq.~\eqref{eq:first_moment_asymptotic_main}. The support is fixed at $K=5$, and the weight is maximally distributed to be as close to $\alpha_j = |\bm\alpha|/K$ for all $j$ as possible. We see a superexponential increase in the first moment as the total weight of the PNMs increase, as predicted in Eq.~\eqref{eq:first_moment_order_scaling_main} and Eq.~\eqref{eq:ordered_bell_asymptotics}. Increasing the density also leads to order of magnitude changes in the first moment. These results indicate that the sample complexity, $\var_{\textrm{samp}}(\bm\alpha) \approx \mu_1(2\bm\alpha)$, for estimating the loss function grows with the order of the largest constituent PNM.}
    \label{fig:pnm_scaling_with_weight}
\end{figure}

\subsection{Implications for PNP estimation}
\label{appendix:pnp_estimation}
Consider an observable $\hat{O}=\sum_{\bm\alpha \in A} c_{\bm\alpha} \hat{M}_{\bm{\alpha}}$ for PNMs of up to a constant size (i.e., $|\bm\alpha|, K_{\bm\alpha}$ are independent of $n$ or $M$ for all $\bm\alpha \in A$). The expected sample variance for estimating $\ell(\bm\theta)$ is upper bounded by
\begin{align}
\begin{split}
    \var_{\textrm{samp}}(\hat{O}) &\le \mu_1(\hat{O}^2) \\
    &= \sum_{\bm\alpha, \bm\beta \in A} c_{\bm\alpha} c_{\bm\beta} \, \mu_1(\bm\alpha + \bm\beta)
\end{split}
\end{align}
and hence the expected number of shots required to estimate $\ell(\theta)$ within an accuracy $\varepsilon_{\textrm{additive}}$ with probability $1-\delta$ satisfies
\begin{align}
\begin{split}
    N \le \frac{\mu_1(\hat{O}^2)}{\delta \, \varepsilon_{\textrm{additive}}^2}.
\end{split}
\end{align}
In cases where all $c_{\bm\alpha} > 0$ and $|\bm\alpha|$ are upper bounded by a constant such that $\mu_1(\bm\alpha + \bm\beta) \sim \mathrm{O}(1)$, the number of shots required scales as $N\sim\mathrm{O}(|A|^2)$. However, if we allow coefficients to be negative then cancellations take place that mean $\mu_1(\hat{O}^2)$ grows more slowly. 
For example, if we let $c_{\bm\alpha}$ be stochastically selected from a distribution satisfying $\ex[c_{\bm\alpha}] = 0$ and $|c_{\bm\alpha}|\sim\mathrm{O}(1)$ for all $\bm\alpha$, then the expected value of $\mu_1(\hat{O}^2)$ over the coefficients is given by
\begin{align}
\begin{split}
    \ex_{\{c_{\bm\alpha}\}} \mu_1(\hat{O}^2) &= \sum_{\bm\alpha\in A} \ex_{\{c_{\bm\alpha}\}}[c_{\bm\alpha}^2] \mu_1(2\bm\alpha) + \sum_{\substack{\bm\alpha,\bm\beta \in A,\\ \bm\alpha\ne\bm\beta}} \ex_{\{c_{\bm\alpha}\}}[c_{\bm\alpha}] \ex_{\{c_{\bm\beta}\}}[c_{\bm\beta}] \, \mu_1(\bm\alpha+\bm\beta) \\
    &= \sum_{\bm\alpha \in A} \, \ex_{\{c_{\bm\alpha}\}}[c_{\bm\alpha}^2] \, \mu_1(2\bm\alpha) \\
    &\sim \mathrm{O}(|A|).
\end{split}
\end{align}
assuming that $|\bm\alpha| \sim \mathrm{O}(1)$.
We would therefore generally expect off-diagonal terms cancelling if the coefficients are equally likely to be positive as negative, leading to a number of shots scaling as $N
\sim \mathrm{O}(|A|)$. For a given accuracy $\varepsilon_{\textrm{additive}}$, the number of samples required by the quantum device therefore asymptotically scales as
\begin{align}
\begin{split}
\label{eq:quantum_sample_complexity_overview}
    N \sim \begin{cases}
        \frac{\mathrm{O}(|A|^2)}{\delta \varepsilon_{\textrm{additive}}^2} & \textrm{generic coefficients}, \\
        \frac{\mathrm{O}(|A|)}{\delta \varepsilon_{\textrm{additive}}^2} & \textrm{with off-diagonal cancellations}
    \end{cases}
\end{split}
\end{align}

\subsection{Comparison with the Lim-Oh algorithm}
\label{appendix:lim_oh}
The Lim-Oh algorithm~\cite{limClassicalAlgorithmsEstimating2025} gives a sampling based method for computing expectation values from passive linear optical circuits when the input state and observable are factorisable. 
To estimate the expectation value of an observable $\hat{O}$ with additive error $\varepsilon_{\textrm{additive}}$ and success probability $1-\delta$, the required number of samples, $N^{\textrm{(LO)}}$, scales as
\begin{align}
\begin{split}
   N^{(\textrm{LO})} \sim \mathrm{O}\left( \frac{||\hat{O}_{\textrm{eff}}||_2^2 \log(1/\delta)}{\varepsilon_{\textrm{additive}}^2} \right)
\end{split}
\end{align}
for pure product states, where $\hat{O}_{\textrm{eff}}$ is the effective observable such that $\hat{O} = \hat{O}_{\textrm{eff}} \otimes \mathds{1}$. Generating each sample has a time complexity $\mathrm{O}(M^2)$.

To estimate a PNP, one approach would be to use the Lim-Oh algorithm to directly estimate the observable $\hat{O}_{\textrm{eff}}$. However, for PNPs that span the entire system, this norm term $||\hat{O}_{\textrm{eff}}||_2^2$ grows exponentially, leading to exponential classical overhead. Instead, one can use the Lim-Oh algorithm to estimate the constituent PNM expectation values $\tr[\hat{M}_{\bm\alpha} \rho(\bm\theta)]$ and then sum these estimates together. We assume that the constituent PNMs are constant with respect to system size such that $||\hat{M}_{\bm\alpha, \textrm{eff}}||_2^2 \sim \mathrm{O}(1)$. One can then either (i) directly estimate every single PNM term and sum them together or (ii) Monte-Carlo sample from $\{\bm\alpha\}$ weighted by $|c_{\bm\alpha}| / \sum_{\bm\alpha \in A} |c_{\bm\alpha}|$.

Considering option (i), to derive the classical complexity of the Lim-Oh algorithm we look at the classical variance associated with estimating $\ell(\bm\theta)=\sum_{\bm\alpha\in A} c_{\bm\alpha} \tr[\hat{M}_{\bm\alpha} \rho(\bm{\theta})]$; we let $\bar{\ell}$ and $\bar{M}_{\bm{\alpha}}$ be the classical estimate of $\ell$ and $\tr[\hat{M}_{\bm\alpha} \rho]$ respectively. The total classical estimation variance reads
\begin{align}
\begin{split}
    \var[\bar{\ell}] &= \sum_{\bm{\alpha}\in A} c_{\bm\alpha}^2 \var[\bar{M}_{\bm\alpha}] + \sum_{\substack{\bm\alpha, \bm\beta \in A, \\ \bm\alpha \ne \bm\beta}} c_{\bm\alpha} c_{\bm\beta} \, \textrm{Cov}[\bar{M}_{\bm\alpha}, \bar{M}_{\bm\beta}]
\end{split}
\end{align}
Each $\bar{M}_{\bm\alpha}$ is estimated independently and therefore we assume they have uncorrelated errors -- this means all covariances satisfy $\textrm{Cov}[\bar{M}_{\bm\alpha}, \bar{M}_{\bm\beta}] = 0$ for $\bm\alpha \ne \bm\beta$.
From Chebyshev's inequality, to estimate $\ell$ within additive error $\varepsilon_{\textrm{additive}}$ with probability $1-\delta$ requires the variance to satisfy
\begin{align}
\begin{split}
\label{eq:chebyshev_for_lim-oh}
    \var[\bar{\ell}] &= \sum_{\bm\alpha\in A} c_{\bm\alpha}^2 \var[\bar{M}_{\bm\alpha}] \\
    &\le \delta \varepsilon_{\textrm{additive}}^2
\end{split}
\end{align}

Let us imagine each estimate $\bar{M}_{\bm\alpha}$ used $N_{\bm\alpha}$ samples from the Lim-Oh algorithm, with each sample labelled $m_{\bm\alpha}^{(x)}$ for $x\in\{1, \dots, N_{\bm\alpha}\}$. The estimator of the PNM reads
\begin{align}
\begin{split}
    \bar{M}_{\bm\alpha} = \sum_{x=1}^{N_{\bm\alpha}} m_{\bm\alpha}^{(x)}
\end{split}
\end{align}
It is unbiased (so $\bar{M}_{\bm\alpha}|_{N_{\bm\alpha}\rightarrow \infty} = \tr[\hat{M}_{\bm{\alpha}} \rho(\bm\theta)]$) and has a variance
\begin{align}
\begin{split}
    \var[\bar{M}_{\bm\alpha}] &= \frac{\var[m^{(x)}_{\bm\alpha}]}{N_{\bm\alpha}}
\end{split}
\end{align}
where $\var[m^{(x)}_{\bm\alpha}]$ is the variance of sample outcomes from the Lim-Oh algorithm. For PNMs of constant size, we assume this satisfies $\var[m_{\bm\alpha}^{(x)}] \sim \mathrm{O}(1)$.

Let us first assume that $N_{\bm\alpha} = N_{\bm\beta} = N_{A}$. In this case, Eq.~\eqref{eq:chebyshev_for_lim-oh} can be written as
\begin{align}
\begin{split}
    \frac{1}{N_A}\sum_{\bm\alpha \in A} c_{\bm\alpha}^2 \var[m^{(x)}_{\bm\alpha}] \le \delta \varepsilon_{\textrm{additive}}^2 \implies N_A \ge \frac{\mathrm{O}(|A|)}{\delta \varepsilon_{\textrm{additive}}^2}
\end{split}
\end{align}
where we used $\sum_{\bm\alpha \in A} \mathrm{O}(1) = \mathrm{O}(|A|)$. In this case, \emph{each} PNM estimator would require $\mathrm{O}(|A|)/\delta\varepsilon_{\textrm{additive}}^2$ samples. Since there are $|A|$ such PNMs, the total sample complexity of estimating $\ell(\bm{\theta})$ with the Lim-Oh algorithm scales as
\begin{align}
\begin{split}
    N^{(\textrm{LO})} \sim \mathrm{O}\left(\frac{|A|^2}{\delta \varepsilon_{\textrm{additive}}^2}\right)
\end{split}
\end{align}
to achieve additive error $\varepsilon_{\textrm{additive}}$ and a success of at least probability $1-\delta$.
Intuitively, if we want $\bar{\ell}$ to have an additive error $\varepsilon_{\textrm{additive}}$ from summing $|A|$ terms together, then we need the error of each individual PNM estimate to scale as $\varepsilon_{\textrm{additive}} / \sqrt{|A|}$.
Notice that this is independent of the sign of the coefficients $c_{\bm\alpha}$ since only their squared value appeared.

We now briefly look at option (ii), a Monte-Carlo/importance sampling approach. Rather than estimating $\bar{M}_{\bm\alpha}$ for each $\alpha\in A$, we stochastically select a $\bm\alpha$ from the probability distribution $p_{\bm\alpha} = |c_{\bm\alpha}| / S$, where $S = \sum_{\bm\alpha \in A} |c_{\bm\alpha}|$. 
At each step, we compute one sample from the Lim-Oh algorithm from the stochastically selected $\bm\alpha$, until we have a total of $N^{(\textrm{LO})}$ samples $m^{(x)}$ for $x\in\{1,\dots, N^{(\textrm{LO})}\}$; each sample has an associated $\bm{\alpha}^{(x)}$. 
The unbiased estimator $\bar{\ell}$ is then given by
\begin{align}
\begin{split}
    \bar{\ell} = \frac{S}{N^{\textrm{(LO)}}} \sum_{x=1}^{N^{(\textrm{LO})}} \textrm{sgn}(\bm\alpha^{(x)}) \,m^{(x)}.
\end{split}
\end{align}
From this, it follows that the variance of the estimator scales as $S^2/N^{(\textrm{LO})}$; since $S = \sum_{\bm\alpha \in A} |c_{\bm\alpha}| \sim \mathrm{O}(|A|)$, the total variance $\var[\bar{\ell}]$ admits the same scaling with respect to $|A|$, $\delta$ and $\varepsilon_{\textrm{additive}}$ as directly estimating each PNM, namely
\begin{align}
\begin{split}
    N^{(\textrm{LO})} \sim \left(\frac{|A|^2}{\delta \varepsilon_{\textrm{additive}}^2}\right)
\end{split}
\end{align}
Therefore, the classical complexity using one of the best out-of-the-box 
Indeed, these derivations apply to any sampling based approach that can estimate a PNM to a given accuracy (i.e., $\varepsilon_{\bm\alpha}$) with a constant number of samples. 

These results combined with those from Eq.~\eqref{eq:quantum_sample_complexity_overview} suggest that a quadratic speedup over the Lim-Oh algorithm is possible for a wide class of PNPs, namely those with suitably many positive and negative terms. Furthermore, the complexity is determined by the number of terms in the PNP, rather than the total number of modes or photons. This means that the computational complexity can be scaled up much faster than the hardware; for an $M$ mode system and taking $|\bm\alpha| \le L$, the PNP can have up to $|A|\sim\mathrm{O}(M^L)$ terms. 
Finally, the quadratic speedup is independent of whether the system is efficiently trainable or not, because $\varepsilon_{\textrm{additive}}^2 = \varepsilon^2 \var_{\textrm{circ}}$ -- the term that indicates the presence or absence of barren plateaus -- remains undetermined.

\section{Second moment of the loss function}
\label{appendix:second_moment}
We now proceed to compute the second moment of the loss function. We focus on the case of photon number polynomials given by
\begin{align}
\begin{split}
    \hat{C} = \sum_{\bm\alpha \in A} c_{\bm\alpha} \hat{M}_{\bm{\alpha}}
\end{split}
\end{align}
The second moment for an $n$-photon initial state $\rho$ is then defined as 
\begin{align}
\begin{split}
    \mu_2(\hat{C})= \ex_{U} \left[ \tr[\hat{C} U \rho U^\dag] \right]
\end{split}
\end{align}
In Ref.~\cite{mhiri2026bosonsamplingdiluteregime}, the authors derived the second moment of particle-number-preserving observable expectation values for input states with a definite particular number. Their expression is in terms of irreducible representations of the unitary group $\mathrm{U}(M)$ indexed by $k$ for $k\in\{0,1,\dots,n\}$. In particular, the second moment of some expectation value reads
\begin{align}
\begin{split}
    \mu_2(\hat{C}) = \sum_{k=0}^n \frac{1}{d^{(n)}_k} ||\rho_k^{(n)} ||_2^2\, ||\hat{C}_k^{(n)} ||_2^2
\end{split}
\end{align}
where $\hat{X}_k^{(n)}$ is the operator $\hat{X}$ projected onto the $k^{\textrm{th}}$ irreducible representation, $d_k^{(n)}$ is the associated dimension, and $||\bullet||_2^2 = \tr[\bullet^\dag \bullet]$ is the Hilbert-Schmidt norm. The projections onto each irreducible representation can be computed using a lowering map~\cite{mhiri2026bosonsamplingdiluteregime}
\begin{align}
\begin{split}
    L(\bullet) = \sum_{j=1}^M \hat{a}_j \bullet \hat{a}_j^\dag
\end{split}
\end{align}
In particular, Ref.~\cite{mhiri2026bosonsamplingdiluteregime} proved that
\begin{align}
\begin{split}
    ||\hat{X}_k^{(n)}||_2^2 &= \sum_{\ell = 0}^k (-1)^{k-\ell} \frac{2k+M-1}{k+\ell+M-1} \frac{{n-\ell\choose k-\ell}}{{n+k+M-1\choose n-\ell}} \frac{g_{\ell}(\hat{X})}{(n-\ell)!^2}
\end{split} 
\end{align}
and 
\begin{align}
\begin{split}
    g_{\ell}(\hat{X}) &= \tr[\big(L^{n-\ell}(\hat{X})\big)^2]
\end{split}
\end{align}
The problem is thus reduced to computing $g_{\ell}(\hat{X})$ for $\hat{X}\in\{\rho, \hat{O}\}$.

\subsection{Fock state projected norm}
From Ref.~\cite{mhiri2026bosonsamplingdiluteregime}, for an initial Fock state $\ket{\bm{n}} = \ket{n_1, n_2, \dots, n_M}$ we have
\begin{align}
\begin{split}
    g_\ell(\ketbra{\bm{n}}) &= (n-\ell)!^2 \sum_{\bm{a} \in \Phi_M^{n-\ell}} \left(\prod_{i = 1}^M {n_i \choose a_i}^2 \right)
\end{split}
\end{align}
where ${n_i \choose a_i}$ enforces $a_i \le n_i$. This can be computed using generating the functions
\begin{align}
\begin{split}
    \mathcal{A}(x;n_i) = \sum_{a_i=0}^\infty {n_i \choose a_i}^2 x^{a_i}.
\end{split}
\end{align}
from which it follows that
\begin{align}
\begin{split}
    \sum_{\bm{a}\in\Phi_{M^{n-\ell}}} \left(\prod_{i=1}^M {n_i\choose a_i}^2\right) &= [x^{n-\ell}] \left(\prod_{i=1}^M A(x; n_i)\right).
\end{split}
\end{align}
The coefficient can be efficiently extracted using discrete convolution of sequences $\mathcal{A}_i^{n_i}$
\begin{align}
\begin{split}
    \mathcal{A}_i^{n_i} &= \big( {n_i \choose a}^2 \big)_{a=0}^n
\end{split}
\end{align}
and hence
\begin{align}
\begin{split}
    g_{\ell}(\ketbra{\bm{n}}) = (n-\ell)!^2 \big( \mathcal{A}_1^{n_1} * \mathcal{A}_2^{n_2} * \dots * \mathcal{A}_M^{n_M} \big)(n-\ell).
\end{split}
\end{align}

\subsection{Entangled state projected norm}
\label{appendix:entangled_second_moment}

We now consider equally weighted superpositions over all basis states.
\begin{align}
\begin{split}
    \ket{\psi} &= \sum_{\bm{n} \in \Phi_M^n} \frac{\ket{\bm{n}}}{\sqrt{n+M-1\choose n}}.
\end{split}
\end{align}
While not physically motivated, these are the maximally dispersed pure state containing a definite $n$ photons so we expect quantitatively different behaviour than for pure Fock states. 
Furthermore, these highly correlated states lie beyond domain of applicability of the Lim-Oh algorithm due to the large-scale global correlations between modes. 

Applying the lowering operator $p$ times to $\rho = \ketbra{\psi}{\psi}$ gives
\begin{align}
\begin{split}
    L^p(\rho) &= \frac{1}{\binom{n+M-1}{n}} \sum_{\bm{n},\bm{m}\in\Phi_M^n} \sum_{\substack{\bm{c} \in \Phi_M^p \\ \bm{c}\le\min(\bm{n},\bm{m})}} \binom{p}{c_1,\dots,c_M} \prod_{i=1}^M a_i^{c_i}\ket{n_i}\bra{m_i}(a_i^\dagger)^{c_i}.
\end{split}
\end{align}
Just as in Ref.~\cite{mhiri2026bosonsamplingdiluteregime}. The action of the creation/annihilation operators together with the multinomial coefficient simplifies to
\begin{align}
\begin{split}
    \binom{p}{c_1,\dots,c_M}\prod_{i=1}^M \sqrt{\frac{n_i!}{(n_i-c_i)!}}\sqrt{\frac{m_i!}{(m_i-c_i)!}} = p!\prod_{i=1}^M \sqrt{\binom{n_i}{c_i}\binom{m_i}{c_i}},
\end{split}
\end{align}
so that, 
\begin{align}
\begin{split}
    L^p(\rho) &= \frac{p!}{\binom{n+M-1}{n}} \sum_{\bm{n},\bm{m}\in\Phi_M^n} \sum_{\bm{c} \in \Phi_M^p } \left(\prod_{i=1}^M \sqrt{\binom{n_i}{c_i}\binom{m_i}{c_i}}\right) \ketbra{\bm{n}-\bm{c}}{\bm{m}-\bm{c}}.
    \label{eq:Lp-rho-final-ent}
\end{split}
\end{align}
where $c_i \le \min[n_i, m_i]$ is enforced by the binomial coefficient ${x\choose y}$ being zero if $y > x$.

Taking the Hilbert--Schmidt norm of Eq.~\eqref{eq:Lp-rho-final-ent} requires two independent copies of $(\bm{n},\bm{m},\bm{c})$,
\begin{align}
\begin{split}
    \tr\big[L^p(\rho)^2\big] &= \left(\frac{p!}{\binom{n+M-1}{n}}\right)^2 \sum_{\bm{n}_1,\bm{n}_2,\bm{m}_1,\bm{m}_2,\bm{c}_1,\bm{c}_2} W_1 W_2 \, \delta_{\bm{m}_1-\bm{c}_1,\,\bm{n}_2-\bm{c}_2}\, \delta_{\bm{m}_2-\bm{c}_2,\,\bm{n}_1-\bm{c}_1},
\end{split}
\end{align}
where $W_1 = \prod_i \sqrt{\binom{n_{1,i}}{c_{1,i}}\binom{m_{1,i}}{c_{1,i}}}$. Resolving the deltas leaves the free indices $\bm{n}_1,\bm{m}_1,\bm{c}_1,\bm{c}_2$,
\begin{align}
\begin{split}
    \tr\big[L^p(\rho)^2\big] &= \left(\frac{p!}{\binom{n+M-1}{n}}\right)^2 \sum_{\bm{n}_1,\bm{m}_1,\bm{c}_1,\bm{c}_2} W_1 W_2',
\end{split}
\end{align}
with $W_2' = \prod_j \sqrt{\binom{(\bm{n}_1-\bm{c}_1+\bm{c}_2)_j}{c_{2,j}}\binom{(\bm{m}_1-\bm{c}_1+\bm{c}_2)_j}{c_{2,j}}}$.
As before, the sum factorises over modes and can be evaluated with a generating function. Defining
\begin{align}
\begin{split}
    \mathcal{T}_i(x,y,u,v) &= \sum_{n_{1,i},m_{1,i},c_{1,i},c_{2,i}\ge 0} 
    \sqrt{\binom{n_{1,i}}{c_{1,i}}\binom{m_{1,i}}{c_{1,i}}\binom{n_{1,i}-c_{1,i}+c_{2,i}}{c_{2,i}}\binom{m_{1,i}-c_{1,i}+c_{2,i}}{c_{2,i}}} \\
    &\qquad \times \, x^{n_{1,i}}y^{m_{1,i}}u^{c_{1,i}}v^{c_{2,i}},
    \label{gen-func}
\end{split}
\end{align}
and $\mathcal{T}(x,y,u,v) = \prod_{i=1}^M \mathcal{T}_i(x,y,u,v)$, the constraints $\bm{n}_1,\bm{m}_1\in\Phi_M^n$ and $\bm{c}_1, \bm{c}_2 \in \Phi_M^p$ are recovered by extracting $[x^n y^n u^p v^p]\mathcal{T}(x,y,u,v)$, equivalently by the convolution
\begin{align}
\begin{split}
    \big(\mathcal{T}_1 * \dots * \mathcal{T}_M\big)(n,n,p,p).
\end{split}
\end{align}
Replacing $p$ with $n-\ell$ gives
\begin{align}
\begin{split}
    g_{\ell}(\ketbra{\psi}{\psi}) &= \left(\frac{(n-\ell)!}{\binom{n+M-1}{n}}\right)^2 \big(\mathcal{T}_1 * \dots * \mathcal{T}_M\big)(n,n,n-\ell,n-\ell).
    \label{gell_ent}
\end{split}
\end{align}
While $\bm{c}_1\in\Phi_M^p$ is enforced automatically by the binomial coefficients in Eq.~\eqref{gen-func}, the same is not true for $\bm{c}_2$, whose total is fixed only through the coefficient extraction itself.

\subsection{Photon number polynomial projected norm}
\label{appendix:pnp_second_moment}
To compute $g_\ell(\hat{C}) = \tr[L^{n-\ell}(\hat{C})^2]$ for $\hat{C} = \sum_{\bm\alpha \in A} c_{\bm\alpha} \hat{M}_{\bm\alpha}$, we can use the fact that
\begin{align}
\begin{split}
    L^{n-\ell}(\ketbra{\bm{n}}) &= (n-\ell)! \sum_{\bm{a}\in\Phi_M^\ell} \left(\prod_{i=1}^M {n_i \choose a_i}\right) \ketbra{\bm{a}}
\end{split}
\end{align}
from Ref.~\cite{mhiri2026bosonsamplingdiluteregime}. The lowering operator on the PNP then gives
\begin{align}
\begin{split}
    L^{n-\ell}(\hat{C}) &= \sum_{\bm\alpha\in A} c_{\bm\alpha} \, L^{n-\ell}\big(\hat{M}_{\bm{\alpha}}\big) \\
    &= \sum_{\bm\alpha\in A} \sum_{\bm{n}\in\Phi_M^n} c_{\bm\alpha} \left(\prod_{i=1}^M n_i^{\alpha_i} \right) L^{n-\ell}(\ketbra{\bm{n}}) \\
\end{split}
\end{align}
from which we can derive $g_{\ell}(\hat{C}) = \tr[L^{n-\ell}(\hat{C})^2]$,
\begin{align}
\begin{split}
    g_\ell(\hat{C}) &= (n-\ell)!^2 \sum_{\bm\alpha, \beta \in A} c_{\bm\alpha} c_{\beta} \sum_{\bm{n}, \bm{m} \in \Phi_M^n} \sum_{\bm{a}, \bm{b} \in \Phi_M^\ell} \left(\prod_{i=1}^M n_i^{\alpha_i} m_i^{\beta_i} {n_i \choose a_i} {m_i \choose b_i} \right) \big|\braket{\bm{a}}{\bm{b}}\big|^2 \\
    &= (n-\ell)!^2 \sum_{\bm\alpha, \beta \in A} c_{\bm\alpha} c_{\beta} \sum_{\bm{n}, \bm{m} \in \Phi_M^n} \sum_{\bm{a}\in \Phi_M^\ell} \left(\prod_{i=1}^M n_i^{\alpha_i} {n_i \choose a_i}\right) \left(\prod_{j=1}^M m_i^{\beta_i} {m_i \choose a_i}\right).
\end{split}
\end{align}
The sums over $\bm{n}$ and $\bm{m}$ can be computed using generating functions. Let us define 
\begin{align}
\begin{split}
    \mathcal{G}(x; \alpha_i, \bm{a}) = \sum_{n=0}^\infty n^{\alpha_i} {n_i \choose a_i} x^{n_i}.    
\end{split}
\end{align}
Then it follows that
\begin{align}
\begin{split}
    \sum_{\bm{n} \in \Phi_M^n} \left(\prod_{i=1}^M n_i^{\alpha_i} {n_i \choose a_i}\right) &= [x^n] \prod_{i=1}^M \mathcal{G}(x; \alpha_i, a_i)
\end{split}
\end{align}
and hence
\begin{align}
\begin{split}
    \sum_{\bm{a}\in\Phi_M^\ell} \sum_{\bm{n},\bm{m}\in\Phi_M^n} \left(\prod_{i=1}^M n_i^{\alpha_i} m_i^{\beta_i} {n_i \choose a_i} {m_i \choose a_i} \right) \\
    = [x^n y^n] \sum_{\bm{a} \in \Phi_M^\ell} \left(\prod_{i=1}^M \mathcal{G}(x; \alpha_i, a_i) \mathcal{G}(y;\beta_i, a_i)\right).
\end{split}
\end{align}
Applying a second generating function
\begin{align}
\begin{split}
    \mathcal{F}(x, y, z; \alpha_i, \beta_i) &= \sum_{a_i=0}^\infty \mathcal{G}(x; \alpha_i, a_i) \, \mathcal{G}(y; \beta_i, a_i) \, z^{a_i}
\end{split}
\end{align}
to the sum over $\bm{a}$ allows us to write
\begin{align}
\begin{split}
    \sum_{\bm{a}\in\Phi_M^\ell} \sum_{\bm{n}, \bm{m} \in \Phi_M^n} \left( \prod_{i=1}^M n_i^{\alpha_i} m_i^{\beta_i} {n_i \choose a_i} {m_i \choose a_i} \right) &= [x^n y^n z^\ell] \left( \prod_{i=1}^M \mathcal{F}(x, y, z; \alpha_i, \beta_i)\right).
\end{split}
\end{align}
We can therefore write $g_{\ell}(\hat{C})$ as
\begin{align}
\begin{split}
    g_{\ell}(\hat{C}) = (n-\ell)!^2 \sum_{\bm\alpha, \beta \in A} c_{\bm\alpha} c_{\bm\beta} \, [x^n y^n z^\ell] \left(\prod_{i=1} \mathcal{F}(x, y, z; \alpha_i, \beta_i) \right)
\end{split}
\end{align}
The coefficient extraction can be written as the discrete convolutions of the trivariate sequences
\begin{align}
\begin{split}
    \mathcal{F}^{\alpha_i,\beta_i}_i = \big( p^{\alpha_i} q^{\beta_i} {p \choose r} {q\choose r} \big)_{p,q,r=0}^n
\end{split}
\end{align}
associated with each generating function $\mathcal{F}(x, y, z; \alpha_i, \beta_i)$ via
\begin{align}
\begin{split}
    [x^n y^n z^\ell] \left(\prod_{i=1}^M \mathcal{F}(x,y,z;\alpha_i,\beta_i)\right) = \big(\mathcal{F}_1^{\alpha_1,\beta_1} * \mathcal{F}_2^{\alpha_2,\beta_2} * \dots * \mathcal{F}_M^{\alpha_M,\beta_M}\big)(n,n,\ell).
\end{split}
\end{align}
For PNPs, we can therefore efficiently calculate $g_{\ell}$ with
\begin{align}
\begin{split}
    g_{\ell}(\hat{C}) = (n-\ell)!^2 \sum_{\bm\alpha, \bm\beta \in A} c_{\bm\alpha} c_{\bm\beta} \, \big(\mathcal{F}_1^{\alpha_1, \beta_1} * \mathcal{F}_2^{\alpha_2,\beta_2} * \dots * \mathcal{F}_M^{\alpha_M,\beta_M} \big) (n, n, \ell).
\end{split}
\end{align}

\section{Gradient estimation with parameter-shift rules}

\label{app:parameter_shift}
 
The gradient-based training considered in this work requires estimating derivatives of the loss with respect to individual circuit parameters. In linear optics with Fock input states this can be done exactly through a generalised parameter-shift rule (GPSR) \cite{facelli2024exactgradients, trudeau2026parameter, pappalardo2025photonic, hoch2025variational}, which we briefly outline here. By considering Fourier decompositions of the loss with respect to a single phase-shifter parameter, the first derivative of the loss with respect to a parameter $\theta_k$ can be written as,
\begin{align}
\label{eq:app_gpsr}
\frac{\partial \ell(\bm{\theta})}{\partial \theta_k}
     = \sum_{j=1}^{2R} \ell(\bm{\theta} + \kappa_j\,\bm{e}_k)\,\frac{(-1)^{j+1}}{4R\sin^2(\kappa_j/2)}, \qquad \kappa_j = \frac{(2j-1)\pi}{2R}.
\end{align}
The gradient of the loss with respect to any circuit parameter is therefore an exact, fixed linear combination of $2R$ loss values at shifted parameter settings. For arbitrary observables, this holds where $R = n$: the number of photons in the circuit. Unlike finite-difference approximations, replacing each loss value in \eqref{eq:app_gpsr} by its empirical estimate $\bar{\ell}_N$ yields an unbiased estimator of the gradient. The same holds for beam splitter parameters, since a tunable beam splitter decomposes into fixed beam splitters and a tunable phase \cite{facelli2024exactgradients}.

For the photon number observables considered in this work, the number of shifted loss values needed to compute the gradient can be reduced to the maximum order of the observable. For loss governed by a photon number polynomial, it is sufficient to take $R = |\bm{\alpha}|$ in \eqref{eq:app_gpsr}, where $|\bm{\alpha}|$ is the greatest order of a photon number monomial appearing in the PNP. For observables whose degree is fixed independently of system size, as for the fixed-order PNPs studied in this work, the number of loss evaluations per gradient component is therefore constant.
 
Equation~\eqref{eq:app_gpsr} reduces gradient estimation to cost function estimation, which is why the analysis of this work centres on the latter. Following the statistical analysis of ~\cite{facelli2024exactgradients}, replacing each loss value in \eqref{eq:app_gpsr} by its empirical estimate from $N$ samples yields an unbiased gradient estimator $\overline{\partial_{\theta_k}\ell}(\bm{\theta})$. Assuming that average sample variance is equal at shifted loss values (necessarily true for uniform periodic parameters), the average sample variance of the gradient estimate is given by:
\begin{align}
\label{eq:app_gpsr_variance_average}
    \ex_{\bm{\theta}}\Big[\var\big[\overline{\partial_{\theta_k}\ell}(\bm{\theta})\big]\Big] = \frac{2R^2+1}{6} \, \frac{\mathrm{Var}_{\mathrm{samp}}(\hat{O})}{N}.
\end{align}
where we have used $\sum_{j=1}^{2R} \big(4R\sin^2(\kappa_j/2)\big)^{-2} = (2R^2+1)/6$ (we derive this following appendix A4 of \cite{wierichs2022general}).
 
 Equation~\eqref{eq:app_gpsr_variance_average} states that the sample variance controls the error of gradient estimates exactly as it does for the loss, up to a factor depending on the number of shifts. This factor scales at worst polynomially as $n^2$ and, for the case of fixed-order photon number observables, scales as a constant. By Markov's inequality the same holds at typical parameter settings, exactly as in the proof of theorem~\ref{thm:variance_ratio_sufficiency}. Appendix \ref{appendix:bp_equivalence} shows the corresponding statement for circuit variance, that the circuit variance of gradients scales with the loss variance $\text{Var}_{\text{circ}}$. Consequently, the sample complexity for estimating gradients to within proportional error inherits, up to polynomial prefactors, the $\mathrm{O}\big(\mathrm{Var}_{\mathrm{samp}}(\hat{O})/\mathrm{Var}_{\mathrm{circ}}(\hat{O})\big)$ scaling of theorem \ref{thm:variance_ratio_sufficiency}, and the sample and circuit variances remain the main quantities governing the complexity of gradient estimation.

\section{Output binning observable variance}
\label{appendix:output_probability}
In this section, we compute the Haar-average moments and variances of observables built from output probabilities: single-outcome projectors, and coarse-grained sums of projectors over a bin of outcomes.

\subsection{Single-outcome projectors}
\label{appendix:single_outcome_variance}
We consider an input state of fixed photon number $n$ over $M$ modes, and the projector observable $\ketbra{\mathbf{s}}{\mathbf{s}}$ onto an output pattern $\mathbf{s}$. We calculate the required moments:
\begin{align}
    \mu_1(\ketbra{\mathbf{s}}{\mathbf{s}}) &= \ex_{\bm{\theta}}\left[\mathbb{P}(\mathbf{s}|\bm{\theta}) \right] = |\mathcal{H}|^{-1}, \\
    \mu_1(\ketbra{\mathbf{s}}{\mathbf{s}}^2) &= \mu_1(\ketbra{\mathbf{s}}{\mathbf{s}}) = |\mathcal{H}|^{-1}, \\
    \mu_2(\ketbra{\mathbf{s}}{\mathbf{s}}) &= P_2(M,n)\mu_1(\ketbra{\mathbf{s}}{\mathbf{s}}) = P_2(M,n)|\mathcal{H}|^{-1},
\end{align}
where the first result comes from all outcomes being equally likely in the Haar-average regime, the second from projectors being idempotent, and $P_2(M,n)$ is the normalised average collision probability \cite{mhiri2026bosonsamplingdiluteregime}. Taking differences between the terms, we get the variances,
\begin{align}
    \text{Var}_{\text{circ}}(\ketbra{\mathbf{s}}{\mathbf{s}}) &= (P_2(M,n) - 1)|\mathcal{H}|^{-2}, \\
    \text{Var}_{\text{samp}}(\ketbra{\mathbf{s}}{\mathbf{s}}) &= |\mathcal{H}|^{-1} \left(1 - P_2(M,n)|\mathcal{H}|^{-1}\right).
\end{align}

\subsection{Coarse-grained binning}
\label{appendix:binning}
To calculate the average circuit variance of output binned observables, we will consider uniformly at random selecting output sets $S$ and average variance over them. We consider a fixed input state of $n$ photons across $M$ modes. The circuit variance of $\hat{O}_S$ decomposes as
\begin{equation}
    \text{Var}_{\text{circ}}(\hat{O}_S) = \sum_{\mathbf{s} \in S}\text{Var}_\mu\big(\mathbb{P}(\mathbf{s}|\theta)\big) + \sum_{\mathbf{s} \neq \mathbf{t} \in S}\text{Cov}_\mu\big(\mathbb{P}(\mathbf{s}|\theta), \mathbb{P}(\mathbf{t}|\theta)\big).
    \label{eq:varcirc_decomp}
\end{equation}

Setting $S = \mathcal{H}$ gives $\hat{O}_{\mathcal{H}} = 1$ fixed, hence $\text{Var}_{\text{circ}}(\hat{O}_{\mathcal{H}}) = 0$, and Eq.~\eqref{eq:varcirc_decomp} gives the identity
\begin{equation}
    \sum_{\mathbf{s} \neq \mathbf{t} \in \mathcal{H}}\text{Cov}_\mu\big(\mathbb{P}(\mathbf{s}|\theta), \mathbb{P}(\mathbf{t}|\theta)\big) = -\sum_{\mathbf{s} \in \mathcal{H}}\text{Var}_\mu\big(\mathbb{P}(\mathbf{s}|\theta)\big).
    \label{eq:app_cov}
\end{equation}
This equation gives us a relation between the average size of covariance between two output probabilities, and the variance of a single output probability. No statistical equivalence between outcomes is assumed, the moments of $\mathbb{P}(\mathbf{s}|\theta)$ will depend on the collision pattern of $\mathbf{s}$, but only the average relation will be required.

Let $S$ be drawn uniformly over subsets of $\mathcal{H}$ of size $|S| = |\mathcal{H}|/K$. The uniform measure over $S$ and the Haar measure are independent, and we will consider finite sums with bounded summands, so expectations may be exchanged freely by Fubini's theorem. Each outcome $\mathbf{s}$ belongs to $S$ with probability $|S|/|\mathcal{H}|$, and each ordered pair $\mathbf{s} \neq \mathbf{t}$ with probability $|S|(|S|-1)/(|\mathcal{H}|(|\mathcal{H}|-1))$, so averaging Eq.~\eqref{eq:varcirc_decomp} over $S$ and applying Eq.~\eqref{eq:app_cov} gives
\begin{align}
\begin{split}
    \mathbb{E}_{S}\left[\text{Var}_{\text{circ}}(\hat{O}_S)\right] &= \frac{|S|}{|\mathcal{H}|}\sum_{\mathbf{s}}\text{Var}_\mu\big(\mathbb{P}(\mathbf{s}|\theta)\big) + \frac{|S|(|S|-1)}{|\mathcal{H}|(|\mathcal{H}|-1)}\sum_{\mathbf{s}\neq\mathbf{t}}\text{Cov}_\mu\big(\mathbb{P}(\mathbf{s}|\theta),\mathbb{P}(\mathbf{t}|\theta)\big) \\
    &= \frac{|S|}{|\mathcal{H}|}\left(1 - \frac{|S|-1}{|\mathcal{H}|-1}\right)\sum_{\mathbf{s}}\text{Var}_\mu\big(\mathbb{P}(\mathbf{s}|\theta)\big).
    \label{eq:app_after_cov}
\end{split}
\end{align}
The summed variance is $|\mathcal{H}|$ times the outcome-averaged circuit variance of a single-outcome projector;
\begin{equation}
    \sum_{\mathbf{s} \in \mathcal{H}}\text{Var}_\mu\big(\mathbb{P}(\mathbf{s}|\theta)\big) = |\mathcal{H}| \cdot \text{Var}_{\text{circ}}(\ketbra{\mathbf{s}}{\mathbf{s}}) = \frac{P_2(M,n) - 1}{|\mathcal{H}|}.
\end{equation}
Substituting into Eq.~\eqref{eq:app_after_cov} with $|S| = |\mathcal{H}|/K$ and $|\mathcal{H}| \gg 1$ gives the result listed in Eq.~\eqref{eq:course_grained_results_main}, namely
\begin{equation}
    \mathbb{E}_{S}\left[\text{Var}_{\text{circ}}(\hat{O}_S)\right] \approx \frac{1}{K}\left(1 - \frac{1}{K}\right)\frac{P_2(M,n) - 1}{|\mathcal{H}|}.
\end{equation}

\end{document}